\newtheorem{theo}{Theorem}[section]
\newtheorem{prop}[theo]{Proposition}
\newtheorem{coro}[theo]{Corollary}
\newtheorem{lemm}[theo]{Lemma}
\theoremstyle{definition}
\newtheorem{def1}[theo]{Definition}
\theoremstyle{remark}
\newtheorem*{rema}{Remark}
\newcommand{\Op}{\text{Op}}
\title{Entropy of semiclassical measures in dimension 2}
\author[G. Rivi\`ere]{Gabriel Rivi\`ere}
\address{Centre de Math\'ematiques Laurent Schwartz, \'Ecole Polytechnique, 91128 Palaiseau Cedex, France}
\email{gabriel.riviere@math.polytechnique.fr}
\begin{document}

\maketitle

\begin{abstract}
We study the asymptotic properties of eigenfunctions of the Laplacian in the case of a compact Riemannian surface of Anosov type. We show that the Kolmogorov-Sinai entropy of a semiclassical measure $\mu$ for the geodesic flow $g^t$ is bounded from below by half of the Ruelle upper bound, i.e.
$$h_{KS}(\mu,g)\geq \frac{1}{2}\int_{S^*M} \chi^+(\rho) d\mu(\rho).$$
\end{abstract}

\section{Introduction}

In quantum mechanics, the semiclassical principle asserts that in the high energy limit, one should observe classical phenomena. Our main concern will be the study of this property when the classical system is said to be chaotic.\\
Let $M$ be a compact $\mathcal{C}^{\infty}$ Riemannian surface. For all $x\in M$, $T^*_xM$ is endowed with a norm $\|.\|_x$ given by the metric over $M$. The geodesic flow $g^t$ over $T^*M$ is defined as the Hamiltonian flow corresponding to the Hamiltonian $H(x,\xi):=\frac{\|\xi\|_x^2}{2}$. This last quantity corresponds to the classical kinetic energy in the case of the absence of potential. As any observable, this quantity can be quantized via pseudodifferential calculus and the quantum operator corresponding to $H$ is $-\frac{\hbar^2\Delta}{2}$ where $\hbar$ is proportional to the Planck constant and $\Delta$ is the Laplace Beltrami operator acting on $L^2(M)$.\\
Our main result concerns the influence of the classical Hamiltonian behavior on the spectral asymptotic properties of $\Delta$. More precisely, our main interest is the study of the measure $|\psi_{\hbar}(x)|^2dx$ where $\psi_{\hbar}$ is an eigenfunction of $-\frac{\hbar^2\Delta}{2}$ associated to the eigenvalue $\frac{1}{2}$, i.e.
$$-\hbar^2\Delta\psi_{\hbar}=\psi_{\hbar}.$$
This is equivalent to the study of large eigenvalues of $\Delta$. As $M$ is a compact Riemannian manifold, the family $-\hbar^{-2}$ forms a discrete subsequence that tends to infinity. One natural question is to study the (weak) limits of the probability measure $|\psi_{\hbar}(x)|^2dx$ as $\hbar$ tends to $0$. This means studying the asymptotic behavior of the probability to find a particle in $x$ when the system is in the state~$\psi_{\hbar}$. In order to study the influence of the Hamiltonian flow, we first need to lift this measure to the cotangent bundle. This can be achieved thanks to pseudodifferential calculus. In fact there exists a procedure of quantization that gives us an operator $\Op_{\hbar}(a)$ on the phase space $L^2(M)$ for any observable $a(x,\xi)$ in a certain class of symbols. Then a natural way to lift the previous measure is to define the following quantity:
$$\mu_{\hbar}(a)=\int_{T^*M}a(x,\xi)d\mu_{\hbar}(x,\xi):=\langle \psi_{\hbar},\Op_{\hbar}(a)\psi_{\hbar}\rangle_{L^2(M)}.$$
This formula gives a distribution $\mu_{\hbar}$ on the space $T^{*}M$ and describes now the distribution in position and velocity.\\
Let $(\psi_{\hbar_k})$ be a sequence of orthonormal eigenfunctions of the Laplacian corresponding to the eigenvalues $-\hbar_k^{-2}$ such that the corresponding sequence of distributions $\mu_k$ on $T^{*}M$ converges as $k$ tends to infinity to a limit $\mu$. Such a limit is called a semiclassical measure. Using standard facts of pseudodifferential calculus, it can be shown that $\mu$ is a probability measure that does not depend on the choice of the quantization $\Op_{\hbar}$ and that is carried on the unit energy layer
$$S^*M:=\left\{(x,\xi):H(x,\xi)=\frac{1}{2}\right\}.$$
Moreover, another result from semiclassical analysis, known as the Egorov property, states that for any fixed $t$,
\begin{equation}\label{Egorov}
\forall a\in\mathcal{C}^{\infty}_c(T^*M),\ U^{-t}\Op_{\hbar}(a)U^{t}=\Op_{\hbar}(a\circ g^t)+\mathcal{O}_t(\hbar),
\end{equation}
where $U^t$ denotes the quantum propagator $e^{\frac{\imath t\hbar\Delta}{2}}$. Precisely, it says that for fixed times, the quantum evolution is related to the classical evolution under the geodesic flow. From this, it can be deduced that $\mu$ is invariant under the geodesic flow. One natural question to ask is what measures supported on $S^{*}M$ are in fact semiclassical measures. The corresponding question in quantum chaos is: when the classical behavior is said to be chaotic, what is the set of semiclassical measures? A first result in this direction has been found by Shnirelman~\cite{Sc}, Zelditch~\cite{Ze}, Colin de Verdi\`ere~\cite{CdV}:
\begin{theo}
Let $(\psi_k)$ be an orthonormal basis of $L^2(M)$ composed of eigenfunctions of the Laplacian. Moreover, suppose the geodesic flow on $S^*M$ is ergodic with respect to Liouville measure. Then, there exists a subsequence $(\mu_{k_p})_p$ of density one that converges to the Liouville measure on $S^*M$ as $p$ tends to infinity.
\end{theo}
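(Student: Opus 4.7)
The plan is to prove the theorem by controlling the quantum variance
$$V(\hbar,a):=\frac{1}{N(\hbar)}\sum_{\hbar_k\geq \hbar}|\mu_{k}(a)-\bar a|^{2},$$
where $\bar a:=\int_{S^*M}a\,dL$ is the Liouville average and $N(\hbar)$ is the number of eigenvalues of $-\hbar^{2}\Delta$ in some fixed window around $1$. Replacing $a$ by $a-\bar a$, I may and will assume $\bar a=0$, and the goal becomes $V(\hbar,a)\to 0$ as $\hbar\to 0$ for every $a\in\mathcal{C}^{\infty}_{c}(T^{*}M)$.

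The first step is to use the invariance of the $\psi_{k}$ under the quantum propagator $U^{t}$ to replace $\Op_{\hbar}(a)$ by its symmetrized time average. For each fixed $T>0$, set $a_{T}:=\frac{1}{T}\int_{0}^{T}a\circ g^{t}\,dt$. Since $U^{t}\psi_{k}=e^{\imath t/(2\hbar_{k})}\psi_{k}$, we have $\mu_{k}(a)=\langle\psi_{k},\frac{1}{T}\int_{0}^{T}U^{-t}\Op_{\hbar_{k}}(a)U^{t}dt\,\psi_{k}\rangle$, so by the Egorov property~\eqref{Egorov}
$$\mu_{k}(a)=\langle\psi_{k},\Op_{\hbar_{k}}(a_{T})\psi_{k}\rangle+\mathcal{O}_{T}(\hbar_{k}).$$
Squaring and summing, the error is $\mathcal{O}_{T}(\hbar^{2})$ after dividing by $N(\hbar)$, so it suffices to control $V_{T}(\hbar,a):=\frac{1}{N(\hbar)}\sum_{\hbar_{k}\geq \hbar}|\langle\psi_{k},\Op_{\hbar_{k}}(a_{T})\psi_{k}\rangle|^{2}$.

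The second step is a Hilbert--Schmidt bound combined with the semiclassical Weyl law. By Cauchy--Schwarz and the fact that $(\psi_{k})$ is orthonormal,
$$V_{T}(\hbar,a)\leq \frac{1}{N(\hbar)}\operatorname{Tr}\bigl(\chi(-\hbar^{2}\Delta)\Op_{\hbar}(a_{T})^{*}\Op_{\hbar}(a_{T})\chi(-\hbar^{2}\Delta)\bigr)$$
for a suitable cutoff $\chi$ localizing near the energy $1$. The standard semiclassical trace formula (local Weyl law) then gives, in the limit $\hbar\to 0$, the bound
$$\limsup_{\hbar\to 0}V_{T}(\hbar,a)\leq \int_{S^*M}|a_{T}|^{2}\,dL.$$
Now I invoke the ergodicity of $g^{t}$ for $L$: von Neumann's mean ergodic theorem implies that $a_{T}\to \bar a=0$ in $L^{2}(S^*M,dL)$ as $T\to\infty$. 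Letting first $\hbar\to 0$ and then $T\to\infty$ yields $\lim_{\hbar\to 0}V(\hbar,a)=0$.

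The final step extracts the density-one subsequence. Fix a countable dense family $(a_{n})_{n\geq 1}$ in $\mathcal{C}^{0}_{c}(T^*M)$ and apply Chebyshev's inequality to each $a_{n}$: for every $\varepsilon>0$ the set of indices $k$ with $|\mu_{k}(a_{n})-\bar a_{n}|>\varepsilon$ has density zero. A standard diagonal extraction then produces a single subset $\Lambda\subset\mathbb{N}$ of density one such that $\mu_{k}(a_{n})\to\bar a_{n}$ along $\Lambda$ for every $n$, hence $\mu_{k}\to L$ weakly along $\Lambda$. The main technical delicacy is the non-commutativity of the limits in $\hbar$ and $T$: Egorov only provides uniform control for fixed $T$, so one really has to take $\hbar\to 0$ first and only then let $T\to\infty$. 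This forces the use of a semiclassical trace bound strong enough to survive after division by $N(\hbar)\sim C\hbar^{-2}$, which is where the local Weyl law is essential.
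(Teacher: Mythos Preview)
The paper does not give a proof of this statement: it is quoted in the introduction as a known result of Shnirelman~\cite{Sc}, Zelditch~\cite{Ze} and Colin de Verdi\`ere~\cite{CdV}, and serves only as background for the paper's main theorem on entropy. There is therefore no ``paper's own proof'' to compare against.

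That said, your argument is the standard one from those references and is essentially correct: average in time via Egorov, bound the quantum variance by a trace using $|\langle\psi_k,A\psi_k\rangle|^2\le\langle\psi_k,A^*A\psi_k\rangle$, apply the local Weyl law to get $\limsup V_T\le\int_{S^*M}|a_T|^2\,dL$, then invoke the mean ergodic theorem and extract a density-one subsequence by Chebyshev plus diagonalization. One small inconsistency worth cleaning up: you define $N(\hbar)$ as the number of eigenvalues of $-\hbar^2\Delta$ in a fixed window around $1$, but then sum over all $\hbar_k\ge\hbar$; pick one convention (either the full counting function or a spectral window) and use it consistently, since the Weyl asymptotics you quote depend on which normalization you choose.
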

By 'density one', we mean that $\frac{1}{n}\sharp\{p:1\leq k_p\leq n\}$ tends to one as $n$ tends to infinity. This theorem states that, in the case of an ergodic geodesic flow, almost all eigenfunctions concentrate on the Liouville measure in the high energy limit. This phenomenon is called quantum ergodicity and has many extensions. The Quantum Unique Ergodicity Conjecture states that the set of semiclassical measures should be reduced to the Liouville measure in the case of Anosov geodesic flow~\cite{RS}. This question still remains widely open. In fact, in the case of negative curvature, there are many measures invariant under the geodesic flow: for example, there exists an infinity of closed geodesics (each of them carrying naturally an invariant measure). In recent papers, Lindenstrauss proved a particular form of the conjecture, the Arithmetic Quantum Unique Ergodicity~\cite{Li}. Precisely, he proved that for a sequence of Hecke eigenfunctions of the Laplacian on an arithmetic surface, $|\psi|^2dx$ converges to the Lebesgue measure on the surface. This result is actually the best-known positive result towards the conjecture.\\
In order to understand the phenomenon of quantum chaos, many people started to study toy models as the cat map (a typical hyperbolic automorphism of $\mathbb{T}^2$). These dynamical systems provide systems with similar dynamical properties to the geodesic flow on a manifold of negative curvature. Moreover, they can be quantized using Weyl formalism and the question of Quantum Ergodicity naturally arises. For example, Bouzouina and de Bi\`evre proved the Quantum Ergodicity property for the quantized cat map~\cite{BdB}. However, de Bi\`evre, Faure and Nonnenmacher proved that in this case, the Quantum Unique Ergodicity is too optimistic~\cite{FNdB}. In fact, they constructed a sequence of eigenfunctions that converges to $\frac{1}{2}(\delta_0+\text{Leb})$, where $\delta_0$ is the Dirac measure on $0$ and $\text{Leb}$ is the Lebesgue measure on $\mathbb{T}^2$. Faure and Nonnenmacher also proved that if we split the semiclassical measure into its pure point, Lebesgue and singular continuous components, $\mu=\mu_{\text{pp}}+\mu_{\text{Leb}}+\mu_{\text{sc}}$, then $\mu_{\text{pp}}(\mathbb{T}^2)\leq\mu_{\text{Leb}}(\mathbb{T}^2)$ and in particular $\mu_{\text{pp}}(\mathbb{T}^2)\leq 1/2$~\cite{FN}. As in the case of geodesic flow, there is an arithmetic point of view on this problem. Recently, Kelmer proved that in the case of $\mathbb{T}^{2d}$ ($d\geq 2$, for a generic family of symplectic matrices), either there exists isotropic submanifold invariant under the $2d$ cat map or one has Arithmetic Quantum Unique Ergodicity~\cite{Ke}. Moreover, in the first case, he showed that we can construct semiclassical measure equal to Lebesgue on the isotropic submanifold.

\subsection{Statement of the main result}

In recent papers~\cite{An},~\cite{AN2}, Anantharaman and Nonnenmacher got concerned with the study of the localization of eigenfunctions on $M$ as in the case of the toy models. They tried to understand it via the Kolmogorov-Sinai entropy. This paper is in the same spirit and our main result gives an information on the set of semiclassical measures in the case of a surface $M$ of Anosov type. More precisely, we give an information on the localization (or complexity) of a semiclassical measure:
\begin{theo}\label{maintheo} Let $M$ be a $\mathcal{C}^{\infty}$ Riemannian surface and $\mu$ a semiclassical measure. Suppose the geodesic flow $(g^t)_t$ has the Anosov property. Then,
\begin{equation}\label{mainineq}h_{KS}(\mu,g)\geq \frac{1}{2}\left|\int_{S^*M} \log J^u(\rho) d\mu(\rho)\right|,\end{equation}
where $J^u(\rho)$ is the unstable Jacobian at the point $\rho$.
\end{theo}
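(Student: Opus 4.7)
The plan is to adapt the entropic uncertainty strategy of Anantharaman and Nonnenmacher, and to exploit the 2-dimensional geometry (one unstable direction, one stable direction) in order to recover the full factor $\tfrac12$.

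First I would fix a small $\epsilon>0$ and a smooth partition of unity $(P_k)_{k=1}^K$ on $S^*M$ subordinate to a cover by balls of diameter $\epsilon$, and quantize it as $\pi_k:=\Op_\hbar(P_k)$. For a word $\alpha=\alpha_0\cdots\alpha_{n-1}$ of length $n$, I set
\[
\Pi_\alpha := \pi_{\alpha_{n-1}}(n-1)\cdots\pi_{\alpha_1}(1)\,\pi_{\alpha_0},\qquad \pi_k(t):=U^{-t}\pi_k U^t,
\]
and $p_\hbar(\alpha):=\|\Pi_\alpha\psi_\hbar\|^2$. The Egorov property (\ref{Egorov}) and the $g^t$-invariance of $\mu$ show that for each fixed $n$, $p_\hbar(\alpha)\to\mu\bigl(\bigcap_j g^{-j}P_{\alpha_j}\bigr)$, so the quantum Shannon entropy $H_n(\psi_\hbar):=-\sum_\alpha p_\hbar(\alpha)\log p_\hbar(\alpha)$ approximates the $n$-step classical entropy $h_n(\mu,P)$, whose rate converges to $h_{KS}(\mu,g)$ as $n\to\infty$ and $\epsilon\to 0$.

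The idea is to push $n$ up to (essentially) the Ehrenfest time $n_E\sim(1-\delta)|\log\hbar|/\lambda_{\max}$ and bound $H_{n_E}(\psi_\hbar)$ from below via the entropic uncertainty principle of Maassen--Uffink, in the form used by Anantharaman--Nonnenmacher: for two refined quantum partitions $(\Pi_\alpha)$, $(\widetilde\Pi_\beta)$ satisfying $\sum \Pi_\alpha^*\Pi_\alpha =\sum\widetilde\Pi_\beta^*\widetilde\Pi_\beta=I+o(1)$ and any unitary $V$,
\[
H_n^{\Pi}(\psi)+H_n^{\widetilde\Pi}(V\psi)\;\geq\;-2\log c(V),\qquad c(V):=\max_{\alpha,\beta}\bigl\|\widetilde\Pi_\beta V\Pi_\alpha^*\bigr\|.
\]
Applied with $V=U^{2n_E}$ and with $\widetilde\Pi$ a backward refined partition, this reduces the problem to the single dispersion estimate $\|\widetilde\Pi_\beta U^{2n_E}\Pi_\alpha^*\|$. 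The heart of the argument is to show
\[
\bigl\|\widetilde\Pi_\beta\,U^{2n_E}\,\Pi_\alpha^*\bigr\|\;\leq\;C\,\hbar^{-1/2}\exp\!\Bigl(-\tfrac12\bigl(J_n^u(\alpha)+J_n^u(\beta)\bigr)\Bigr),\qquad J_n^u(\alpha):=\sum_{j=0}^{n-1}\log J^u\bigl(g^j\rho_{\alpha_j}\bigr).
\]
I would obtain it by regarding $\Pi_\alpha^*$ applied to a coherent state as a Lagrangian distribution supported in a thin tube around a piece of stable leaf; propagation by $U^{2n_E}$ turns it into a Lagrangian state along an unstable leaf, and pairing with $\widetilde\Pi_\beta$ collapses to a one-dimensional oscillatory integral to which a Van-Vleck amplitude estimate applies. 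The $\tfrac12$ comes precisely from this WKB square-root: the amplitude of the Lagrangian state is the square root of the unstable expansion. Inserting the dispersion bound into the uncertainty inequality, averaging over $\alpha$ with weights $p_\hbar(\alpha)$, dividing by $n_E$, and letting $\hbar\to 0$ (using Birkhoff-type convergence $n_E^{-1}J^u_{n_E}(\alpha)\to\int\log J^u\,d\mu$), then $\epsilon\to 0$, produces (\ref{mainineq}).

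The main obstacle is the dispersion bound at the Ehrenfest time. For $n$ comparable to $|\log\hbar|/\lambda_{\max}$, the evolved symbols $P_k\circ g^{-n}$ oscillate on the semiclassical scale, so the $\Pi_\alpha$ are no longer standard pseudodifferential operators; they have to be handled as Fourier integral operators with Lagrangian kernels and controlled via uniform WKB/stationary phase. In dimension $d\geq 3$ the analogous argument loses an additive term $\tfrac{d-1}{2}\chi^+_{\max}$ coming from transverse unstable directions; here the one-dimensionality of both the stable and the unstable foliation removes the transverse degrees of freedom and, together with a sub-exponential refinement of the partition along the unstable direction, yields the clean $\tfrac12\log J^u$ rather than a weaker correction. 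Keeping all constants uniform as $n\to n_E$ while running this sharpened 1D WKB computation is the technical heart of the proof.
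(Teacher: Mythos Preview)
Your proposal has a genuine gap at the step where you ``average over $\alpha$ with weights $p_\hbar(\alpha)$'' and invoke Birkhoff convergence of $n_E^{-1}J^u_{n_E}(\alpha)$. The right-hand side of the entropic uncertainty principle is $-2\log c(V)$ with $c(V)=\max_{\alpha,\beta}\|\widetilde\Pi_\beta V\Pi_\alpha^*\|$, a \emph{maximum} over all words, not an average. With fixed-length words of length $n_E\sim |\log\hbar|/\lambda_{\max}$, the quantity $\sum_j(-\log J^u)$ along a word can range from roughly $n_E\cdot a_0$ to $n_E\cdot b_0$ (with $a_0=\min(-\log J^u)$, $b_0=\lambda_{\max}$), and the maximum in $c(V)$ picks out the \emph{smallest} such sum, i.e.\ $n_E a_0=|\log\hbar|\,a_0/b_0$. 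Plugging this back, dividing by $n_E$, and passing to the limit gives exactly the Anantharaman--Koch--Nonnenmacher bound $h_{KS}(\mu,g)\geq \int(-\log J^u)\,d\mu-\tfrac12\lambda_{\max}$, which can be strictly weaker than~\eqref{mainineq} (and even trivial) when $J^u$ varies on $S^*M$. No Birkhoff-type averaging is available here because the typical (with respect to $p_\hbar$) word is irrelevant to a max-bound.

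What the paper does to close this gap is precisely to eliminate the variability of $\sum_j(-\log J^u)$ across words: instead of fixed-length words, one uses words of \emph{variable} length, stopping each word $\alpha$ at the first time the Birkhoff sum $\sum_j f(\sigma^j\alpha)$ (with $f\approx -\eta\log J^u$) exceeds a common threshold $T_E(\hbar)\approx\tfrac12|\log\hbar|$. Then every admissible word has $\sum f\approx T_E(\hbar)$, so the max equals the average and the dispersion estimate yields $c(V)\lesssim \hbar^{-1/2}e^{-T_E(\hbar)}$ directly. These variable-length words are organised as a partition of a suspension set $\overline\Sigma$ with roof function $f$, the uncertainty principle becomes a lower bound on the entropy of the suspension measure with respect to the special flow $\overline\sigma$, a subadditivity argument (requiring a ``local'' Egorov theorem valid up to the $\alpha$-dependent stopping time) lets one pass from $n_E(\hbar)$ to a fixed $n_0$, and finally Abramov's theorem $h_{KS}(\overline\mu,\overline\sigma)=h_{KS}(\mu,g)/\int f\,d\mu$ converts the bound $h_{KS}(\overline\mu,\overline\sigma)\geq\tfrac12$ into~\eqref{mainineq}. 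The role of dimension~$2$ is not that ``transverse unstable directions are absent'' in the WKB analysis, but that with a one-dimensional unstable bundle $-\log J^u(\rho)$ is exactly the local expansion rate, so the stopping-time construction controls the growth of $|d_\rho g^t|$ and hence the symbol classes needed for the long-time pseudodifferential calculus. Your sketch contains the right dispersion estimate but is missing this entire reparametrisation/Abramov mechanism, without which the argument reproduces only the weaker \cite{AKN} inequality.
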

We recall that the lower bound can be expressed in term of the Lyapunov exponent~\cite{BP} as \begin{equation}\label{mainineq3}h_{KS}(\mu,g)\geq \frac{1}{2}\int_{S^*M}\chi^+(\rho)d\mu(\rho),\end{equation}
where $\chi^+(\rho)$ is the upper Lyapunov exponent at the point $\rho$~\cite{BP}. In order to comment this result, let us recall a few facts about the Kolmogorov-Sinai (also called metric) entropy. It is a nonnegative number associated to a flow $g$ and a $g$-invariant measure $\mu$, that estimates the complexity of $\mu$ with respect to this flow. For example, a measure carried by a closed geodesic will have entropy zero. In particular, this theorem shows that the support of a semiclassical measure cannot be reduced to a closed geodesic. Moreover, this lower bound seems to be the optimal result we can prove using this method and only the dynamical properties of $M$. In fact, in the case of the toy models some of the counterexamples that have been constructed (see~\cite{FNdB},~\cite{Ke},~\cite{G}) have entropy equal to $\displaystyle\frac{1}{2}\int_{S^*M} \chi^+(\rho) d\mu(\rho).$ Recall also that a standard theorem of dynamical systems due to Ruelle~\cite{R} asserts that, for any invariant measure $\mu$ under the geodesic flow,
\begin{equation}\label{ruelle}h_{KS}(\mu,g)\leq \int_{S^*M} \chi^+(\rho) d\mu(\rho)\end{equation}
with equality if and only if $\mu$ is the Liouville measure in the case of an Anosov flow~\cite{LY}.
\\The lower bound of theorem~\ref{maintheo} was conjectured to hold for any semiclassical measure for an Anosov manifold in any dimension by Anantharaman~\cite{An}. In fact, Anantharaman proved that in any dimension, the entropy of a semiclassical measure should be bounded from below by a (not really explicit) positive constant~\cite{An}. Then, Anantharaman and Nonnenmacher showed that inequality~(\ref{mainineq3}) holds in the case of the Walsh Baker's map~\cite{AN1} and in the case of constant negative curvature in all dimension~\cite{AN2}. In the general case of an Anosov flow on a manifold of dimension $d$, Anantharaman, Koch and Nonnenmacher~\cite{AKN} proved a lower bound using the same method:
$$h_{KS}(\mu,g)\geq \int_{S^*M} \sum_{j=1}^{d-1}\chi^+_j(\rho) d\mu(\rho)-\frac{(d-1)\lambda_{\max}}{2}.$$
where $\lambda_{\max}:=\lim_{t\rightarrow\pm\infty}\frac{1}{t}\log\sup_{\rho\in S^*M}|d_{\rho}g^t|$ is the maximal expansion rate of the geodesic flow and the $\chi^+_j$'s are the positive Lyapunov exponents~\cite{BP}. In particular if $\lambda_{\max}$ is very large, the previous inequality can be trivial. However, they conjectured inequality~(\ref{mainineq3}) should hold in the general case of manifolds of Anosov type by replacing $\chi^{+}$ by the sum of nonnegative Lyapunov exponents~\cite{AN2},~\cite{AKN}. Our main result answers this conjecture in the particular case of surfaces of Anosov type and our proof is really specific to the case of dimension $2$. Now let us discuss briefly the main ideas of our proof of theorem~\ref{maintheo}.

\subsection{Heuristic of the proof}
\label{heuristic}

The procedure developed in~\cite{AKN} uses a result known as the entropic uncertainty principle~\cite{MU}. To use this principle in the semiclassical limit, we need to understand the precise link between the classical evolution and the quantum one for large times. Typically, we have to understand Egorov theorem~(\ref{Egorov}) for large range of times of order $t\sim|\log\hbar|$ (i.e. have a uniform remainder term of~(\ref{Egorov}) for a large range of times). For a general symbol $a$ in $\mathcal{C}^{\infty}_c(T^*M)$, we can only expect to have a uniform Egorov property for times $t$ in the range of times $[-\frac{1}{2}|\log\hbar|/\lambda_{\max},\frac{1}{2}|\log\hbar|/\lambda_{\max}]$~\cite{BR}. However, if we only consider this range of times, we do not take into account that the unstable jacobian can be very different between two points of $S^*M$. In this paper, we would like to say that the range of times for which the Egorov property holds depends also on the support of the symbol $a(x,\xi)$ we consider. For particular families of symbol of small support (that depends on $\hbar$), we show that we have a 'local' Egorov theorem with an allowed range of times that depends on our symbol (see~(\ref{mainegorov}) for example). To make this heuristic idea work, we first try to reparametrize the flow~\cite{CFS} in order to have a uniform expansion rate on the manifold. We define $\overline{g}^{\tau}(\rho):=g^t(\rho)$ where
\begin{equation}\label{parameter}\tau:=-\int_0^t\log J^u(g^s\rho)ds.\end{equation}
This new flow $\overline{g}$ has the same trajectories as $g$. However, the 'velocity of motion' along the trajectory at $\rho$ is $|\log J^u(\rho)|$-greater for $\overline{g}$ than for $g$. We underline here that the unstable direction is of dimension $1$ (as $M$ is a surface) and it is crucial because it implies that $\log J^u$ exactly measures the expansion rate in the unstable direction at each point\footnote{In fact, for the Anosov case, the crucial point is that at each point $\rho$ of $S^*M$, the expansion rate is the same in any direction, i.e. $dg^{-1}_{|E^u(g^1\rho)}$ is of the form $J^{u}(\rho)^{\frac{1}{d-1}}v_{\rho}$ where $d$ is the dimension of the manifold $M$ and $v_{\rho}$ is an isometry. The proof of theorem~\ref{maintheo} can be immediately adapted to Anosov manifolds of higher dimensions satisfying this isotropic expansion property (for example manifolds of constant negative curvature).}. As a consequence, this new flow $\overline{g}$ has a uniform expansion rate. Once this reparametrization is done, we use the following formula to recover $t$ knowing $\tau$:
\begin{equation}\label{inverse}t_{\tau}(\rho)=\inf\left\{s>0:-\int_0^s\log J^u(g^{s'}\rho)ds'\geq\tau\right\}.\end{equation}
The number $t_{\tau}(\rho)$ can be thought of as a stopping time corresponding to $\rho$. We consider now $\tau=\frac{1}{2}|\log\hbar|$. For a given symbol $a(x,\xi)$ localized near a point $\rho$, $t_{\frac{1}{2}|\log\hbar|}(\rho)$ is exactly the range of times for which we can expect Egorov to hold. This new flow seems in a way more adapted to our problem. Moreover, we can define a $\overline{g}$-invariant measure $\overline{\mu}$ corresponding to $\mu$~\cite{CFS}. The measure $\overline{\mu}$ is absolutely continuous with respect to $\mu$ and verifies $\frac{d\overline{\mu}}{d\mu}(\rho)=\log J^u(\rho)/\int_{S^*M}\log J^u(\rho)d\mu(\rho)$. We can apply the classical result of Abramov
$$h_{KS}(\mu,g)=\left|\int_{S^*M}\log J^u(\rho)d\mu(\rho)\right|h_{KS}(\overline{\mu},\overline{g}).$$
To prove theorem~\ref{maintheo}, we would have to show that $h_{KS}(\overline{\mu},\overline{g})\geq1/2$. However, the flow $\overline{g}$ has no reason to be a Hamiltonian flow to which corresponds a quantum propagator $\overline{U}$. As a consequence, there is no particular reason that this inequality should be a consequence of~\cite{AN2}. In the quantum case, there is also no obvious reparametrization we can make as in the classical case. However, we will reparametrize the quantum propagator starting from a discrete reparametrization of the geodesic flow and by introducing a small parameter of time $\eta$. To have an artificial discrete reparametrization of the geodesic flow, we will introduce a suspension set~\cite{CFS}. Then, in this setting, we will define discrete analogues of the previous quantities~(\ref{parameter}) and~(\ref{inverse}) that will be precised in the paper. It will allow us to prove a lower bound on the entropy of a certain reparametrized flow and then using Abramov theorem~\cite{Ab} deduce the expected lower bound on the entropy of a semiclassical measure.\\
Finally, we would like to underline that in a recent paper~\cite{G}, Gutkin also used a version of the Abramov theorem to prove an analogue of theorem~\ref{maintheo} in the case of toy models with an unstable direction of dimension $1$.

\subsection{Extension of theorem~\ref{maintheo}}

Finally, we would like to discuss other classes of dynamical systems for which it could be interesting to get an analogue of theorem~\ref{maintheo}. For instance, regarding the counterexamples in~\cite{Has}, it would be important to derive an extension of theorem~\ref{maintheo} to ergodic billiards. A first step in this direction should be to study the case of surfaces of nonpositive curvature. For the sake of simplicity, we will not discuss the details of this extension in this article and refer the reader to~\cite{GR2} for a more detailled discussion. However, we would like to point out that surfaces of nonpositive curvature share enough properties with Anosov manifolds so that this extension should not be so surprising. First, one can introduce a new quantity that replaces the unstable Jacobian in our proof. This quantity comes from the study of Jacobi fields and is called the unstable Riccati solution $U^u(\rho)$~\cite{Ru},~\cite{Eb}. In this setting, it has been shown that the Ruelle inequality can be rewritten as follows~\cite{FM}:
$$h_{KS}(\mu,g)\leq \int_{S^*M} U^u(\rho) d\mu(\rho).$$
So, a natural extension of theorem~\ref{maintheo} would be to prove that, for a smooth Riemannian surface $M$ of nonpositive sectional curvature and a semiclassical measure $\mu$,
\begin{equation}\label{mainineq2}h_{KS}(\mu,g)\geq \frac{1}{2}\int_{S^*M} U^u(\rho) d\mu(\rho).\end{equation}
In particular, this result would show that the support of a semiclassical measure cannot be reduced to a closed unstable geodesic. We underline that this inequality is also coherent with the quasimodes constructed by Donnelly~\cite{Do}. In fact, his quasimodes are supported on closed stable geodesics (included in flat parts of a surface of nonpositive curvature) and have zero entropy. We can make a last observation on the assumptions on the manifold: it is not known whether the geodesic flow is ergodic or not for the Liouville measure on a surface of nonpositive curvature. The best result in this direction is that there exists an open invariant subset $U$ of positive Liouville measure such that the restriction $g_{|U}$ is ergodic with respect to Liouville~\cite{BP}. So, the entropic properties of semiclassical measures still seem to hold for weakly chaotic systems.\\
We would like to highlight what are the specific properties of surfaces of nonpositive curvature that can be exploited to get inequality~(\ref{mainineq2}). A crucial property that is used in the proof of theorem~\ref{maintheo} is that there exist \emph{continuous stable and unstable foliations}. This property was already at the heart of~\cite{An},~\cite{AN2} and~\cite{AKN}. Another property that is crucially used is the fact that Anosov manifolds have \emph{no conjugate points}. A nice fact about manifolds of nonpositive curvature is that these two properties remain true with the notable difference that the stable and unstable manifolds are not anymore uniformly transverse. Our main affirmation is that these two properties are the \emph{crucial dynamical properties} that make the different proofs from~\cite{AN2},~\cite{AKN} and this article work. In particular, one can use results about \emph{uniform divergence of vanishing Jacobi fields}~\cite{Ru} to derive the main inequality from~\cite{AN2} (section $3$ of this reference). We do not give the points that need to be modified and refer the reader to~\cite{GR2} for a more detailed discussion. Another notable difference with the present article relies on the introduction of a thermodynamical setting at the quantum level as in~\cite{AN2} and~\cite{AKN} to get optimal estimates with the uncertainty principle~\cite{GR2}.
\begin{rema} One could also ask whether it would be possible to extend this result to surfaces without conjugate points. In fact, these surfaces also have a stable and unstable foliations (and of course no conjugate points). Moreover, according to Green~\cite{Gr} and Eberlein~\cite{Eb0}, the Jacobi fields also satisfy a property of uniform divergence (at least in dimension $2$). The main difficulty is that the continuity of the stable and unstable foliations is not true anymore~\cite{BBB} and at this point, we do not see any way of escaping this difficulty.
\end{rema}

\subsection{Organization of the paper}

In section~\ref{entropysection}, we briefly recall properties we will need about entropy in the classical and quantum settings. In particular, we recall the version of Abramov theorem we will need. In section~\ref{Anosov}, we describe the assumptions we make on the manifold $M$ and introduce some notations. In section~\ref{proof}, we draw a precise outline of the proof of theorem~\ref{maintheo} and state some results that we will prove in the following sections. Sections~\ref{partition} and~\ref{commutativity} are devoted to the detailed proofs of the results we admitted in section~\ref{proof}. Sections~\ref{bigpdotheo} and~appendix~\ref{appendix} are devoted to results of semiclassical analysis that are quite technical and that we will use at different points of the paper (in particular in section~\ref{commutativity}).

\subsection*{Acknowledgments} First of all, I am very grateful to my advisor Nalini Anantharaman for her time and her patience spent to teach me so many things about the subject. I also thank her for having read carefully preliminary versions of this work and for her support. I would also like to thank warmly St\'ephane Nonnenmacher for enlightening explanations about semiclassical analysis and more generally for his encouragement. I am grateful to Herbert Koch for helpful and stimulating suggestions about the application of the entropic uncertainty principle. Finally, I would like to thank the anonymous referees for precious comments and suggestions to improve the presentation of this article.

\section{Classical and quantum entropy}
\label{entropysection}

\subsection{Kolmogorov-Sinai entropy}
\label{KSentropy}
Let us recall a few facts about Kolmogorov-Sinai (or metric) entropy that can be found for example in~\cite{Wa}. Let $(X,\mathcal{B},\mu)$ be a measurable probability space, $I$ a finite set and $P:=(P_{\alpha})_{\alpha\in I}$ a finite measurable partition of $X$, i.e. a finite collection of measurable subsets that forms a partition. Each $P_{\alpha}$ is called an atom of the partition. Assuming $0\log 0=0$, one defines the entropy of the partition as
\begin{equation}\label{defent0}H(\mu,P):=-\sum_{\alpha\in I}\mu(P_{\alpha})\log\mu(P_{\alpha})\geq 0.\end{equation}
Given two measurable partitions $P:=(P_{\alpha})_{\alpha\in I}$ and $Q:=(Q_{\beta})_{\beta\in K}$, one says that $P$ is a refinement of $Q$ if every element of $Q$ can be written as the union of elements of $P$ and it can be shown that $H(\mu,Q)\leq H(\mu,P)$. Otherwise, one denotes $P\vee Q:=(P_{\alpha}\cap Q_{\beta})_{\alpha\in I,\beta\in K}$ their join (which is still a partition) and one has $H(\mu,P\vee Q)\leq H(\mu,P)+H(\mu,Q)$ (subadditivity property). Let $T$ be a measure preserving transformation of $X$. The $n$-refined partition $\vee_{i=0}^{n-1}T^{-i}P$ of $P$ with respect to $T$ is then the partition made of the atoms $(P_{\alpha_0}\cap\cdots\cap T^{-(n-1)}P_{\alpha_{n-1}})_{\alpha\in I^n}$. We define the entropy with respect to this refined partition
\begin{equation}\label{nKSentropy}H_{n}(\mu, T, P)=-\sum_{|\alpha|=n}\mu(P_{\alpha_0}\cap\cdots\cap T^{-(n-1)}P_{\alpha_{n-1}})\log\mu(P_{\alpha_0}\cap\cdots\cap T^{-(n-1)}P_{\alpha_{n-1}}).\end{equation}
Using the subadditivity property of entropy, we have for any integers $n$ and $m$,
\begin{equation}\label{classicalsubad}
H_{n+m}(\mu, T, P)\leq H_{n}(\mu, T, P)+H_{m}(T^n\sharp\mu, T, P)=H_{n}(\mu, T, P)+H_{m}(\mu, T, P).
\end{equation}
For the last equality, it is important to underline that we really use the $T$-invariance of the measure $\mu$. A classical argument for subadditive sequences allows us to define the following quantity:
\begin{equation}\label{defentropy}h_{KS}(\mu,T,P):=\lim_{n\rightarrow\infty}\frac{H_n\left(\mu,T,P\right)}{n}.\end{equation}
It is called the Kolmogorov Sinai entropy of $(T,\mu)$ with respect to the partition $P$. The Kolmogorov Sinai entropy $h_{KS}(\mu,T)$ of $(\mu,T)$ is then defined as the supremum of $h_{KS}(\mu,T,P)$ over all partitions $P$ of $X$. Finally, it should be noted that this quantity can be infinite (not in our case thanks to Ruelle inequality~(\ref{ruelle}) for instance). Note also that if, for all index $(\alpha_0,\cdots,\alpha_{n-1})$, $\mu(P_{\alpha_0}\cap\cdots\cap T^{-(n-1)}P_{\alpha_{n-1}})\leq Ce^{-\beta n}$ with $C$ positive constant, then $h_{KS}(\mu,T)\geq \beta$: the metric entropy measures the exponential decrease of the atoms of the refined partition.

\subsection{Quantum entropy}

One can defined a quantum counterpart to the metric entropy. Let $\mathcal{H}$ be an Hilbert space. We call a partition of identity $(\tau_{\alpha})_{\alpha\in I}$ a finite family of operators that satisfies the following relation:
\begin{equation}\label{identity}\sum_{\alpha\in I} \tau_{\alpha}^*\tau_{\alpha}=\text{Id}_{\mathcal{H}}.\end{equation}
Then, one defines the quantum entropy of a normalized vector $\psi$ as
\begin{equation}\label{quantumentropy}h_{\tau}(\psi):=-\sum_{\alpha\in I}\|\tau_{\alpha}\psi\|^2\log\|\tau_{\alpha}\psi\|^2.\end{equation}
Finally, one has the following generalization of a theorem from~\cite{AN2} (the proof immediately generalizes to this case), known as the entropic uncertainty principle~\cite{MU}:
\begin{theo}\label{uncertainty} Let $O_{\beta}$ be a family of bounded operators and $U$ a unitary operator of an Hilbert space $(\mathcal{H},\|.\|)$. Let $\delta'$ be a positive number. Given $(\tau_{\alpha})_{\alpha\in I}$ and $(\pi_{\beta})_{\beta\in K}$ two partitions of identity and $\psi$ a vector in $\mathcal{H}$ of norm $1$ such that
$$\|(Id-O_{\beta})\pi_{\beta}\psi\|\leq\delta'.$$
Suppose both partitions are of cardinal less than $\mathcal{N}$, then
$$h_{\tau}(U\psi)+h_{\pi}(\psi)\geq-2\log \left(c_O(U)+\mathcal{N}\delta'\right),$$
where  $\displaystyle c_O(U)=\max_{\alpha\in I,\beta\in K}\left(\|\tau_{\alpha}U\pi_{\beta}^*O_{\beta}\|\right)$, with $\|\tau_{\alpha}U\pi_{\beta}^*O_{\beta}\|$ the operator norm in $\mathcal{H}$.
\end{theo}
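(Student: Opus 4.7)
The plan is to adapt the Riesz--Thorin interpolation proof of the classical Maassen--Uffink inequality~\cite{MU}, as carried out in~\cite{AN2}, treating the $O_\beta$-approximation as an additive perturbation $\mathcal{N}\delta'$ of the overlap constant.

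First I would reduce to the case $U = \mathrm{Id}$. The family $(\tau_\alpha U)_\alpha$ is again a partition of identity, since $\sum_\alpha (\tau_\alpha U)^*(\tau_\alpha U) = U^*U = \mathrm{Id}$, and $h_\tau(U\psi) = h_{\tau U}(\psi)$, so after relabelling $\tau_\alpha \leftarrow \tau_\alpha U$ one may assume $U = \mathrm{Id}$, with $c := c_O(U) = \max_{\alpha,\beta}\|\tau_\alpha\pi_\beta^* O_\beta\|$. Next I would decompose $\psi = \sum_\beta \pi_\beta^*\pi_\beta\psi$ and set $\tilde\psi := \sum_\beta \pi_\beta^* O_\beta \pi_\beta \psi$; since $\pi_\beta^*\pi_\beta \leq \mathrm{Id}$ forces $\|\pi_\beta^*\|\leq 1$, the triangle inequality combined with the hypothesis $\|(\mathrm{Id} - O_\beta)\pi_\beta\psi\|\leq\delta'$ gives $\|\psi - \tilde\psi\| \leq \mathcal{N}\delta'$. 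Writing $A_{\alpha\beta} := \tau_\alpha\pi_\beta^* O_\beta$, so $\|A_{\alpha\beta}\|\leq c$, this yields $\tau_\alpha\psi = \sum_\beta A_{\alpha\beta}\pi_\beta\psi + r_\alpha$ with $\sum_\alpha\|r_\alpha\|^2 \leq (\mathcal{N}\delta')^2$.

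The heart of the argument is then two endpoint estimates linking the sequences $u_\alpha := \|\tau_\alpha\psi\|$ and $v_\beta := \|\pi_\beta\psi\|$. The $\ell^\infty$--$\ell^1$ endpoint reads $\max_\alpha u_\alpha \leq c\sum_\beta v_\beta + \mathcal{N}\delta' \leq (c + \mathcal{N}\delta')\sum_\beta v_\beta$, combining the triangle inequality, $\|A_{\alpha\beta}\|\leq c$, and the reverse bound $\sum_\beta v_\beta \geq (\sum_\beta v_\beta^2)^{1/2} = \|\psi\| = 1$. The $\ell^2$--$\ell^2$ endpoint is just the partition-of-identity normalization $\sum_\alpha u_\alpha^2 = 1 = \sum_\beta v_\beta^2$. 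Interpolating these via the pointwise H\"older-type inequality $\|u\|_p \leq \|u\|_2^{2/p}\|u\|_\infty^{1-2/p}$ for $p \geq 2$, in the spirit of the Riesz--Thorin step of the scalar Maassen--Uffink proof, produces, for every $p \geq 2$ with $1/p + 1/q = 1$,
$$\Bigl(\sum_\alpha u_\alpha^p\Bigr)^{1/p} \leq (c + \mathcal{N}\delta')^{1-2/p}\Bigl(\sum_\beta v_\beta^q\Bigr)^{1/q}.$$

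To finish, I would square, set $p_\alpha = u_\alpha^2$ and $q_\beta = v_\beta^2$, take logarithms, and differentiate at $p = q = 2$, where both sides equal $1$ by the partition-of-identity normalization. This converts the interpolation inequality into the Shannon entropy bound $h_\tau(U\psi) + h_\pi(\psi) \geq -2\log(c_O(U) + \mathcal{N}\delta')$. The main subtlety I expect is in justifying the interpolation step rigorously: the operator-valued ``matrix'' $(A_{\alpha\beta})$ does not define a contraction on $\ell^2(K,\mathcal{H})$ for arbitrary inputs, so one cannot appeal to Riesz--Thorin as an operator identity. Instead, one must interpolate the two sequence-level bounds above --- which hold only for the specific input coming from $\psi$ --- by a direct log-convexity argument, as in the original Maassen--Uffink proof, and carefully absorb the error terms $r_\alpha$ via Minkowski's inequality so that the additive perturbation $\mathcal{N}\delta'$ survives only in the final constant $c+\mathcal{N}\delta'$.
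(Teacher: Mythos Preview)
The paper does not actually prove this theorem: it cites \cite{AN2} and says the proof ``immediately generalizes to this case.'' Your overall plan---Maassen--Uffink via Riesz--Thorin, with the $O_\beta$ handled as an additive $\mathcal{N}\delta'$ perturbation---is precisely the approach of \cite{AN2}, so on the level of strategy you are aligned with what the paper intends.

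There is, however, a genuine gap in the step you flag as ``the main subtlety.'' Your proposed fix is to abandon the operator-level interpolation and instead interpolate the two \emph{sequence-level} facts $\|u\|_\infty\le(c+\mathcal{N}\delta')\|v\|_1$ and $\|u\|_2=\|v\|_2=1$. This cannot yield $\|u\|_p\le(c+\mathcal{N}\delta')^{1-2/p}\|v\|_q$. From your two endpoints one only gets $\|u\|_p\le(c+\mathcal{N}\delta')^{1-2/p}\|v\|_1^{\,1-2/p}$, and log-convexity of $\ell^p$-norms gives $\|v\|_q\le\|v\|_1^{\,1-2/p}$, the \emph{wrong} direction. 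Differentiating the weaker bound at $p=2$ produces $h_\tau\ge-2\log(c+\mathcal{N}\delta')-2\log\|v\|_1$, and the missing piece $h_\pi\ge 2\log\|v\|_1$ is simply false (take $v_\beta^2=(3/4,1/4)$). So the sequence-level shortcut does not close.

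The correct repair is that your premise is wrong: the operator \emph{does} have a good $\ell^2$ bound. Since $(\pi_\beta)$ is a partition of identity, the map $\Pi:(x_\beta)\mapsto\sum_\beta\pi_\beta^*x_\beta$ has adjoint $y\mapsto(\pi_\beta y)_\beta$, which is an isometry $\mathcal{H}\to\ell^2(K,\mathcal{H})$; hence $\|\Pi\|_{\ell^2\to\mathcal{H}}\le 1$. Likewise $y\mapsto(\tau_\alpha y)_\alpha$ is an isometry. Thus the operator $\tilde T:(x_\beta)\mapsto\bigl(\sum_\beta\tau_\alpha U\pi_\beta^*O_\beta x_\beta\bigr)_\alpha$ satisfies $\|\tilde T\|_{\ell^2\to\ell^2}\le\max_\beta\|O_\beta\|$ (which is $1+o(1)$ in the application of section~\ref{uncert}) and $\|\tilde T\|_{\ell^1\to\ell^\infty}\le c_O(U)$. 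Now Riesz--Thorin applies honestly, giving $\|\tilde T x\|_{\ell^p}\le c_O(U)^{1-2/p}\|x\|_{\ell^q}$ up to the harmless $\|O_\beta\|$ factor. Feed in $x_\beta=\pi_\beta\psi$, so $\|x\|_{\ell^q}=\|v\|_q$, and control the remainder exactly as you suggest: $\|r\|_{\ell^p}\le\|r\|_{\ell^2}=\|\psi-\tilde\psi\|\le\mathcal{N}\delta'$ for $p\ge2$, together with $\|v\|_q\ge\|v\|_2=1$ to absorb the additive error into the constant. The differentiation at $p=2$ then goes through.
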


\subsection{Entropy of a special flow}
\label{sectionAbr}

In the previous papers of Anantharaman, Koch and Nonnenmacher (see~\cite{AKN} for example), the main difficulty that was faced to prove main inequality~(\ref{mainineq}) was that the value of $\log J^u(\rho)$ could change a lot depending on the point of the energy layer they looked at. As was mentioned (see section~\ref{heuristic}), we will try to adapt their proof and take into account the changes of the value of $\log J^u(\rho)$. To do this, we will, in a certain way, reparametrize the geodesic flow. Before explaining precisely this strategy, let us recall a classical fact of dynamical system for reparametrization of measure preserving transformations known as the Abramov theorem.\\
First, let us define a special flow (see~\cite{Ab},~\cite{CFS}). Let $(X,\mathcal{B},\mu)$ be a probability space, $T$ an automorphism of $X$ and $f$ a measurable function such that $f(x)>a>0$ for all $x$ in $X$. The function $f$ is called a roof function. We are interested in the set
\begin{equation}\label{suspension}\overline{X}:=\{\left(x,s\right): x\in X,0\leq s<f\left(x\right)\}.\end{equation}
$\overline{X}$ is equipped with the $\sigma$-algebra by restriction of the $\sigma$-algebra on the cartesian product $X\times\mathbb{R}$. For $A$ measurable, one defines $\overline{\mu}(A):=\frac{1}{\int_X fd\mu}\int\int_A d\mu(x)ds$ and $\overline{\mu}(\overline{X})=1$.
\begin{def1}
The special flow under the automorphism $T$, constructed by the function $f$ is the flow $(\overline{T}^t)$ that acts on $\overline{X}$ in the following way, for $t\geq 0$,
\begin{equation}\label{special}\overline{T}^t\left(x,s\right):=\left(T^n x,s+t-\sum_{k=0}^{n-1}f\left(T^k x\right)\right),\end{equation}
where $n$ is the only integer such that $\displaystyle\sum_{k=0}^{n-1}f\left(T^k x\right)\leq s+t<\sum_{k=0}^{n}f\left(T^k x\right)$.\\
For $t<0$, one puts, if $s+t>0$,
$$\overline{T}^t\left(x,s\right):=\left(x,s+t\right),$$
and otherwise,
$$\overline{T}^t\left(x,s\right):=\left(T^{-n} x,s+t+\sum_{k=-n}^{-1}f\left(T^k x\right)\right),$$
where $n$ is the only integer such that $\displaystyle-\sum_{k=-n}^{-1}f\left(T^k x\right)\leq s+t<-\sum_{k=-n+1}^{-1}f\left(T^k x\right)$.
\end{def1}
\begin{rema} A suspension semi-flow can also be defined from an endomorphism.\end{rema}
It can be shown that this special flow preserves the measure $\overline{\mu}$ if $T$ preserves $\mu$ \cite{CFS}. Finally, we can state Abramov theorem for special flows~\cite{Ab}:
\begin{theo}\label{theoabr}
With the previous notations, one has, for all $t\in\mathbb{R}$:
\begin{equation}\label{Abramov}h_{KS}\left(\overline{T}^t,\overline{\mu}\right)=\frac{|t|}{\int_X fd\mu}h_{KS}\left(T,\mu\right).\end{equation}
\end{theo}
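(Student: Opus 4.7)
The natural plan is to split the argument into two independent parts: a reduction to $t = 1$ via the linear time-scaling of flow entropy, and a direct identification of $h_{KS}(\overline{T}^1, \overline{\mu})$ with $h_{KS}(T, \mu)/\int_X f\, d\mu$ using well-chosen product partitions of $\overline{X}$.

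First, I would prove the classical identity $h_{KS}(\Phi^t, \nu) = |t|\, h_{KS}(\Phi^1, \nu)$ for any measure-preserving flow $(\Phi^t)$. For $t = n \in \mathbb{Z}$ this is just the discrete statement $h_{KS}(T^n, \mu) = |n|\, h_{KS}(T, \mu)$ applied to $T = \Phi^1$: the inequality $\leq$ comes from subadditivity~(\ref{classicalsubad}), and $\geq$ from testing the entropy of $\Phi^n$ against the $n$-refined partition $\bigvee_{i=0}^{n-1} \Phi^{-i} P$. The rational case follows by applying the same identity with $\Phi^1$ replaced by $\Phi^{1/q}$, and the real case by density together with a standard continuity argument. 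This reduces the problem to proving $h_{KS}(\overline{T}^1, \overline{\mu}) = h_{KS}(T, \mu)/\int_X f\, d\mu$.

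Second, I would build a family of product partitions of $\overline{X}$. Given a finite partition $P = (P_\alpha)$ of $X$ and $\epsilon > 0$, partition $[0, \sup f)$ into intervals $(I_k)$ of length at most $\epsilon$, and set $\overline{P}_{\alpha,k} := (P_\alpha \times I_k) \cap \overline{X}$. An atom of $\bigvee_{j=0}^{n-1} (\overline{T}^1)^{-j} \overline{P}$ containing $(x, s)$ records, for each integer $j \in \{0,\dots,n-1\}$, the $P$-atom visited by the base coordinate $T^{m_j(x,s)} x$ and the fiber interval containing $s + j - \sum_{k=0}^{m_j-1} f(T^k x)$, where $m_j(x,s)$ is the number of roof crossings occurring during flow-time $j$. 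By Birkhoff's ergodic theorem applied to the return-time cocycle $S_m f = \sum_{k=0}^{m-1} f \circ T^k$, one has $m_n/n \to 1/\int_X f\, d\mu$ almost surely, so up to a set of arbitrarily small $\overline{\mu}$-measure the base-itinerary part of the refined partition is an atom of $\bigvee_{i=0}^{m_n - 1} T^{-i} P$ in $X$. Since between crossings the flow acts on the fiber by pure translation, the fiber itinerary is essentially determined by the base itinerary together with the initial fiber bin, contributing only $O(1)$ extra bits to $H_n(\overline{\mu}, \overline{T}^1, \overline{P})$. Dividing by $n$ and letting $n \to \infty$ yields
\[
h_{KS}(\overline{T}^1, \overline{\mu}, \overline{P}) \;=\; \frac{h_{KS}(T, \mu, P)}{\int_X f\, d\mu},
\]
and taking the supremum over $P$ and $\epsilon$ along a sequence of product partitions generating the Borel $\sigma$-algebra on $\overline{X}$ gives the reduced identity by the Kolmogorov-Sinai theorem.

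The main obstacle, as I see it, is the probabilistic bookkeeping in the previous paragraph: because the convergence $m_n/n \to 1/\int_X f\, d\mu$ and the roof-crossing ambiguities near atom boundaries are only pointwise almost-sure statements, obtaining the limit of $n^{-1} H_n(\overline{\mu}, \overline{T}^1, \overline{P})$ rather than just a one-sided inequality requires an Egorov-type restriction to a subset of $\overline{X}$ on which the Birkhoff convergence is uniform at an explicit rate, combined with the trivial bound $H_n \leq n \log \# \overline{P}$ on the complementary set to show that its contribution to the normalized entropy vanishes as $n \to \infty$. Everything else should reduce to routine manipulations with the definitions of Section~\ref{KSentropy} and the explicit description of $\overline{T}^t$ in~(\ref{special}).
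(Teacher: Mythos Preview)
The paper does not give a proof of Theorem~\ref{theoabr}: it is stated in Section~\ref{sectionAbr} as a classical result, with a reference to Abramov's original paper~\cite{Ab} (see also~\cite{CFS}), and is then used as a black box in Section~\ref{applabramov}. So there is no ``paper's own proof'' to compare against.

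Your sketch is a reasonable outline of the standard argument and would be acceptable as a road map. Two remarks on points you should tighten if you actually write it out. First, your Birkhoff step ``$m_n/n \to 1/\int_X f\,d\mu$ almost surely'' presupposes that $(T,\mu)$ is ergodic; in the non-ergodic case the almost-sure limit is $1/\mathbb{E}_\mu[f\mid\mathcal{I}]$, and the clean way to recover the general statement is via ergodic decomposition together with the affinity of $\mu\mapsto h_{KS}(T,\mu)$. Second, the passage from rational to real $t$ in $h_{KS}(\Phi^t,\nu)=|t|\,h_{KS}(\Phi^1,\nu)$ by ``density together with a standard continuity argument'' is more delicate than it looks, since $t\mapsto h_{KS}(\Phi^t,\nu)$ is not a priori continuous; the usual route is either to invoke measurability and subadditivity to force linearity, or to bypass the issue entirely by proving Abramov's formula directly for each fixed $t$ (your second step does not actually use $t=1$ in any essential way). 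For the application in this paper only the case $t=1$ is used, so the first reduction is in any case inessential here.
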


\section{Classical setting of the paper}

\label{Anosov}

Before starting the main lines of the proof, we want to describe the classical setting for our surface $M$ and introduce notations that will be useful in the paper. We suppose the geodesic flow over $T^*M$ to have the Anosov property for the first part of the paper. This means that for any $\lambda>0$, the geodesic flow $g^t$ is Anosov on the energy layer $\mathcal{E}(\lambda):=H^{-1}(\lambda)\subset T^*M$ and in particular, the following decomposition holds for all $\rho\in\mathcal{E}(\lambda)$:
$$T_{\rho}\mathcal{E}(\lambda)=E^u(\rho)\oplus E^s(\rho)\oplus \mathbb{R}X_{H}(\rho),$$
where $X_H$ is the Hamiltonian vector field associated to $H$, $E^u$ the unstable space and $E^s$ the stable space~\cite{BS}. It can be denoted that in the setting of this article, they are all one dimensional spaces. The unstable Jacobian $J^u(\rho)$ at the point $\rho$ is defined as the Jacobian of the restriction of $g^{-1}$ to the unstable subspace $E^u(g^1\rho)$:
$$J^u(\rho):=\det\left(dg^{-1}_{|E^u(g^{1}\rho)}\right).$$
For $\theta$ small positive number ($\theta$ will be fixed all along the paper), one defines $$\mathcal{E}^{\theta}:=H^{-1}(]1/2-\theta,1/2+\theta[).$$ As the geodesic flow is Anosov, we can suppose there exist $0<a_0<b_0$ such that
$$\forall\rho\in \mathcal{E}^{\theta},\ a_0\leq -\log J^u(\rho)\leq b_{0}.$$
\begin{rema}In fact, in the general setting of an Anosov flow, we can only suppose that there exists $k_{0}\in\mathbb{N}$ such that $\displaystyle\det\left(dg^{-k_{0}}_{|E^u(g^{k_0}\rho)}\right)<1$ for all $\rho\in\mathcal{E}^{\theta}$. So, to be in the correct setting, we should take $g^{k_0}$ instead of $g$ in the paper. In fact, as $h_{KS}(\mu,g^{k_0})=k_0h_{KS}(\mu,g)$ and $$-\int_{S^*M}\log\det\left(dg^{-k_{0}}_{|E^u(g^{k_0}\rho)}\right)d\mu(\rho)=-k_0\int_{S^*M}\log\det\left(dg^{-1}_{|E^u(g^{1}\rho)}\right)d\mu(\rho),$$
theorem~\ref{maintheo} follows for $k_0=1$ from the case $k_0$ large. However, in order to avoid too many notations, we will suppose $k_0=1$.\end{rema}
We also fix $\epsilon$ and $\eta$ two small positive constants lower than the injectivity radius of the manifold. We choose $\eta$ small enough to have $(2+\frac{b_0}{a_0})b_0\eta\leq\frac{\epsilon}{2}$ (this property will only be used in the proof of lemma~\ref{adapted}). We underline that there exists $\varepsilon>0$ such that if
$$\forall\ (\rho,\rho')\in\mathcal{E}^{\theta}\times\mathcal{E}^{\theta},\ d(\rho,\rho')\leq\varepsilon\Rightarrow |\log J^u(\rho)-\log J^u(\rho')|\leq a_0\epsilon.$$

\subsection*{Discretization of the unstable Jacobian} As was already mentioned, our strategy to prove theorem~\ref{maintheo} will be introduce a discrete reparametrization of the geodesic flow. Regarding this goal, we cut the manifold $M$ and precisely,
we consider a partition $M=\bigsqcup_{i=1}^K O_i$ of diameter smaller than some positive $\delta$. Let $(\Omega_i)_{i=1}^K$ be a finite open cover of $M$ such that for all $1\leq i\leq K$, $O_i\subsetneq\Omega_i$. For $\gamma\in\{1,\cdots,K\}^2$, define an open subset of $T^*M$:
$$U_{\gamma}:=(T^*\Omega_{\gamma_0}\cap g^{-\eta}T^*\Omega_{\gamma_1})\cap\mathcal{E}^{\theta}.$$
We choose the partition $(O_i)_{i=1}^K$ and the open cover $(\Omega_i)_{i=1}^K$ of $M$ such that  $(U_{\gamma})_{\gamma\in\{1,\cdots,K\}^2}$ is a finite open cover of diameter smaller\footnote{In particular, the diameter of the partition $\delta$ depends on $\theta$ and $\epsilon$.} than $\varepsilon$ of $\mathcal{E}^{\theta}$. Then, we define the following quantity, called the discrete Jacobian in time $\eta$:
\begin{equation}\label{jac0}J_{\eta}^u\left(\gamma\right):=\sup\left\{ J^u(\rho):\rho\in U_{\gamma}\right\},\end{equation}
if the previous set is non empty, $e^{-b_0}$ otherwise. Outline that $J^u_{\eta}(\gamma)$ depends on $\eta$ as $U_{\gamma}$ depends on $\eta$. The definition can seem quite asymmetric as we consider the supremum of $J^u(\rho)$ and not of $J^u_{\eta}(\rho)$. However, this choice makes things easier for our analysis.\\
Finally, let $\alpha=(\alpha_0,\alpha_1,\cdots)$ be a sequence (finite or infinite) of elements of $\{1,\cdots,K\}$ whose length is larger than~$1$ and define
\begin{equation}\label{jac}f_+(\alpha):=-\eta\log J_{\eta}^u\left(\alpha_0,\alpha_1\right)\leq\eta b_0\leq\frac{\epsilon}{2},\end{equation}
where the upper bound follows from the previous hypothesis. We underline that, for $\gamma=(\gamma_0,\gamma_1)$, we have
\begin{equation}\label{continuity} \forall\ \rho\in U_{\gamma},\ |f_+(\gamma)+\eta\log J^u(\rho)|\leq a_0\eta\epsilon.\end{equation}
\begin{rema} This last inequality shows that even if our choice for $J^u_{\eta}(\gamma)$ seems quite asymmetric, it allows to have an explicit bound in $\eta$ for quantity~(\ref{continuity}) and it will be quite useful. With a more symmetric choice, we would not have been able to get an explicit bound in $\eta$ for~(\ref{continuity}).
\end{rema}
In the following, we will also have to consider negative times. To do this, we define the analogous functions, for $\beta:=(\cdots,\beta_{-1},\beta_0)$ of finite (or infinite) length,
$$f_-(\beta):=f(\beta_{-1},\beta_0).$$
\begin{rema} Let $\alpha$ and $\beta$ be as previously (finite or infinite). For the sake of simplicity, we will use the notation
$$\beta.\alpha:=(\cdots,\beta_{-1},\beta_0,\alpha_0,\alpha_1,\cdots).$$
The same obviously works for any sequences of the form $(\cdots,\beta_{p-1},\beta_p)$ and $(\alpha_{q},\alpha_{q+1},\cdots)$.
\end{rema}

\section{Outline of the proof}
\label{proof}

Let $(\psi_{\hbar_k})$ be a sequence of orthonormal eigenfunctions of the Laplacian corresponding to the eigenvalues $-1/\hbar_k^{-2}$ such that the corresponding sequence of distributions $\mu_k$ on $T^{*}M$ converges as $k$ tends to infinity to the semiclassical measure $\mu$. For simplicity of notations and to fit semiclassical analysis notations, we will denote $\hbar$ tends to $0$ the fact that $k$ tends to infinity and $\psi_{\hbar}$ and $\hbar^{-2}$ the corresponding eigenvector and eigenvalue. To prove theorem~\ref{maintheo}, we will in particular give a symbolic interpretation of a semiclassical measure and apply the previous results on special flows to this measure.\\
Let $\epsilon'>4\epsilon$ be a positive number, where $\epsilon$ was defined in section~\ref{Anosov}. The link between the two quantities $\epsilon$ and $\epsilon'$ will only be used in section~\ref{bigpdotheo} to define $\nu$. In the following of the paper, the Ehrenfest time $n_E(\hbar)$ will be the quantity
\begin{equation}\label{ehrenfest}n_E(\hbar):=[(1-\epsilon')|\log\hbar|].\end{equation}
We underline that it is an integer time and that, compared with usual definitions of the Ehrenfest time, there is no dependence on the Lyapunov exponent. We also consider a smaller non integer time
\begin{equation}\label{ehrint}T_E(\hbar):=(1-\epsilon)n_E(\hbar).\end{equation}

\subsection{Quantum partitions of identity}

In order to find a lower bound on the metric entropy of the semiclassical measure $\mu$, we would like to apply the entropic uncertainty principle (theorem~\ref{uncertainty}) and see what informations it will give (when $\hbar$ tends to $0$) on the metric entropy of the semiclassical measure $\mu$. To do this, we define quantum partitions of identity corresponding to a given partition of the manifold.

\subsubsection{Partitions of identity}

In section~\ref{Anosov}, we considered a partition of small diameter $(O_i)_{i=1}^K$ of $M$. We also defined $(\Omega_i)_{i=1}^K$ a corresponding finite open cover of small diameter of $M$. By convolution of the characteristic functions $\mathbf{1}_{O_i}$, we obtain $\displaystyle\mathcal{P}=\left(P_i\right)_{i=1,..K}$ a smooth partition of unity on $M$ i.e. for all $x\in M$,
$$\sum_{i=1}^{K}P_i^2(x)=1.$$
We assume that for all $1\leq i\leq K$, $P_i$ is an element of $\mathcal{C}^{\infty}_c(\Omega_i)$. To this classical partition corresponds a quantum partition of identity of $L^2(M)$. In fact, if $P_i$ denotes the multiplication operator by $P_i(x)$ on $L^2(M)$, then one has
\begin{equation}\label{time0}\sum_{i=1}^{K}P_i^{*}P_i=\text{Id}_{L^2(M)}.\end{equation}

\subsubsection{Refinement of the quantum partition under the Schrödinger flow}
\label{refinementbis}
Like in the classical setting of entropy~(\ref{nKSentropy}), we would like to make a refinement of the quantum partition. To do this refinement, we use the Schrödinger propagation operator $U^t=e^{\frac{\imath t\hbar\Delta}{2}}$. We define  $A(t):=U^{-t}AU^t$, where $A$ is an operator on $L^2(M)$. To fit as much as possible with the metric entropy (see definition~(\ref{nKSentropy}) and Egorov property~(\ref{Egorov})), we define the following operators:
\begin{equation}\label{tau}\tau_{\alpha}=P_{\alpha_{k}}(k\eta)\cdots P_{\alpha_{1}}(\eta)P_{\alpha_0}\end{equation}
and
\begin{equation}\label{pi}\pi_{\beta}=P_{\beta_{-k}}(-k\eta)\cdots P_{\beta_{-2}}(-2\eta)P_{\beta_0}P_{\beta_{-1}}(-\eta),\end{equation}
where $\alpha=(\alpha_0,\cdots,\alpha_k)$ and $\beta=(\beta_{-k},\cdots,\beta_0)$ are finite sequences of symbols such that $\alpha_j\in [1,K]$ and $\beta_{-j}\in [1,K]$. We can remark that the definition of $\pi_{\beta}$ is the analogue for negative times of the definition of $\tau_{\alpha}$. The only difference is that we switch the two first terms $\beta_0$ and $\beta_{-1}$. The reason of this choice will appear later in the application of the quantum uncertainty principle (see equality~(\ref{switch}) in section~\ref{uncert}). One can see that for fixed $k$, using the Egorov property~(\ref{Egorov}), \begin{equation}\label{applegorov}\|P_{\alpha_{k}}(k\eta)\cdots P_{\alpha_{1}}(\eta)P_{\alpha_0}\psi_{\hbar}\|^2\rightarrow\mu(P_{\alpha_{k}}^2\circ g^{k\eta}\times\cdots P_{\alpha_{1}}^2\circ g^{\eta}\times P_{\alpha_0}^2)\ \text{as}\ \hbar\ \text{tends}\ \text{to}\ 0.\end{equation} This last quantity is the one used to compute $h_{KS}(\mu,g^{\eta})$ (with the notable difference that the $P_j$ are here smooth functions instead of characteristic functions: see~(\ref{nKSentropy})). As was discussed in the heuristic of the proof~\ref{heuristic}, we will have to understand for which range of times $k\eta$, the Egorov property can be be applied. In particular, we will study for which range of times, the operator $\tau_{\alpha}$ is a pseudodifferential operator of symbol $P_{\alpha_{k}}\circ g^{k\eta}\times\cdots P_{\alpha_{1}}\circ g^{\eta}\times P_{\alpha_0}$ (see~(\ref{applegorov})). In~\cite{AN2} and~\cite{AKN}, they only considered $k\eta\leq|\log\hbar|/\lambda_{\max}$ where $\lambda_{\max}:=\lim_{t\rightarrow\pm\infty}\frac{1}{t}\log\sup_{\rho\in S^*M}|d_{\rho}g^t|$. This choice was not optimal and in the following, we try to define sequences $\alpha$ for which we can say that $\tau_{\alpha}$ is a pseudodifferential operator.

\subsubsection{Index family adapted to the variation of the unstable Jacobian}
\label{roof}
Let $\alpha=(\alpha_0,\alpha_1,\cdots)$ be a sequence (finite or infinite) of elements of $\{1,\cdots,K\}$ whose length is larger than~$1$. We define a natural shift on these sequences
$$\sigma_+((\alpha_0,\alpha_1,\cdots)):=(\alpha_1,\cdots).$$
For negative times and for $\beta:=(\cdots,\beta_{-1},\beta_0)$, we define the backward shift
$$\sigma_-((\cdots,\beta_{-1},\beta_0)):=(\cdots,\beta_{-1}).$$
In the paper, we will mostly use the symbol $x$ for infinite sequences and reserve $\alpha$ and $\beta$ for finite ones. Then, using notations of section~\ref{Anosov} and as described in section~\ref{partition}, index families depending on the value of the unstable Jacobian can be defined as follows:
\begin{equation}\label{Ih}I^{\eta}(\hbar):=I^{\eta}(T_E(\hbar))=\left\{\left(\alpha_0,\cdots,\alpha_{k}\right):k\geq3,\sum_{i=1}^{k-2}f_+\left(\sigma^i_+\alpha\right)\leq T_E(\hbar)<\sum_{i=1}^{k-1}f_+\left(\sigma^i_+\alpha\right)\right\},\end{equation}
\begin{equation}\label{Kh}K^{\eta}(\hbar):=K^{\eta}(T_E(\hbar))=\left\{\left(\beta_{-k},\cdots,\beta_{0}\right): k\geq3,\sum_{i=1}^{k-2}f_-\left(\sigma_-^{i}\beta\right)\leq T_E(\hbar)<\sum_{i=1}^{k-1}f_-\left(\sigma_-^{i}\beta\right)\right\}.\end{equation}
We underline that we will consider any sequence of the previous type and not only sequences for which $U_{\alpha}$ is not empty. These sets define the maximal sequences for which we can expect to have Egorov property for the corresponding $\tau_{\alpha}$. The sums used to define these sets are in a way a discrete analogue of the integral in the inversion formula~(\ref{inverse}) defined in the introduction\footnote{In the higher dimension case mentioned in the footnote of section~\ref{heuristic}, we should take $(d-1)T_{E}(\hbar)$ (where $d$ is the dimension of $M$) instead of $T_{E}(\hbar)$ in the definition of $I^{\eta}(\hbar)$ and $K^{\eta}(\hbar)$.}. The sums used to define the allowed sequences are in fact Riemann sums (with small parameter $\eta$) corresponding to the integral~(\ref{parameter}). We can think of the time $|\alpha|\eta$ as a stopping time for which property~(\ref{applegorov}) will hold (for a symbol $\tau_{\alpha}$ corresponding to $\alpha$).\\
A good way of thinking of these families of words is by introducing the sets
$$\Sigma_+:=\{1,\cdots,K\}^{\mathbb{N}}\ \text{and}\ \Sigma_-:=\{1,\cdots,K\}^{-\mathbb{N}}.$$
We will see that the sets $I^{\eta}(\hbar)$ (resp. $K^{\eta}(\hbar)$) lead to natural partitions of $\Sigma$ (resp. $\Sigma_-$). In the following, it can be helpful to keep in mind picture~\ref{square}. On this figure, we draw the case $K=4$. The biggest square has sides of length $1$. Each square represents an element of $I^{\eta}(\hbar)$ and each square with sides of length $1/2^{k}$ represents a sequence of length $k+1$ (for $k\geq 0$). If we denote $C(\alpha)$ the square that represents $\alpha$, then we can represent the sequences $\alpha.\gamma$ for each $\gamma$ in $\{1,\cdots,4\}$ by subdividing the square $C(\alpha)$ in $4$ squares of same size. Finally, by definition of $I^{\eta}(\hbar)$, we can remark that if $\alpha.\gamma$ is represented in the subdivision (for $\gamma$ in $\{1,\cdots,4\}$), then $\alpha.\gamma'$ is represented in the subdivision for each $\gamma'$ in $\{1,\cdots,4\}$.
\begin{figure}[htb]
\vspace{4cm}
\centerline{
\setlength{\unitlength}{0.5cm}
\begin{picture}(8,0)
\thicklines
\put(0,0){\line(1,0){8}}
\put(0,2){\line(1,0){8}}
\put(0,4){\line(1,0){8}}
\put(0,6){\line(1,0){8}}
\put(0,8){\line(1,0){8}}
\put(0,0){\line(0,1){8}}
\put(2,0){\line(0,1){8}}
\put(4,0){\line(0,1){8}}
\put(6,0){\line(0,1){8}}
\put(8,0){\line(0,1){8}}
\put(1,0){\line(0,1){2}}
\put(3,0){\line(0,1){2}}
\put(3,4){\line(0,1){2}}
\put(5,6){\line(0,1){2}}
\put(7,2){\line(0,1){2}}
\put(0,1){\line(1,0){4}}
\put(6,3){\line(1,0){2}}
\put(2,5){\line(1,0){2}}
\put(4,7){\line(1,0){2}}
\put(3.5,1){\line(0,1){1}}
\put(3.5,4){\line(0,1){1}}
\put(4.5,7){\line(0,1){1}}
\put(3,1.5){\line(1,0){1}}
\put(3,4.5){\line(1,0){1}}
\put(4,7.5){\line(1,0){1}}
\put(0.1,7.6){\tiny{$C(11)$}}
\put(2.1,7.6){\tiny{$C(12)$}}
\put(0.1,3.6){\tiny{$C(31)$}}
\put(6.1,3.6){\tiny{$C(421)$}}
\end{picture}}
\caption{\label{square}\textit{Refinement of variable size}}
\end{figure}
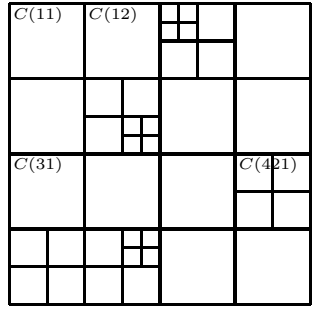
Families of operators can be associated to these families of index: $(\tau_{\alpha})_{\alpha\in I^{\eta}(\hbar)}$ and $(\pi_{\beta})_{\beta\in K^{\eta}(\hbar)}$. One can show that these partitions form quantum partitions of identity (see section~\ref{partition}), i.e.
$$\sum_{\alpha\in I^{\eta}(\hbar)}\tau_{\alpha}^*\tau_{\alpha}=\text{Id}_{L^{2}(M)}\ \text{and}\ \sum_{\beta\in K^{\eta}(\hbar)}\pi_{\beta}^*\pi_{\beta}=\text{Id}_{L^{2}(M)}.$$

\subsection{Symbolic interpretation of semiclassical measures}
\label{symbolic}
Now that we have defined these partitions of variable size, we want to show that they are adapted to compute the entropy of a certain measure with respect to some reparametrized flow associated to the geodesic flow. To do this, we start by giving a symbolic interpretation of the quantum partitions. Recall that we have denoted $\Sigma_+:=\{1,\cdots,K\}^{\mathbb{N}}$. We will also denote $\mathcal{C}_i$ the subset of sequences $(x_n)_{n\in\mathbb{N}}$ such that $x_0=i$. Define also
$$[\alpha_{0},\cdots,\alpha_{k}]:=\mathcal{C}_{\alpha_{0}}\cap\cdots\cap\sigma^{-k}_+\mathcal{C}_{\alpha_{k}},$$
where $\sigma_+$ is the shift $\sigma_+((x_n)_{n\in\mathbb{N}})=(x_{n+1})_{n\in\mathbb{N}}$ (it fits the notations of the previous section). The set $\Sigma_+$ is then endowed with the probability measure (not necessarily $\sigma$-invariant):
$$\mu_{\hbar}^{\Sigma_+}\left(\left[\alpha_{0},\cdots,\alpha_{k}\right]\right)=\mu_{\hbar}^{\Sigma_+}\left(\mathcal{C}_{\alpha_{0}}\cap\cdots\cap\sigma^{-k}_+\mathcal{C}_{\alpha_{k}}\right)=\|P_{\alpha_{k}}(k\eta)\cdots P_{\alpha_{0}}\psi_{\hbar}\|^2.$$
Using the property~(\ref{identity}), it is clear that this definition assures the compatibility conditions to define a probability measure
$$\sum_{\alpha_{k+1}}\mu_{\hbar}^{\Sigma_+}\left(\left[\alpha_{0},\cdots,\alpha_{k+1}\right]\right)=\mu_{\hbar}^{\Sigma_+}\left(\left[\alpha_{0},\cdots,\alpha_{k}\right]\right).$$
Then, we can define the suspension flow, in the sense of Abramov~(section~\ref{sectionAbr}), associated to this probability measure. To do this, the suspension set~(\ref{suspension}) is defined as
\begin{equation}\label{suspset}\overline{\Sigma}_+:=\{\left(x,s\right)\in\Sigma_+\times\mathbb{R}_+:0\leq s<f_+\left(x\right)\}.\end{equation}
Recall that the roof function $f_+$ is defined as $f_+(x):=f_+(x_0,x_1).$ We define a probability measure $\overline{\mu}_{\hbar}^{\overline{\Sigma}_+}$ on $\overline{\Sigma}_+$:
\begin{equation}\label{suspmeas}\overline{\mu}_{\hbar}^{\overline{\Sigma}_+}=\mu_{\hbar}^{\Sigma_+}\times \frac{dt}{\sum_{\alpha\in\{1,\cdots,K\}^2}f_+(\alpha)\|P_{\alpha}\psi_{\hbar}\|^2}=\mu_{\hbar}^{\Sigma_+}\times \frac{dt}{\sum_{\alpha\in\{1,\cdots,K\}^2}f_+(\alpha)\mu_{\hbar}^{\Sigma_+}\left(\left[\alpha\right]\right)}.\end{equation}
The semi-flow~(\ref{special}) associated to $\sigma_+$ is for time $s$:
\begin{equation}\label{suspflow}\overline{\sigma}^s_+\left(x,t\right):=\left(\sigma^{n-1}_+ (x),s+t-\sum_{j=0}^{n-2}f_+\left(\sigma^j_+x\right)\right),\end{equation}
where $n$ is the only integer such that $\displaystyle\sum_{j=0}^{n-2}f_+\left(\sigma^j_+x\right)\leq s+t<\sum_{j=0}^{n-1}f_+\left(\sigma^j_+x\right)$. In the following, we will only consider time $1$ of the flow and its iterates and we will denote $\overline{\sigma}_+:=\overline{\sigma}^1_{+}$.
\begin{rema}It can be underlined that the same procedure holds for the partition $(\pi_{\beta})$. The only differences are that we have to consider $\Sigma_{-}:=\{1,\cdots,K\}^{-\mathbb{N}}$, $\sigma_-((x_n)_{n\leq 0})=(x_{n-1})_{n\leq 0}$ and that the corresponding measure is, for $k\geq 1$,
$$\mu_{\hbar}^{\Sigma_{-}}\left(\left[\beta_{-k},\cdots,\beta_{0}\right]\right)=\mu_{\hbar}^{\Sigma_{-}}\left(\overline{\sigma}_-^{-k}\mathcal{C}_{\beta_{-k}}\cap\cdots\cap\mathcal{C}_{\beta_{0}}\right)=\|P_{\beta_{-k}}(-k\eta)\cdots P_{\beta_{0}}P_{\beta_{-1}}(-\eta)\psi_{\hbar}\|^2.$$
For $k=0$, one should take the only possibility to assure the compatibility condition
$$\mu_{\hbar}^{\Sigma_{-}}\left(\left[\beta_{0}\right]\right)=\sum_{j=1}^K\mu_{\hbar}^{\Sigma_{-}}\left(\left[\beta_{-j},\beta_{0}\right]\right).$$
The definition is quite different from the positive case but in the semiclassical limit, it will not change anything as $P_{\beta_0}$ and $P_{\beta_-1}(-\eta)$ commute. Finally, the `past' suspension set can be defined as
$$\overline{\Sigma}_-:=\{(x,s)\in\Sigma_-\times\mathbb{R}_+:0\leq s<f_-(x)\}.$$
\end{rema}
Now let $\alpha$ be an element of $I^{\eta}(\hbar)$. Define:
\begin{equation}\label{symbpart}\tilde{\mathcal{C}}_{\alpha}:=\mathcal{C}_{\alpha_0}\cap\cdots\cap\sigma^{-k}_+\mathcal{C}_{\alpha_{k}}.\end{equation}
This new family of subsets forms a partition of $\Sigma_+$ (see picture~\ref{square}). Then, a partition $\overline{\mathcal{C}}_{\hbar}^+$ of $\overline{\Sigma}_+$ can be defined starting from the partition $\tilde{\mathcal{C}}$ and $[0,f_+(\alpha)[$. An atom of this suspension partition is an element of the form $\overline{\mathcal{C}}_{\alpha}=\tilde{\mathcal{C}}_{\alpha}\times [0,f_+(\alpha)[$ (see figure $(a)$ of~\ref{f:towers}). For $\overline{\Sigma}^-$ (the suspension set corresponding to $\Sigma_-$), we define an analogous partition $\overline{\mathcal{C}}^{-}_{\hbar}=([\beta]\times[0,f_-(\beta)[)_{\beta\in K^{\eta}(\hbar)}$. Finally, with this interpretation, equality~(\ref{estQentr}) from section~\ref{uncert} (which is just a careful adaptation of the uncertainty principle) can be read as follows:
\begin{equation}\label{Qentrcl}H\left(\overline{\mu}_{\hbar}^{\overline{\Sigma}_+},\overline{\mathcal{C}}_{\hbar}^+\right)+H\left(\overline{\mu}_{\hbar}^{\overline{\Sigma}_-},\overline{\mathcal{C}}^-_{\hbar}\right)\geq\left((1-\epsilon')(1-\epsilon)-c\delta_0\right)|\log\hbar|+C,\end{equation}
where $H$ is defined by~(\ref{defent0}) and $\delta_0$ is some small fixed parameter. To fit as much as possible with the setting of the classical metric entropy, we would like $\overline{\mathcal{C}}_{\hbar}^+$ to be the refinement (under the special flow) of an $\hbar$-independent partition. It is not exactly the case but we can prove the following lemma~(see section~\ref{welladapted} and figure~\ref{f:towers}):
\begin{lemm}\label{adapted} There exists an explicit partition $\overline{\mathcal{C}}_+$ of $\overline{\Sigma}_+$, independent of $\hbar$ such that $\vee_{i=0}^{n_E(\hbar)-1}\overline{\sigma}^{-i}_+\overline{\mathcal{C}}_+$ is a refinement of the partition $\overline{\mathcal{C}}_{\hbar}^+$. Moreover, let $n$ be a fixed positive integer. Then, an atom of the refined partition $\displaystyle\vee_{i=0}^{n-1}\overline{\sigma}^{-i}_+\overline{\mathcal{C}}_+$ is of the form $[\alpha]\times B(\alpha)$, where $\alpha=(\alpha_0,\cdots,\alpha_k)$ is a $k+1$-uple such that $(\alpha_0,\cdots,\alpha_k)$ verifies $n(1-\epsilon)\leq\displaystyle\sum_{j=0}^{k-1}f_+\left(\sigma^j_+\alpha\right)\leq n(1+\epsilon)$ and $B(\alpha)$ is a subinterval of $[0,f_+(\alpha)[$.
\end{lemm}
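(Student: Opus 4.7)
The plan is to construct an explicit $\hbar$-independent finite partition $\overline{\mathcal{C}}_+$ and then analyze its iterated refinements by a direct induction.

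I would take $\overline{\mathcal{C}}_+$ to be the partition whose atoms are $[\gamma_0,\ldots,\gamma_N]\times J_{\gamma_0,\gamma_1}^{(j)}$, with $(\gamma_0,\ldots,\gamma_N)\in\{1,\ldots,K\}^{N+1}$ and the $J_{\gamma_0,\gamma_1}^{(j)}$ a subdivision of $[0,f_+(\gamma_0,\gamma_1)[$ into intervals of width $\nu$; here $N$ and $\nu$ are fixed constants depending only on $\eta,a_0,b_0,K,\epsilon$, chosen so that $N$ exceeds the maximum possible one-step base jump and $\nu<\eta a_0/2$. The induction is on $n$. Writing $\overline{\sigma}_+^i(x,s)=(\sigma_+^{n_i-1}x,\,s+i-F_{n_i-1}(x))$ with $F_k(x):=\sum_{j=0}^{k-1}f_+(\sigma_+^jx)$ and $F_{n_i-1}(x)\le s+i<F_{n_i}(x)$, knowing the $\overline{\mathcal{C}}_+$-atom of $(x,s)$ at time $i$ amounts to prescribing the base substring $x_{n_i-1},\ldots,x_{n_i-1+N}$ together with a $\nu$-interval containing $s+i-F_{n_i-1}(x)$. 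The choice of $N$ ensures that consecutive substrings overlap so that their union is a single cylinder $[\alpha]$ without gaps; the choice of $\nu$ ensures that the conjunction of fiber constraints remains a single subinterval $B(\alpha)\subset[0,f_+(\alpha_0,\alpha_1)[$.

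The sum bound follows by decomposing $\sum_{j=0}^{k-1}f_+(\sigma_+^j\alpha)=F_{n_{n-1}-1}(x)+\sum_{j=0}^{N-1}f_+(\sigma_+^{n_{n-1}-1+j}\alpha)$: the first summand lies in $[n-1-\epsilon/2,n-1+\epsilon/2]$ by the definition of $n_{n-1}$ and $f_+\le\epsilon/2$, while the second lies in $[N\eta a_0,N\eta b_0]$, which the choice of $N$ constrains close to $1$. The total thus lies in $[n-\epsilon,n+\epsilon]\subset[n(1-\epsilon),n(1+\epsilon)]$ for every $n\ge 1$, giving the second conclusion of the lemma.

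Finally, taking $n=n_E(\hbar)$, the lower bound $\sum_{j=0}^{k-1}f_+(\sigma_+^j\alpha)\ge n_E(\hbar)-\epsilon$ together with $f_+(\alpha_0,\alpha_1)\le\epsilon/2$ gives $\sum_{i=1}^{k-1}f_+(\sigma_+^i\alpha)\ge n_E(\hbar)-3\epsilon/2>T_E(\hbar)$ as soon as $n_E(\hbar)>3/(2\epsilon)$, i.e.\ for $\hbar$ small enough; hence $\alpha$ admits a unique prefix in $I^\eta(\hbar)$, and the refined atom $[\alpha]\times B(\alpha)$ is contained in the associated atom of $\overline{\mathcal{C}}_\hbar^+$, proving the refinement claim. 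The main obstacle is the simultaneous reconciliation of the overlap condition (forcing $N$ not too small) with the two-sided sum bound (forcing $N\eta$ close to $1$ from both sides): this is what makes the quantitative choice $(2+b_0/a_0)b_0\eta\le\epsilon/2$ from section~\ref{Anosov} essential, and is also the place where the dimension-two hypothesis enters, via the uniform bounds $a_0\le -\log J^u\le b_0$.
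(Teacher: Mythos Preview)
Your construction has a genuine gap in the two-sided sum bound. The overlap condition forces $N$ to exceed the maximal one-step base jump, which is of order $1/(\eta a_0)$ (since $f_+\ge \eta a_0$). With such an $N$, the tail sum $\sum_{j=0}^{N-1} f_+(\sigma_+^{n_{n-1}-1+j}\alpha)$ can be as large as $N\eta b_0 \gtrsim b_0/a_0$, which is not close to $1$ unless $a_0\approx b_0$. Your claim that ``the choice of $N$ constrains $[N\eta a_0,N\eta b_0]$ close to $1$'' therefore fails: no single integer $N$ can satisfy both $N\eta a_0\ge 1-\epsilon/2$ and $N\eta b_0\le 1+\epsilon/2$ when $b_0/a_0$ is bounded away from $1$. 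The smallness condition $(2+b_0/a_0)b_0\eta\le\epsilon/2$ only makes $\eta$ small; it does nothing to the ratio $b_0/a_0$, so it cannot rescue this step. Concretely, for $n=1$ your atom has $k=N$ and the sum $\sum_{j=0}^{N-1}f_+(\sigma_+^j\alpha)$ can range over an interval of width $N\eta(b_0-a_0)\gtrsim (b_0-a_0)/a_0$, which is not $\le 2\epsilon$.

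This is exactly why the paper does \emph{not} use fixed-length cylinders. Its partition $\overline{\mathcal{C}}_+$ is built from the \emph{variable-length} words $\gamma\in I^\eta(1)$, whose length $k(\gamma)$ adapts to the local values of $f_+$ so that $\sum_{i=1}^{k(\gamma)-2}f_+(\sigma_+^i\gamma)\le 1<\sum_{i=1}^{k(\gamma)-1}f_+(\sigma_+^i\gamma)$ always holds. The fiber intervals $I_{p-2}(\gamma)$ are chosen so that $\overline{\sigma}_+$ acts by a \emph{definite} power $\sigma_+^{p-1}$ on each atom, and Lemma~\ref{tinylemma} bounds the overlap $k(\gamma)-k'(\gamma)$ by $b_0/a_0+1$. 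The condition $(2+b_0/a_0)b_0\eta\le\epsilon/2$ then enters only to bound the \emph{defect} of each step below $\epsilon$, not to force $N\eta$ into a short interval. Your lower bound and the refinement claim can be made to work, but without the upper bound the lemma is unusable in Section~\ref{commutativity}, where both inequalities in~(\ref{n0}) are needed.
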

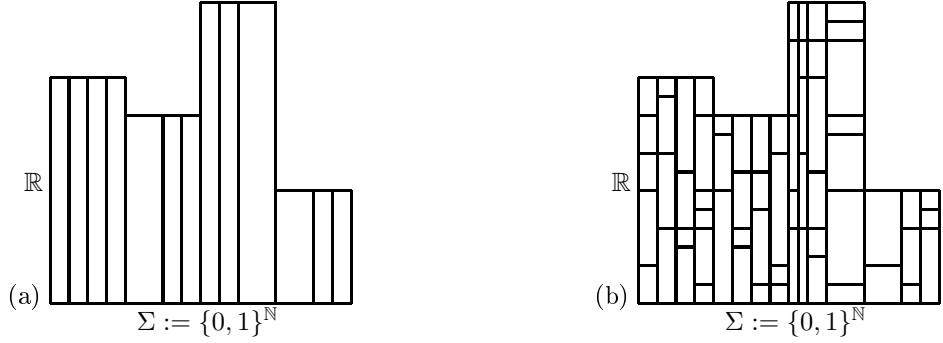
\begin{figure}[htb]
\vspace{4cm}
\centerline{
\subfigure(a){\setlength{\unitlength}{0.5cm}
\begin{picture}(8,0)
\thicklines
\put(0,0){\line(1,0){8}}
\put(0,6){\line(1,0){2}}
\put(2,5){\line(1,0){2}}
\put(4,8){\line(1,0){2}}
\put(6,3){\line(1,0){2}}
\put(0,0){\line(0,1){6}}
\put(0.5,0){\line(0,1){6}}
\put(1,0){\line(0,1){6}}
\put(1.5,0){\line(0,1){6}}
\put(2,0){\line(0,1){6}}
\put(3,0){\line(0,1){5}}
\put(3.5,0){\line(0,1){5}}
\put(4,0){\line(0,1){8}}
\put(4.5,0){\line(0,1){8}}
\put(5,0){\line(0,1){8}}
\put(6,0){\line(0,1){8}}
\put(7,0){\line(0,1){3}}
\put(7.5,0){\line(0,1){3}}
\put(8,0){\line(0,1){3}}
\put(2.3,-0.7){$\Sigma:=\{0,1\}^{\mathbb{N}}$}
\put(-0.7,3){$\mathbb{R}$}
\end{picture}}
\hspace{3cm}
\subfigure(b){\setlength{\unitlength}{0.5cm}
\begin{picture}(8,0)
\thicklines
\put(0,0){\line(1,0){8}}
\put(0,1){\line(1,0){0.5}}
\put(0,3){\line(1,0){0.5}}
\put(0,4){\line(1,0){1}}
\put(0,5){\line(1,0){0.5}}
\put(0.5,2){\line(1,0){1}}
\put(0.5,5.5){\line(1,0){0.5}}
\put(1,2){\line(1,0){1}}
\put(1,1.5){\line(1,0){0.5}}
\put(1,3.5){\line(1,0){0.5}}
\put(1.5,2.5){\line(1,0){0.5}}
\put(1.5,3){\line(1,0){0.5}}
\put(1.5,0.5){\line(1,0){0.5}}
\put(1.5,5){\line(1,0){0.5}}
\put(0,6){\line(1,0){2}}
\put(2,5){\line(1,0){2}}
\put(2,3){\line(1,0){0.5}}
\put(2,4.5){\line(1,0){0.5}}
\put(2.5,3.5){\line(1,0){0.5}}
\put(2.5,1.5){\line(1,0){0.5}}
\put(2.5,2){\line(1,0){0.5}}
\put(3,2.5){\line(1,0){0.5}}
\put(3,0.5){\line(1,0){1}}
\put(3,3.5){\line(1,0){0.5}}
\put(3.5,4){\line(1,0){0.5}}
\put(3.5,1){\line(1,0){0.5}}
\put(4,8){\line(1,0){2}}
\put(4,5){\line(1,0){0.25}}
\put(4,3){\line(1,0){0.25}}
\put(4.25,4){\line(1,0){0.25}}
\put(4,2){\line(1,0){1}}
\put(4,7){\line(1,0){2}}
\put(4.25,6){\line(1,0){0.75}}
\put(4.5,1.25){\line(1,0){0.5}}
\put(4.5,3.5){\line(1,0){0.5}}
\put(5,0.5){\line(1,0){1}}
\put(5,3){\line(1,0){1}}
\put(5,4.5){\line(1,0){1}}
\put(5,7.5){\line(1,0){1}}
\put(5,5){\line(1,0){1}}
\put(6,3){\line(1,0){2}}
\put(6,1){\line(1,0){1}}
\put(7,2){\line(1,0){1}}
\put(7,0.5){\line(1,0){0.5}}
\put(7.5,2.5){\line(1,0){0.5}}
\put(0,0){\line(0,1){6}}
\put(0.5,0){\line(0,1){6}}
\put(1,0){\line(0,1){6}}
\put(1.5,0){\line(0,1){6}}
\put(2,0){\line(0,1){6}}
\put(2.5,0){\line(0,1){5}}
\put(3,0){\line(0,1){5}}
\put(3.5,0){\line(0,1){5}}
\put(4,0){\line(0,1){8}}
\put(4.25,0){\line(0,1){8}}
\put(4.5,0){\line(0,1){8}}
\put(5,0){\line(0,1){8}}
\put(6,0){\line(0,1){8}}
\put(7,0){\line(0,1){3}}
\put(7.5,0){\line(0,1){3}}
\put(8,0){\line(0,1){3}}
\put(2.3,-0.7){$\Sigma:=\{0,1\}^{\mathbb{N}}$}
\put(-0.7,3){$\mathbb{R}$}
\end{picture}}}
\caption{\label{f:towers}\textit{The basis of each tower corresponds to the set of sequences starting with the letters $(\alpha_0,\alpha_1)$, where $\alpha_0$ and $\alpha_1$ are in $\{0,1\}$ and each tower corresponds to the set $\mathcal{C}_{\alpha_0,\alpha_1}\times [0,f_+(\alpha_0,\alpha_1))$. The set $\overline{\Sigma}_+$ admits several partitions. The figure on the left corresponds to the partition $\overline{\mathcal{C}}_{\hbar}^+$ of $\overline{\Sigma}_+$. The figure on the right corresponds to the refinement of the fixed partition $\overline{\mathcal{C}}_+$ under $\overline{\sigma}_+$, i.e. $\displaystyle\vee_{i=0}^{n_{E}(\hbar)-1}\overline{\sigma}^{-i}_+\overline{\mathcal{C}}_+$.
}}
\end{figure}
This lemma is crucial as it allows to interpret an inequality on the quantum entropy as an inequality on classical entropy. In fact, applying basic properties of $H$ between two partitions (see section~\ref{KSentropy} and figure~\ref{f:towers}), one finds that
\begin{equation}\label{refinement}H\left(\overline{\mu}_{\hbar}^{\overline{\Sigma}_+},\overline{\mathcal{C}}_{\hbar}^+\right)\leq H\left(\overline{\mu}_{\hbar}^{\overline{\Sigma}_+},\vee_{i=0}^{n_E(\hbar)-1}\overline{\sigma}^{-i}_+\overline{\mathcal{C}}_+\right)= H_{n_E(\hbar)}\left(\overline{\mu}_{\hbar}^{\overline{\Sigma}_+},\overline{\sigma}_+,\overline{\mathcal{C}}_+\right).\end{equation}
One can obtain the same lemma for the `past' shift and in particular, it gives an $\hbar$-independent partition $\overline{\mathcal{C}}_-$. To conclude this symbolic interpretation of quantum entropy, with natural notations, inequality~(\ref{Qentrcl}) together with~(\ref{refinement}) gives the following proposition
\begin{prop}\label{mainUP} With the previous notations, one has the following inequality:
\begin{equation}\label{inequality}\frac{1}{n_E(\hbar)}\left(H_{n_E(\hbar)}\left(\overline{\mu}_{\hbar}^{\overline{\Sigma}_+},\overline{\sigma}_+,\overline{\mathcal{C}}_+\right)+H_{n_E(\hbar)}\left(\overline{\mu}_{\hbar}^{\overline{\Sigma}_{-}},\overline{\sigma}_-,\overline{\mathcal{C}}_{-}\right)\right)\geq\left(1-\epsilon-c\delta_0\right)+\frac{C}{n_{E}(\hbar)}.\end{equation}
\end{prop}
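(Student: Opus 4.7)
The plan is to derive Proposition~\ref{mainUP} by a direct combination of the entropic uncertainty estimate~(\ref{Qentrcl}) with the refinement property of Lemma~\ref{adapted}. Both of these ingredients are separately assembled earlier in the paper: the first via the entropic uncertainty principle (Theorem~\ref{uncertainty}) applied to the partitions $(\tau_\alpha)_{\alpha\in I^{\eta}(\hbar)}$ and $(\pi_\beta)_{\beta\in K^{\eta}(\hbar)}$, together with the symbolic reinterpretation of Section~\ref{symbolic}; the second as a statement about the suspension towers $\overline{\Sigma}_\pm$. The role of the proposition is to repackage the quantum estimate in a form directly comparable to the Kolmogorov-Sinai entropy of the reparametrized special flow, so that it can later be fed into Abramov's theorem.

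First, Lemma~\ref{adapted} asserts that $\vee_{i=0}^{n_E(\hbar)-1}\overline{\sigma}_+^{-i}\overline{\mathcal{C}}_+$ refines $\overline{\mathcal{C}}_\hbar^+$; monotonicity of $H(\mu,\cdot)$ under refinement (recalled in Section~\ref{KSentropy}) therefore gives
$$H\!\left(\overline{\mu}_{\hbar}^{\overline{\Sigma}_+},\overline{\mathcal{C}}_{\hbar}^+\right)\leq H\!\left(\overline{\mu}_{\hbar}^{\overline{\Sigma}_+},\vee_{i=0}^{n_E(\hbar)-1}\overline{\sigma}_+^{-i}\overline{\mathcal{C}}_+\right)=H_{n_E(\hbar)}\!\left(\overline{\mu}_{\hbar}^{\overline{\Sigma}_+},\overline{\sigma}_+,\overline{\mathcal{C}}_+\right),$$
and the symmetric statement of Lemma~\ref{adapted} for the backward shift yields the same inequality with $+$ replaced by $-$. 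These are precisely~(\ref{refinement}) and its analogue for the past partition.

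Substituting both bounds into~(\ref{Qentrcl}) and dividing by $n_E(\hbar)$ gives
$$\frac{1}{n_E(\hbar)}\!\left(H_{n_E(\hbar)}\!\left(\overline{\mu}_{\hbar}^{\overline{\Sigma}_+},\overline{\sigma}_+,\overline{\mathcal{C}}_+\right)+H_{n_E(\hbar)}\!\left(\overline{\mu}_{\hbar}^{\overline{\Sigma}_-},\overline{\sigma}_-,\overline{\mathcal{C}}_-\right)\right)\geq\frac{\bigl((1-\epsilon')(1-\epsilon)-c\delta_0\bigr)|\log\hbar|+C}{n_E(\hbar)}.$$
From the definition $n_E(\hbar)=[(1-\epsilon')|\log\hbar|]$, one has $\frac{(1-\epsilon')|\log\hbar|}{n_E(\hbar)}\geq 1$ and $\frac{|\log\hbar|}{n_E(\hbar)}=\frac{1}{1-\epsilon'}+O(|\log\hbar|^{-1})$. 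Hence the principal term on the right-hand side is at least $1-\epsilon$, the error term involving $\delta_0$ is bounded by $c\delta_0/(1-\epsilon')$ up to $O(|\log\hbar|^{-1})$, and this $O(|\log\hbar|^{-1})$ correction is absorbed into the $C/n_E(\hbar)$ tail. Relabelling $c/(1-\epsilon')$ as $c$ (harmless since $\epsilon'$ is fixed from the start) produces exactly the announced inequality.

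There is no genuine obstacle at this final step: the hard work is entirely upstream, in verifying the hypotheses of Theorem~\ref{uncertainty} to establish~(\ref{Qentrcl}) (this is where the semiclassical analysis and the precise choice of Ehrenfest time $n_E(\hbar)$ enter, via the operator-norm estimates on $\tau_\alpha U^{\eta}\pi_\beta^* O_\beta$), and in constructing the fixed partitions $\overline{\mathcal{C}}_\pm$ satisfying Lemma~\ref{adapted}. The present statement is purely the bookkeeping that converts a quantum entropic bound into a lower bound on the classical Kolmogorov-Sinai $n_E(\hbar)$-entropy of the symbolic special flow $\overline{\sigma}_\pm$.
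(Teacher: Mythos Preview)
Your proof is correct and follows exactly the paper's own approach: combine the entropic uncertainty estimate~(\ref{Qentrcl}) with the refinement inequality~(\ref{refinement}) coming from Lemma~\ref{adapted}, then divide by $n_E(\hbar)$. The paper states this in one line without spelling out the arithmetic with $n_E(\hbar)=[(1-\epsilon')|\log\hbar|]$; your explicit handling of that step and the relabelling of constants is a faithful unpacking of what the paper leaves implicit.
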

The quantum entropic uncertainty principle gives an information on the entropy of a special flow. Now, we would like to let $\hbar$ tends to $0$ to find a lower on the metric entropy of a limit measure (that we will precise in section~\ref{subad}) with respect to $\overline{\sigma}_+$. However, both $n_E(\hbar)$ and $\mu_{\hbar}$ depend on $\hbar$ and we have to be careful before passing to the semiclassical limit.

\subsection{Subadditivity of the entropy}

\label{subad}

The Egorov property~(\ref{Egorov}) implies that $\mu_{\hbar}^{\Sigma_+}$ tends to a measure $\mu^{\Sigma_+}$ on $\Sigma_+$ (as $\hbar$ tends to $0$) defined as follows:
\begin{equation}\label{limit}\mu^{\Sigma_+}\left(\left[\alpha_{0},\cdots,\alpha_{k}\right]\right)=\mu\left(P_{\alpha_k}^2\circ g^{k\eta}\times\cdots\times P_{\alpha_0}^2\right),\end{equation}
where $k$ is a fixed integer. Using the property of partition, this defines a probability measure on $\Sigma_+$. To this probability measure corresponds a probability measure $\overline{\mu}^{\overline{\Sigma}_+}$ on the suspension set $\overline{\Sigma}_+$. It is an immediate corollary that $\overline{\mu}^{\overline{\Sigma}_+}$ is the limit of the probability measure $\overline{\mu}_{\hbar}^{\overline{\Sigma}_+}$. Moreover, using Egorov one more time, one can check that the measure $\mu^{\Sigma_+}$ is $\sigma_+$-invariant and using results about special flows~\cite{CFS}, $\overline{\mu}^{\overline{\Sigma}_+}$ is $\overline{\sigma}_+$-invariant. The same works for $\mu_{\hbar}^{\Sigma_{-}}$ and $\overline{\mu}_{\hbar}^{\overline{\Sigma}_{-}}$.
\begin{rema} In the following, we will often prove properties in the case of $\Sigma_+$. The proofs are the same in the case of $\Sigma_{-}$.\end{rema}
As $n_E(\hbar)$ and $\mu_{\hbar}$ depend both on $\hbar$, we cannot let $\hbar$ tend to $0$ if we want to keep an information about the metric entropy. In fact, the left quantity in~(\ref{inequality}) does not tend a priori to the Kolmogorov-Sinai entropy. We want to proceed as in the classical case (see~(\ref{classicalsubad})) and prove a subadditivity property. This will allow to replace $n_E(\hbar)$ by a fixed $n_0$ (see below) in the left hand side of~(\ref{inequality}). This is done with the following theorem that will be proved in section~\ref{commutativity}:
\begin{theo} \label{subadditivity}Let $\overline{\mathcal{C}}$ be the partition of lemma~(\ref{adapted}). There exists a function $R(n_0,\hbar)$ on $\mathbb{N}\times(0,1]$ such that
$$\forall n_0\in\mathbb{N},\ \ \ \ \lim_{\hbar\rightarrow 0}|R(n_0,\hbar)|=0.$$
Moreover, for any $\hbar\in(0,1]$ and any $n_0,m\in\mathbb{N}$ such that $n_0+m\leq n_E(\hbar)$, one has
$$H_{n_0+m}\left(\overline{\mu}_{\hbar}^{\overline{\Sigma}_+},\overline{\sigma}_+,\overline{\mathcal{C}}_+\right)\leq H_{n_0}\left(\overline{\mu}_{\hbar}^{\overline{\Sigma}_+},\overline{\sigma}_+,\overline{\mathcal{C}}_+\right)+H_{m}\left(\overline{\mu}_{\hbar}^{\overline{\Sigma}_+},\overline{\sigma}_+,\overline{\mathcal{C}}_+\right)+R(n_0,\hbar).$$
The same holds for $\Sigma_{-}$.
\end{theo}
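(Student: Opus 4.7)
The plan is to combine a \emph{classical} non-invariant subadditivity inequality with a \emph{quantum} commutator estimate that exploits the eigenfunction property of $\psi_{\hbar}$. In the classical situation, subadditivity $H_{n_0+m}\le H_{n_0}+H_m$ rests on $T$-invariance of $\mu$; here $\overline{\mu}_{\hbar}^{\overline{\Sigma}_+}$ is only approximately $\overline{\sigma}_+$-invariant, and $R(n_0,\hbar)$ will measure the quantum defect.

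\textbf{Step 1 (non-invariant subadditivity).} The identity $\vee_{i=0}^{n_0+m-1}\overline{\sigma}_+^{-i}\overline{\mathcal{C}}_+=\bigl(\vee_{i=0}^{n_0-1}\overline{\sigma}_+^{-i}\overline{\mathcal{C}}_+\bigr)\vee\overline{\sigma}_+^{-n_0}\bigl(\vee_{i=0}^{m-1}\overline{\sigma}_+^{-i}\overline{\mathcal{C}}_+\bigr)$, together with the general subadditivity $H(\mu,P\vee Q)\le H(\mu,P)+H(\mu,Q)$ and the change of variables $H(\mu,T^{-n}Q)=H(T^n\sharp\mu,Q)$, immediately yields
\[
H_{n_0+m}(\overline{\mu}_{\hbar}^{\overline{\Sigma}_+},\overline{\sigma}_+,\overline{\mathcal{C}}_+)\le H_{n_0}(\overline{\mu}_{\hbar}^{\overline{\Sigma}_+},\overline{\sigma}_+,\overline{\mathcal{C}}_+)+H_m(\overline{\sigma}_+^{n_0}\sharp\overline{\mu}_{\hbar}^{\overline{\Sigma}_+},\overline{\sigma}_+,\overline{\mathcal{C}}_+),
\]
without any invariance assumption. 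It suffices to bound $|H_m(\overline{\sigma}_+^{n_0}\sharp\overline{\mu}_{\hbar}^{\overline{\Sigma}_+},\overline{\sigma}_+,\overline{\mathcal{C}}_+)-H_m(\overline{\mu}_{\hbar}^{\overline{\Sigma}_+},\overline{\sigma}_+,\overline{\mathcal{C}}_+)|$ by a quantity $R(n_0,\hbar)$ with $R(n_0,\hbar)\to 0$ for fixed $n_0$.

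\textbf{Step 2 (reduction and eigenfunction identity).} By Lemma~\ref{adapted}, the atoms of $\vee_{i=0}^{m-1}\overline{\sigma}_+^{-i}\overline{\mathcal{C}}_+$ are of the form $[\alpha]\times B(\alpha)$ with $\alpha=(\alpha_0,\ldots,\alpha_k)$ satisfying $m(1-\epsilon)\le\sum_{j=0}^{k-1}f_+(\sigma_+^j\alpha)\le m(1+\epsilon)$; since $f_+\ge\eta a_0$ and $m\le n_E(\hbar)$, the integer $k$ is $\le C|\log\hbar|$ and the cardinality of the family is at most $\hbar^{-C_0}$ for an explicit constant $C_0$. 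Using the modulus of continuity $|x\log x-y\log y|\le\omega(|x-y|)$ with $\omega(t)=t(1+|\log t|)$, the entropy difference is bounded by $\hbar^{-C_0}\omega(\Delta_\hbar)$, where $\Delta_\hbar$ is the maximal per-atom discrepancy. Unwrapping the suspension (the factor $|B(\alpha)|$ cancels up to a bounded Abramov normalization and to boundary remainders of size $O(\eta)$ at the roof), it suffices to estimate $|\sigma_+^{n_0}\sharp\mu_{\hbar}^{\Sigma_+}([\alpha])-\mu_{\hbar}^{\Sigma_+}([\alpha])|$. Set $\tau_\alpha:=P_{\alpha_k}(k\eta)\cdots P_{\alpha_0}$ and $Q_\beta:=P_{\beta_{n_0-1}}((n_0-1)\eta)\cdots P_{\beta_0}$ for $\beta\in\{1,\ldots,K\}^{n_0}$; then $\sum_\beta Q_\beta^*Q_\beta=\mathrm{Id}$. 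The eigenfunction relation $U^{n_0\eta}\psi_{\hbar}=e^{-in_0\eta/(2\hbar)}\psi_{\hbar}$ gives $\langle\psi_{\hbar},A\psi_{\hbar}\rangle=\langle\psi_{\hbar},U^{-n_0\eta}AU^{n_0\eta}\psi_{\hbar}\rangle$, hence with $M:=U^{-n_0\eta}\tau_\alpha^*\tau_\alpha U^{n_0\eta}$ one finds $\mu_{\hbar}^{\Sigma_+}([\alpha])=\langle\psi_{\hbar},M\psi_{\hbar}\rangle$ and $\sigma_+^{n_0}\sharp\mu_{\hbar}^{\Sigma_+}([\alpha])=\sum_\beta\langle\psi_{\hbar},Q_\beta^*MQ_\beta\psi_{\hbar}\rangle$. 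The algebraic identity $M-\sum_\beta Q_\beta^*MQ_\beta=\sum_\beta Q_\beta^*[Q_\beta,M]$ then yields
\[
\mu_{\hbar}^{\Sigma_+}([\alpha])-\sigma_+^{n_0}\sharp\mu_{\hbar}^{\Sigma_+}([\alpha])=\sum_\beta\bigl\langle\psi_{\hbar},Q_\beta^*[Q_\beta,M]\psi_{\hbar}\bigr\rangle.
\]

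\textbf{Main obstacle.} It remains to bound $\|[Q_\beta,M]\|_{L^2\to L^2}=O(\hbar^\nu)$ with $\nu>C_0$, uniformly in $\alpha\in I^\eta(\hbar)$ and $\beta$. The operator $Q_\beta$ is a fixed-time product handled by the standard Egorov theorem, but $M$ is built from $\tau_\alpha$, a product of operators propagated up to the Ehrenfest time, where the naive first-order semiclassical expansion breaks down. The precise stopping-time condition $\sum f_+(\sigma_+^j\alpha)\le T_E(\hbar)$ defining $I^\eta(\hbar)$ is designed exactly so that $\tau_\alpha$ belongs to the anisotropic pseudodifferential class developed in Section~\ref{bigpdotheo}, for which such commutator bounds are available; the positive margin $\nu-C_0>0$ is obtained from the two small parameters $\epsilon,\epsilon'$ built into $T_E(\hbar)$ and $n_E(\hbar)$. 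Summing $K^{n_0}$ commutator pairings per atom and the $\hbar^{-C_0}$ atoms, the modulus-of-continuity bound then produces $R(n_0,\hbar)=O(K^{n_0}\hbar^{\nu-C_0}|\log\hbar|)$, which tends to $0$ for each fixed $n_0$. The statement for $\overline{\Sigma}_-$ follows by the same argument, with a minor cosmetic modification reflecting the switch between $P_{\beta_0}$ and $P_{\beta_{-1}}(-\eta)$ in the definition of $\pi_\beta$.
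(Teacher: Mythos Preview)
Your Step~1 is fine and is exactly how the paper begins. The problem is that you split in the wrong direction. You write
\[
H_{n_0+m}\le H_{n_0}+H_m(\overline{\sigma}_+^{\,n_0}\sharp\overline{\mu}_{\hbar}^{\overline{\Sigma}_+}),
\]
so your error term is the difference of two \emph{$m$-refined} entropies. The $m$-refined partition has on the order of $K^{m(1+\epsilon)/(a_0\eta)}\asymp\hbar^{-C_0}$ atoms with $C_0\sim\log K/(a_0\eta)$, a constant fixed by the partition and the dynamics and \emph{not} adjustable via $\epsilon,\epsilon'$. On the other hand, the commutator/Egorov machinery of Section~\ref{bigpdotheo} produces operators in the class $S^{-\infty,0}_\nu$ with $\nu=\tfrac{1-\epsilon'+4\epsilon}{2}$, and the best available commutator gain is $\hbar^{1-2\nu}=\hbar^{\epsilon'-4\epsilon}$, an arbitrarily \emph{small} positive power. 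So your claimed margin ``$\nu>C_0$'' cannot be arranged: making $\epsilon,\epsilon'$ small makes $1-2\nu$ smaller, not larger, while $C_0$ does not move. The sum of $\hbar^{-C_0}$ per-atom discrepancies of size $\hbar^{1-2\nu}$ blows up, and the modulus-of-continuity argument for $x\log x$ does not rescue this.

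The paper avoids this entirely by splitting the other way,
\[
H_{n_0+m}\le H_{m}+H_{n_0}(\overline{\sigma}_+^{\,m}\sharp\overline{\mu}_{\hbar}^{\overline{\Sigma}_+}),
\]
so that the entropy to be compared is $H_{n_0}$, taken over a partition whose number of atoms depends only on $n_0$ and \emph{not} on $\hbar$. Then a per-atom discrepancy of size $\mathcal{O}(\hbar^{(1-2\nu)/6})$ (Proposition~\ref{generalcase}), uniform in $m\le n_E(\hbar)$, is enough: with finitely many atoms, continuity of $x\log x$ gives $R(n_0,\hbar)\to0$ for fixed $n_0$. The price is that the \emph{shift} is now by the large time $m$, and that is precisely where the delicate long-time commutator estimate (Theorem~\ref{comm2}, proved via the midpoint splitting and the cutoffs $Q_{\gamma_{k-1},\ldots,\gamma_l}$) is needed. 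Your commutator $[Q_\beta,M]$ would in any case require that same analysis (since $M$ carries $\tau_\alpha^*\tau_\alpha$ at Ehrenfest scale), so you do not save work by shifting only by $n_0$; you only incur the extra, fatal combinatorial factor. The fix is simply to swap the roles of $n_0$ and $m$ in Step~1 and then invoke Proposition~\ref{generalcase} for the finitely many atoms of $\vee_{i=0}^{n_0-1}\overline{\sigma}_+^{-i}\overline{\mathcal{C}}_+$.
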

This theorem says that the entropy satisfies almost the subadditivity property (see~(\ref{classicalsubad})) for time lower than the Ehrenfest time. It is an analogue of a theorem from~\cite{AN2} (proposition $2.8$) except that we have taken into account the fact that the unstable jacobian varies on the surface and that we can make our semiclassical analysis for larger time than in~\cite{AN2}. The proof of this theorem is the object of section~\ref{commutativity} and~\ref{bigpdotheo} (where semiclassical analysis for 'local Ehrenfest time' is performed). Then, one can apply the standard argument for subadditive sequences. Let $n_0$ be a fixed integer in $\mathbb{N}$ and write the euclidian division $n_E(\hbar)=qn_0+r$ with $r<n_0$. The previous theorem then implies
$$\frac{H_{n_E(\hbar)}\left(\overline{\mu}_{\hbar}^{\overline{\Sigma}_+},\overline{\sigma}_+,\overline{\mathcal{C}}_+\right)}{n_E(\hbar)}\leq\frac{H_{n_0}\left(\overline{\mu}_{\hbar}^{\overline{\Sigma}_+},\overline{\sigma}_+,\overline{\mathcal{C}}_+\right)}{n_0}+\frac{H_{r}\left(\overline{\mu}_{\hbar}^{\overline{\Sigma}_+},\overline{\sigma}_+,\overline{\mathcal{C}}_+\right)}{n_E(\hbar)}+\frac{R(n_0,\hbar)}{n_0}.$$
As $r$ stays uniformly bounded in $n_0$, the inequality~(\ref{inequality}) becomes
\begin{equation}\label{inequality2}
\frac{1}{n_0}\left(H_{n_0}\left(\overline{\mu}_{\hbar}^{\overline{\Sigma}_+},\overline{\sigma}_+,\overline{\mathcal{C}}_+\right)+H_{n_0}\left(\overline{\mu}_{\hbar}^{\overline{\Sigma}_{-}},\overline{\sigma}_{-},\overline{\mathcal{C}}_{-}\right)\right)\geq\left(1-\epsilon-c\delta_0\right)+\frac{C(n_0)}{n_{E}(\hbar)}-2\frac{R(n_0,\hbar)}{n_0}.
\end{equation}

\subsection{Application of the Abramov theorem}
\label{applabramov}
Using inequality~(\ref{inequality2}), we can conclude using Abramov theorem~(\ref{Abramov}). Making $\hbar$ tend to $0$, one finds that (as was mentioned at the beginning of~\ref{subad})
$$\frac{1}{n_0}\left(H_{n_0}\left(\overline{\mu}^{\overline{\Sigma}_+},\overline{\sigma}_+,\overline{\mathcal{C}}_+\right)+H_{n_0}\left(\overline{\mu}^{\overline{\Sigma}_{-}},\overline{\sigma}_-,\overline{\mathcal{C}}_{-}\right)\right)\geq\left(1-\epsilon-c\delta_0\right).$$
The Abramov theorem holds for automorphisms so one can look at the natural extension of $(\Sigma_+,\sigma_+)$ and $(\Sigma_-,\sigma_-)$. To do this, we introduce $\Sigma'=\{1,\cdots,K\}^{\mathbb{Z}}$ and $\sigma'((x_n)_{n\in\mathbb{Z}}):=(x_{n+1})_{n\in\mathbb{Z}}$. With these notations, the natural extension of $(\Sigma_+,\sigma_+)$ is $(\Sigma',\sigma')$ and the one of $(\Sigma_-,\sigma_-)$ is $(\Sigma',\sigma'^{-1})$. We define then two associated suspension sets
$$\overline{\Sigma}'_+:=\{(x,s)\in\Sigma\times\mathbb{R}_+:0\leq s<f(x_0,x_1)\}\ \text{and}\ \overline{\Sigma}'_-:=\{(x,s)\in\Sigma\times\mathbb{R}_+:0\leq s<f(x_{-1},x_0)\}.$$ 
We also denote $\overline{\sigma}'_+$ (resp. $\overline{\sigma}'_-$) the suspension flow on $\overline{\Sigma}'_+$ (resp. $\overline{\Sigma}'_-$) associated to the automorphism $\sigma'$ (resp. $\sigma'^{-1}$). Finally, we underline that $\overline{\mathcal{C}}_+$ (resp. $\overline{\mathcal{C}}_-$) can be viewed as partitions of the set $\overline{\Sigma}'_+$ (resp. $\overline{\Sigma}'_-$). This discussion allows us to derive that
\begin{equation}\label{lowerboundbis}\frac{1}{n_0}\left(H_{n_0}\left(\overline{\mu}^{\overline{\Sigma}'_+},\overline{\sigma}'_+,\overline{\mathcal{C}}_+\right)+H_{n_0}\left(\overline{\mu}^{\overline{\Sigma}'_-},\overline{\sigma}'_-,\overline{\mathcal{C}}_{-}\right)\right)\geq\left(1-\epsilon-c\delta_0\right).\end{equation}
In view of section~\ref{partition}, we have an exact expression for $\mathcal{C}$ in terms of the functions $(P_i)_i$ (see proof of lemma~\ref{adapted}). The measure $\overline{\mu}^{\overline{\Sigma}'_+}$ (resp. $\overline{\mu}^{\overline{\Sigma}'_-}$) is $\overline{\sigma}'_+$-invariant (resp. $\overline{\sigma}'_-$-invariant) as $\mu^{\Sigma}$ is $\sigma$-invariant (resp. $\sigma^{-1}$-invariant)~\cite{CFS}. In the previous inequality, there is still one notable difference with the metric entropy: we consider smooth partitions of identity $(P_i)_i$ (as it was necessary to make the semiclassical analysis). To return to the classical case, the procedure of~\cite{AN2} can be adapted using the exact form of the partition $\overline{\mathcal{C}}$ (see lemma~\ref{adapted}). Recall that each $P_i$ is an element of $\mathcal{C}^{\infty}_c(\Omega_i)$ and that we considered a partition $M=\bigsqcup_i O_i$ of small diameter $\delta$, where each $O_i\subsetneq\Omega_i$ (see section~\ref{Anosov}). One can slightly move the boundaries of the $O_i$ such that they are not charged by $\mu$ (see~appendix of~\cite{An}). By convolution of the $\mathbf{1}_{O_i}$, we obtained the smooth partition $(P_i)_i$ of identity of diameter smaller than $2\delta$. The previous inequality does not depend on the derivatives of the $P_i$. Regarding also the form of the partition $\overline{\mathcal{C}}$ (see lemma~\ref{adapted}), we can replace the smooth functions $P_i$ by the characteristic functions $\mathbf{1}_{O_i}$ in inequality~(\ref{lowerboundbis}). One can let $n_0$ tend to infinity and find
$$ h_{KS}\left(\overline{\mu}^{\overline{\Sigma}'_+},\overline{\sigma}'_+\right)+h_{KS}\left(\overline{\mu}^{\overline{\Sigma}'_-},\overline{\sigma}'_-\right) \geq h_{KS}\left(\overline{\mu}^{\overline{\Sigma}'_+},\overline{\sigma}'_+,\overline{\mathcal{C}}_+\right)+h_{KS}\left(\overline{\mu}^{\overline{\Sigma}'_-},\overline{\sigma}'_-,\overline{\mathcal{C}}_{-}\right)\geq\left(1-\epsilon-c\delta_0\right).$$
Then, using Abramov theorem~(\ref{Abramov}), the previous inequality implies that
$$h_{KS}(\mu,g^{\eta})+h_{KS}(\mu,g^{-\eta})\geq h_{KS}\left(\overline{\mu}^{\overline{\Sigma}'_+},\overline{\sigma}'_+\right)+h_{KS}\left(\overline{\mu}^{\overline{\Sigma}'_-},\overline{\sigma}'_-\right)\geq\left(1-\epsilon-c\delta_0\right)\sum_{\gamma\in\{1,\cdots,K\}^2}f\left(\gamma\right)\mu^{\Sigma'}\left([\gamma]\right).$$
After division by $\eta$ and letting the diameter of the partition $\delta$ tends to $0$, then $\epsilon$ tends to $0$ and finally $\delta_0$ to $0$, one gets
$$h_{KS}(\mu,g)\geq\frac{1}{2}\left|\int_{S^* M}\log J^u(\rho) d\mu(\rho)\right|.\square$$

\subsection*{Notations} In the following, we have to prove the various results for both $\Sigma_+$ and $\Sigma_-$. We will always treat the case of $\Sigma_+$ and the case of $\Sigma_-$ can always be deduced using the same methods. For the sake of simplicity, we will forget the notation $+$ for $(\Sigma_+,\sigma_+,f_+)$ when there will be no ambiguity and we will use the notation $(\Sigma,\sigma,f)$.

\section{Partitions of variable size}
\label{partition}

In this section, we define precisely the index families $I^{\eta}$ and $K^{\eta}$ depending on the unstable jacobian used in section~\ref{proof}. These families are used to construct quantum partitions of identity and partitions adapted to the special flow~(see section~\ref{welladapted}). In the last section, we apply the uncertainty principle to eigenfunctions of the Laplacian for these quantum partitions of variable size.

\subsection{Stopping time}
\label{stoppingtime}

Let $t$ be a real positive number that will be greater than $2b_0\eta$. Define index families as follows (see section~\ref{roof} for definitions of $f_+$, $\sigma_+$, $f_-$ and $\sigma_-$):
$$I^{\eta}(t):=\left\{\alpha=\left(\alpha_0,\cdots,\alpha_{k}\right):k\geq3,\sum_{i=1}^{k-2}f_+\left(\sigma^i_+\alpha\right)\leq t<\sum_{i=1}^{k-1}f_+\left(\sigma^i_+\alpha\right)\right\},$$
$$K^{\eta}(t):=\left\{\beta=\left(\beta_{-k},\cdots,\beta_{0}\right): k\geq3,\sum_{i=1}^{k-2} f_-\left(\sigma_-^{i}\beta\right)\leq t<\sum_{i=1}^{k-1}f_-\left(\sigma_-^{i}\beta\right)\right\}.$$
Let $x$ be an element of $\{1,\cdots,K\}^{\mathbb{N}}$. We denote $k_{t}(x)$ the unique integer $k$ such that
$$\sum_{i=1}^{k-2}f_+\left(\sigma^i_+x\right)\leq t<\sum_{i=1}^{k-1}f_+\left(\sigma^i_+x\right).$$
In the probability language, $k_t$ is a stopping time in the sense that the property $\{k_t(x)\leq k\}$ depends only on the $k+1$ first letters of $x$. For a finite word $\alpha=(\alpha_0,\cdots,\alpha_{k})$, we say that $k=k_t(\alpha)$ if $\alpha$ satisfies the previous inequality. With these notations, $I^{\eta}(t):=\{\alpha:|\alpha|=k_t(\alpha)+1\}.$ The same holds for $K^{\eta}(t)$.
\begin{rema} This stopping time $k_t(\alpha)$ for $t\sim \frac{n_E(\hbar)}{2}$ will be the time for which we will later try to make the Egorov property work. Precisely, we will prove an Egorov property for some symbols corresponding to the sequence $\alpha$ (see~(\ref{mainegorov}) for example).\end{rema}
\begin{rema} We underline that our choice of defining the sets $I^{\eta}$ and $K^{\eta}$ with sums starting at $i=1$ (and not $0$) will simplify our construction in paragraph~\ref{ss:refined-part}.
\end{rema}

\subsection{Partitions associated}
\label{welladapted}
\subsubsection{Partitions of identity}

Let $\alpha=(\alpha_0,\cdots,\alpha_{k})$ be a finite sequence. Recall that we denoted $\tau_{\alpha}:=P_{\alpha_{k}}(k\eta)\cdots P_{\alpha_0}$, where $A(s):=U^{-s}AU^s$. In~\cite{AN2} and~\cite{AKN}, they used quantum partitions of identity by considering $(\tau_{\alpha})_{|\alpha|=k}$. In our paper, we consider a slightly different partition that is more adapted to the variations of the unstable jacobian:
\begin{lemm} Let $t$ be in $[2b_0\eta,+\infty[$. The family $(\tau_{\alpha})_{\alpha\in I^{\eta}(t)}$ is a partition of identity: $$\sum_{\alpha\in I^{\eta}(t)}\tau_{\alpha}^*\tau_{\alpha}=\text{Id}_{L^{2}(M)}.$$
\end{lemm}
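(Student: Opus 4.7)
The plan is to exploit the refinement structure of the $(\tau_\alpha)$ and the fact that $k_t(x)$ is a stopping time, i.e.\ it depends only on a finite prefix of $x$. The key algebraic identity to iterate is that for any real $s$,
\begin{equation*}
\sum_{j=1}^{K} P_j(s)^* P_j(s) \;=\; U^{-s}\Bigl(\sum_{j=1}^{K} P_j^* P_j\Bigr) U^{s} \;=\; \mathrm{Id}_{L^2(M)},
\end{equation*}
which follows at once from the unitarity of $U^s$ and from~(\ref{time0}). Consequently, for any finite sequence $\alpha=(\alpha_0,\dots,\alpha_k)$, writing $\tau_{\alpha.j} = P_j((k+1)\eta)\,\tau_\alpha$, one obtains the refinement identity
\begin{equation*}
\sum_{j=1}^{K} \tau_{\alpha.j}^*\,\tau_{\alpha.j} \;=\; \tau_\alpha^*\Bigl(\sum_{j=1}^K P_j((k+1)\eta)^* P_j((k+1)\eta)\Bigr)\tau_\alpha \;=\; \tau_\alpha^*\tau_\alpha.
\end{equation*}

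Next, I would show that $I^\eta(t)$ consists of sequences of uniformly bounded length. Since $a_0\eta\leq f_+(\gamma)\leq b_0\eta$, for any infinite word $x$ the partial sums $\sum_{i=1}^{k-1} f_+(\sigma_+^i x)$ are sandwiched between $(k-1)a_0\eta$ and $(k-1)b_0\eta$; in particular the stopping time $k_t(x)$ is bounded above by some integer $N=N(t,\eta)$ (and by hypothesis $t\geq 2b_0\eta$ forces $k_t(x)\geq 3$). Moreover, the condition $\alpha\in I^\eta(t)$ depends only on $(\alpha_0,\dots,\alpha_{|\alpha|-1})$, so the sets $\tilde{\mathcal{C}}_\alpha$ of~(\ref{symbpart}) with $\alpha\in I^\eta(t)$ form a genuine partition of $\{1,\dots,K\}^{\mathbb{N}}$.

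I would then conclude as follows. Iterating the refinement identity starting from $\sum_{\alpha_0}P_{\alpha_0}^*P_{\alpha_0}=\mathrm{Id}$, one obtains
\begin{equation*}
\mathrm{Id}_{L^2(M)} \;=\; \sum_{\alpha\in\{1,\dots,K\}^{N+1}} \tau_\alpha^*\tau_\alpha.
\end{equation*}
For each such $\alpha$ of length $N+1$, there is a unique prefix $\alpha'=(\alpha_0,\dots,\alpha_{k_t(\alpha)})\in I^\eta(t)$. Grouping the terms according to this prefix and applying the refinement identity $N-k_t(\alpha')$ times (once for each extra letter beyond $\alpha'$) collapses the inner sums:
\begin{equation*}
\sum_{\alpha_{k_t(\alpha')+1},\dots,\alpha_N}\tau_\alpha^*\tau_\alpha \;=\; \tau_{\alpha'}^*\tau_{\alpha'}.
\end{equation*}
Summing over $\alpha'\in I^\eta(t)$ yields the desired equality $\sum_{\alpha'\in I^\eta(t)}\tau_{\alpha'}^*\tau_{\alpha'}=\mathrm{Id}_{L^2(M)}$.

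There is no serious obstacle: the proof is purely algebraic, and the only mild care required is in verifying that the stopping-time property indeed makes $I^\eta(t)$ parametrize a partition of the space of infinite words, so that the telescoping of the tail letters collapses correctly onto the prefix $\alpha'$. The hypothesis $t\geq 2b_0\eta$ enters only to guarantee $k_t\geq 3$, which is consistent with the convention $|\alpha|\geq 4$ built into the definition of $I^\eta(t)$.
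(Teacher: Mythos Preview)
Your proof is correct and uses essentially the same idea as the paper's: both rest on the refinement identity $\sum_{j}\tau_{\alpha.j}^*\tau_{\alpha.j}=\tau_\alpha^*\tau_\alpha$ together with the stopping-time structure of $I^\eta(t)$. The only difference is organizational: the paper inducts level by level from words of length $2$, at each stage splitting off the words that have just reached the stopping condition from those that must be refined further, whereas you first refine uniformly to the maximal length $N+1$ and then collapse each long word onto its unique $I^\eta(t)$-prefix.
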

\begin{proof} We define, for each $1\leq l\leq N$ (where $N+1$ is the size of the longest word of $I^{\eta}(t)$),
$$I^{\eta}_l(t):=\left\{\alpha=(\alpha_0,\cdots,\alpha_l):\exists \gamma=(\gamma_{l+1},\cdots,\gamma_k),\ N\geq k>l\ \text{s.t.}\ \alpha.\gamma\in I^{\eta}(t)\right\}.$$
We recall that we defined $\alpha.\gamma:=(\alpha_0,\cdots,\alpha_l,\gamma_{l+1},\cdots,\gamma_k))$. For $l=N$, this set is empty. We want to to show that for each $2\leq l\leq N$, we have:
\begin{equation}\label{interstep}\sum_{\alpha\in I^{\eta}(t),|\alpha|=l+1}\tau_{\alpha}^*\tau_{\alpha}+\sum_{\alpha\in I^{\eta}_l(t)}\tau_{\alpha}^*\tau_{\alpha}=\sum_{\alpha\in I_{l-1}^{\eta}(t)}\tau_{\alpha}^*\tau_{\alpha}.\end{equation}
To prove this equality we use the fact that $\sum_{\gamma=1}^KP_{\gamma}(l)^*P_{\gamma}(l)=\text{Id}_{L^2(M)}$ to write:
\begin{equation}\label{e:step-induc-figure}\sum_{\alpha\in I_{l-1}^{\eta}(t)}\tau_{\alpha}^*\tau_{\alpha}=\sum_{\gamma=1}^K\sum_{\alpha\in I_{l-1}^{\eta}(t)}\tau_{\alpha.\gamma}^*\tau_{\alpha.\gamma}.\end{equation}
We split then this sum in two parts to find equality~(\ref{interstep}). To conclude the proof, we write
$$\sum_{\alpha\in I^{\eta}(t)}\tau_{\alpha}^*\tau_{\alpha}=\sum_{k=2}^N\sum_{\alpha\in I^{\eta}(t),|\alpha|=k+1}\tau_{\alpha}^*\tau_{\alpha}$$
As $t>2b_0\eta\geq\max_{\gamma}f(\gamma)$, the set $I^{\eta}_1(t)$ is equal to $\{1,\cdots,K\}^2$. By induction from $N$ to $1$ using equality~(\ref{interstep}) at each step, we find then:
$$\sum_{\alpha\in I^{\eta}(t)}\tau_{\alpha}^*\tau_{\alpha}=\text{Id}_{L^{2}(M)}.$$\end{proof}
\begin{rema} A step of the induction can be easily understood by looking at figure~\ref{recurrence} where each square represents an index over which the sum is made (as it was explained for figure~\ref{square}). In fact, at each step of the induction $l$, we consider the smallest squares (which correspond to the longest words of length $l+1$) and use the property of partition of identity to reduce them to a larger square of size $2^{-l}$ (i.e. a word of smaller length $l$). Doing this exactly corresponds to step~(\ref{e:step-induc-figure}) of the induction.\end{rema}
Following the same procedure, we denote $\pi_{\beta}=P_{\beta_{-k}}(-k\eta)\cdots P_{\beta_0}P_{\beta_{-1}}(-\eta)$ for $\beta$ in $K^{\eta}(t)$. These operators follow the relation: $\displaystyle \sum_{\beta\in K^{\eta}(t)}\pi_{\beta}^*\pi_{\beta}=\text{Id}_{L^2(M)}$. As was mentioned in section~\ref{refinementbis}, because of a technical reason that will appear in the application of the entropic uncertainty principle (see~(\ref{switch})), the two definitions are slightly different.\\
\begin{figure}[htb]
\vspace{4cm}
\centerline{
\subfigure(a){\setlength{\unitlength}{0.5cm}
\begin{picture}(8,0)
\thicklines
\put(0,0){\line(1,0){8}}
\put(0,2){\line(1,0){8}}
\put(0,4){\line(1,0){8}}
\put(0,6){\line(1,0){8}}
\put(0,8){\line(1,0){8}}
\put(0,0){\line(0,1){8}}
\put(2,0){\line(0,1){8}}
\put(4,0){\line(0,1){8}}
\put(6,0){\line(0,1){8}}
\put(8,0){\line(0,1){8}}
\put(1,0){\line(0,1){2}}
\put(3,0){\line(0,1){2}}
\put(3,4){\line(0,1){2}}
\put(5,6){\line(0,1){2}}
\put(7,2){\line(0,1){2}}
\put(0,1){\line(1,0){4}}
\put(6,3){\line(1,0){2}}
\put(2,5){\line(1,0){2}}
\put(4,7){\line(1,0){2}}
\put(3.5,1){\line(0,1){1}}
\put(3.5,4){\line(0,1){1}}
\put(4.5,7){\line(0,1){1}}
\put(3,1.5){\line(1,0){1}}
\put(3,4.5){\line(1,0){1}}
\put(4,7.5){\line(1,0){1}}
\end{picture}}
\hspace{3cm}
\subfigure(b){\setlength{\unitlength}{0.5cm}
\begin{picture}(8,0)
\thicklines
\put(0,0){\line(1,0){8}}
\put(0,2){\line(1,0){8}}
\put(0,4){\line(1,0){8}}
\put(0,6){\line(1,0){8}}
\put(0,8){\line(1,0){8}}
\put(0,0){\line(0,1){8}}
\put(2,0){\line(0,1){8}}
\put(4,0){\line(0,1){8}}
\put(6,0){\line(0,1){8}}
\put(8,0){\line(0,1){8}}
\put(1,0){\line(0,1){2}}
\put(3,0){\line(0,1){2}}
\put(3,4){\line(0,1){2}}
\put(5,6){\line(0,1){2}}
\put(7,2){\line(0,1){2}}
\put(0,1){\line(1,0){4}}
\put(6,3){\line(1,0){2}}
\put(2,5){\line(1,0){2}}
\put(4,7){\line(1,0){2}}
\end{picture}}}
\caption{\label{recurrence}\textit{A step of the induction}}
\end{figure}
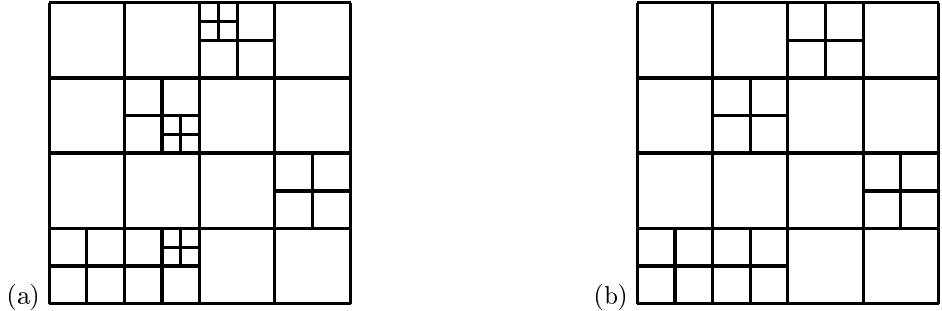

\subsubsection{Partitions of $\{1,\cdots,K\}^{\mathbb{N}}$ associated to $I^{\eta}(1)$}\label{ss:refined-part}

In this section, we would like to consider some partitions of $\Sigma:=\{1,\cdots,K\}^{\mathbb{N}}$ and of $\overline{\Sigma}$~(see~(\ref{suspset})) associated to the family $I^{\eta}(1)$. Precisely, we will construct an explicit partition $\overline{\mathcal{C}}$ of $\overline{\Sigma}$ such that its refinement at time $n$ under $\Sigma$ is linked with the partitions $([\alpha]\times[0,f(\alpha)[)_{\alpha\in I^{\eta}(n)}$ (see lemma~\ref{adapted}).\\
In this paragraph, we give an explicit expression for $\overline{\mathcal{C}}$ and in the next one, we prove lemma~\ref{adapted} that gives a link between the partition $\vee_{i=0}^{n-1}\overline{\sigma}^{-i}\overline{\mathcal{C}}$ and  $([\alpha]\times[0,f(\alpha)[)_{\alpha\in I^{\eta}(n)}$. Recall that
$$I^{\eta}(1):=\left\{\alpha=\left(\alpha_0,\cdots,\alpha_{k}\right):k\geq3,\sum_{i=1}^{k-2}f\left(\sigma^i\alpha\right)\leq 1<\sum_{i=1}^{k-1}f\left(\sigma^i\alpha\right)\right\}.$$
For $\alpha\in I^{\eta}(1)$, it can be easily remarked that $\displaystyle  \sum_{j=0}^{k-1}f\left(\sigma^j\alpha\right)> 1$. It means that there exists a unique integer $k'\leq k$ such that
$$\sum_{j=0}^{k'-2}f\left(\sigma^j\alpha\right)\leq 1<\sum_{j=0}^{k'-1}f\left(\sigma^j\alpha\right).$$
In the following, $k$ and $k'$ will be often denoted $k(\alpha)=k_1(\alpha)$ and $k'(\alpha)$ to remember the dependence in $\alpha$. The following lemma can be easily shown:
\begin{lemm}\label{tinylemma}
Let $\alpha\in I^{\eta}(1)$. One has $|k(\alpha)-k'(\alpha)|\leq\frac{b_0}{a_0}+1$.
\end{lemm}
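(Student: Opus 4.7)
The plan is to compare the two stopping-time conditions term by term, exploiting the fact that they differ only by a single starting summand. First I would record the uniform two-sided bound $f(\sigma^i\alpha) \in [a_0\eta, b_0\eta]$ for every index $i$: the upper bound comes from the hypothesis $-\log J^u \leq b_0$ on $\mathcal{E}^\theta$ together with the convention $J_\eta^u(\gamma) = e^{-b_0}$ when $U_\gamma = \emptyset$, and the lower bound from $J^u_\eta(\gamma) \leq \sup_{U_\gamma} J^u \leq e^{-a_0}$ in the non-empty case.

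Set $T_m := \sum_{i=1}^{m} f(\sigma^i\alpha)$. The defining condition for $k := k(\alpha)$ reads $T_{k-2} \leq 1 < T_{k-1}$, and, after isolating the $i=0$ term, the one for $k' := k'(\alpha)$ reads
\[
T_{k'-2} \;\leq\; 1 - f(\alpha) \;<\; T_{k'-1}.
\]
Since $f(\alpha) \geq 0$, the threshold $1 - f(\alpha)$ is crossed no later than $1$ by the non-decreasing sequence $(T_m)$, which already forces $k' \leq k$.

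For the matching upper bound on $k - k'$, I would combine $T_{k-2} \leq 1$ with $T_{k'-1} > 1 - f(\alpha) \geq 1 - b_0\eta$. Assuming $k \geq k' + 1$ (otherwise the estimate is trivial), the telescoping difference $T_{k-2} - T_{k'-1} = \sum_{i=k'}^{k-2} f(\sigma^i\alpha)$ has $k - 1 - k'$ terms each $\geq a_0\eta$, so
\[
(k - k' - 1)\, a_0\eta \;\leq\; T_{k-2} - T_{k'-1} \;<\; b_0\eta,
\]
which gives $k - k' < b_0/a_0 + 1$ and yields the claimed bound.

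There is no real obstacle here; the only item requiring care is the index bookkeeping between the two sums (shifting the starting index from $i=1$ to $i=0$) and the boundary convention ensuring the uniform lower bound $f \geq a_0\eta$ even on atoms where $U_\gamma$ is empty.
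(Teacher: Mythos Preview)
Your proof is correct and follows essentially the same approach as the paper's: both subtract the defining inequality for $k'$ from that for $k$, bound the resulting partial sum $\sum_{i=k'}^{k-2} f(\sigma^i\alpha)$ from above by $f(\alpha)\leq b_0\eta$ and from below by $(k-k'-1)a_0\eta$, and conclude. Your reformulation via the single sequence $T_m$ and two thresholds $1$ and $1-f(\alpha)$ is a slightly cleaner bookkeeping device, but the argument is the same.
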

\begin{proof} Suppose $k'+1<k$ (otherwise it is trivial). Write:
$$\sum_{j=1}^{k-2}f\left(\sigma^j\alpha\right)-\sum_{j=0}^{k'-1}f\left(\sigma^j\alpha\right)\leq 1 -1\ \text{implies}\ \sum_{j=k'}^{k-2}f\left(\sigma^j\alpha\right)\leq f\left(\alpha\right).$$
And finally, one finds $(k-2-k'+1)a_0\eta\leq b_0\eta.$\end{proof}
Let $\alpha$ be an element of $I^{\eta}(1)$. We make a partition of the interval $[0,f(\alpha)[$ under a form that will be useful (as it is adapted to the dynamics of the special flow). Motivated by the definition of a special flow, let us divide it as follows for $k=k(\alpha)$ and $k'=k'(\alpha)$:
$$I_{k'-2}(\alpha)=[0,\sum_{j=0}^{k'-1}f\left(\sigma^j\alpha\right)-1[,\cdots I_{p-2}(\alpha)=[\sum_{j=0}^{p-2}f\left(\sigma^j\alpha\right)-1,\sum_{j=0}^{p-1}f\left(\sigma^j\alpha\right)-1[,\cdots$$
$$I_{k-2}(\alpha)=[\sum_{j=0}^{k-2}f\left(\sigma^j\alpha\right)-1,f\left(\alpha\right)[,$$
where $k'(\alpha)\leq p\leq k(\alpha)$. If $k(\alpha)=k'(\alpha)$, one puts $I_{k'-2}(\alpha)=I_{k-2}(\alpha)=[0,f(\alpha)[$.\\
A partition $\tilde{\mathcal{C}}$ of $\Sigma$ can be defined. It is composed of the following atoms:
$$\tilde{\mathcal{C}}_{\gamma}:=\mathcal{C}_{\gamma_0}\cap\cdots\cap\sigma^{-k}\mathcal{C}_{\gamma_{k}},$$
where $\gamma$ be an element of $I^{\eta}(1)$. A partition $\overline{\mathcal{C}}$ of $\overline{\Sigma}$ can be constructed starting from the partition $\tilde{\mathcal{C}}$ and the partition of $[0,f(\gamma)[$. An atom of this partition $\overline{\mathcal{C}}$ is defined as
$$\overline{\mathcal{C}}:=\left\{\overline{\mathcal{C}}_{\gamma,p}=\tilde{\mathcal{C}}_{\gamma}\times I_{p-2}(\gamma):\gamma\in I^{\eta}(1),\ \text{and}\ k'(\gamma)\leq p\leq k(\gamma)\right\}.$$
We will verify in next paragraph that this partition satisfies the properties of lemma~\ref{adapted}. The choice of these specific intervals can seem quite artificial but it allows to know the exact action of $\overline{\sigma}$ on each atom of the partition
$$\forall (x,t)\in\overline{\mathcal{C}}_{\gamma,p},\ \overline{\sigma}(x,t)=(\sigma^{p-1}(x),1+t-\sum_{j=0}^{p-2}f(\sigma^jx)).$$
If we had only considered the partition made of the atoms $\tilde{C}_{\gamma}\times[0,f(\gamma)[$, we would not have a precise definition for $\overline{\sigma}(x,t)$.

\subsubsection{Proof of the crucial lemma~\ref{adapted}}

In this paragraph, lemma~\ref{adapted} is shown and proves in particular that the previous partition $\overline{\mathcal{C}}$ is well adapted to the special flow on $\overline{\Sigma}$. Let $(\gamma_i,p_i)_{0\leq i\leq n-1}$ be a family of couples such that $\gamma_i\in I^{\eta}(1)$ and $k'(\gamma_i)\leq p_i\leq k(\gamma_i)$. Suppose the considered atom is a non empty atom of $\displaystyle\vee_{i=0}^{n-1}\overline{\sigma}^{-i}\overline{\mathcal{C}}$ (otherwise the result is trivial by taking $B(\alpha)$ empty).\\
\\
We begin by proving the second part of lemma~\ref{adapted}. Let $(x,t)$ be an element of
$\overline{\mathcal{C}}_{\gamma_0,p_0}\cap\cdots\cap\overline{\sigma}^{-(n-1)}\overline{\mathcal{C}}_{\gamma_{n-1},p_{n-1}}$. We denote $k_j=k(\gamma_j).$ The sequence $x$ is of the form $(\gamma_0^0,\cdots,\gamma_0^{k_0},x')$ and $t$ belongs to $I_{p_0-2}(\gamma_0)$.
We recall that for $(x,t)\in\overline{\mathcal{C}}_{\gamma_0,p_0}$:
$$\overline{\sigma}(x,t)=\left(\sigma^{p_0-1} (x),1+t-\sum_{j=0}^{p_0-2}f\left(\sigma^jx\right)\right).$$ Necessarily, one has $\gamma_1=(\gamma_{0}^{p_0-1},\cdots,\gamma_0^{k_0},\gamma_1^{k_0-p_0+2},\cdots,\gamma_1^{k_1})$. Proceeding by induction, one finds that $x=(\gamma_0^0,\cdots,\gamma_0^{k_0},\gamma_1^{k_0-p_0+2},\cdots,\gamma_{n-1}^{k_{n-1}},x")$. Define then $\alpha=(\gamma_0^0,\cdots,\gamma_0^{k_0},\gamma_1^{k_0-p_0+2},\cdots,\gamma_{n-1}^{k_{n-1}})$ and
$$B(\gamma):=\left\{t\in[0,f(\gamma_0)[:\exists x\ st\ (x,t)\in \overline{\mathcal{C}}_{\gamma_0,p_0}\cap\cdots\cap\overline{\sigma}^{-(n-1)}\overline{\mathcal{C}}_{\gamma_{n-1},p_{n-1}}\right\}.$$
The first inclusion $\overline{\mathcal{C}}_{\gamma_0,p_0}\cap\cdots\cap\overline{\sigma}^{-(n-1)}\overline{\mathcal{C}}_{\gamma_{n-1},p_{n-1}}\subset\tilde{\mathcal{C}}_{\alpha}\times B(\gamma)$ is clear.\\
Now we will prove the converse inclusion. Consider $(x,t)$ an element of $\overline{\mathcal{C}}_{\gamma_0,p_0}\cap\cdots\overline{\sigma}^{-(n-1)}\overline{\mathcal{C}}_{\gamma_{n-1},p_{n-1}}$. The only thing to prove is that $(X,t)=((\gamma_0^0,\cdots,\gamma_0^{k_0},\gamma_1^{k_0-p_0+2},\cdots,\gamma_{n-1}^{k_{n-1}},x'),t)$ is still an element of $\overline{\mathcal{C}}_{\gamma_0,p_0}\cap\cdots\overline{\sigma}^{-(n-1)}\overline{\mathcal{C}}_{\gamma_{n-1},p_{n-1}}$, for every $x'$ in $\{1,\cdots,K\}^{\mathbb{N}}$. We proceed by induction and suppose $(X,t)$ belongs to $\overline{\mathcal{C}}_{\gamma_0,p_0}\cap\cdots\overline{\sigma}^{-(j-1)}\overline{\mathcal{C}}_{\gamma_{j-1},p_{j-1}}$ for some $j<n$. We have to verify that $\overline{\sigma}^j(X,t)$ belongs to $\overline{\mathcal{C}}_{\gamma_{j},p_{j}}$. As $(X,t)$ belongs to $\overline{\mathcal{C}}_{\gamma_0,p_0}\cap\cdots\overline{\sigma}^{-(j-1)}\overline{\mathcal{C}}_{\gamma_{j-1},p_{j-1}}$, we have
$$\overline{\sigma}^{j}(X,t)=\left(\sigma^{\tiny{p_0+\cdots+p_{j-1}-j}}(X),j+t-\sum_{i=0}^{\tiny{p_0+\cdots+p_{j-1}-j-1}}f(\sigma^iX)\right).$$
It has already been mentioned that for all $i$, $(\gamma_i^0,\cdots,\gamma_i^{k_i-p_i+1})=(\gamma_{i-1}^{p_{i-1}-1},\cdots,\gamma_{i-1}^{k_i})$ (as the considered atom is not empty). It follows that $\sigma^{p_0+\cdots+p_{j-1}-j}(X)$ belongs to $\tilde{\mathcal{C}}_{\gamma_j}$. We know that $\overline{\sigma}^j(x,t)$ is an element of $\overline{\mathcal{C}}_{\gamma_j,p_j}$ and as a consequence,
$$j+t-\sum_{i=0}^{\tiny{p_0+\cdots+p_{j-1}-j-1}}f(\sigma^iX)=j+t-\sum_{i=0}^{\tiny{p_0+\cdots+p_{j-1}-j-1}}f(\sigma^ix)\in I_{p_j-2}(\gamma_j).$$
By induction, we find that $\overline{\mathcal{C}}_{\gamma_0,p_0}\cap\cdots\cap\overline{\sigma}^{-(n-1)}\overline{\mathcal{C}}_{\gamma_{n-1},p_{n-1}}=\tilde{\mathcal{C}}_{\alpha}\times B(\gamma)$. For each $0\leq j\leq n-1$, $t$ belongs to $B(\gamma)$ implies that:
$$t\in I_{p_j-2}(\gamma_j)-j+\sum_{i=0}^{\tiny{p_0+\cdots+p_{j-1}-j-1}}f(\sigma^i\alpha).$$
The set $B(\gamma)$ is then defined as the intersection of $n$ subintervals of $[0,f(\gamma_0)[$ and is in fact a subinterval of $[0,f(\gamma_0)[$.\\
It remains now to prove upper and lower bounds on $\displaystyle\sum_{j=0}^{k-1}f\left(\sigma^j\alpha\right)$. Recall that:
$$\alpha=(\gamma_0^0,\cdots,\gamma_0^{k_0},\gamma_1^{k_0-p_0+2},\cdots,\gamma_{1}^{k_{1}},\cdots,\gamma_{n-1}^{k_{n-1}}).$$ As $0\leq f(\gamma)\leq\frac{\epsilon}{2}$ for all $\gamma$ (finite or infinite subsequence: see inequality~(\ref{jac})), we have then
$$\sum_{j=0}^{k-1}f\left(\sigma^j\alpha\right)\leq \sum_{l=0}^{n-2}\sum_{j=0}^{k_l-2}f\left(\sigma^j\gamma_l\right)+\sum_{j=0}^{k_{n-1}-1}f\left(\sigma^j\gamma_{n-1}\right)\leq n(1+\epsilon).$$
For the lower bound, the same kind of procedure works with a little more care. For $\gamma_0$,
$$\sum_{j=1}^{k_0-1}f(\sigma^j\alpha)=\sum_{j=1}^{k_0-1}f(\sigma^j\gamma_0)>1>1-\epsilon.$$
and for $1\leq l\leq n-1$, one has, using lemma~\ref{tinylemma},
$$\sum_{j=k_{l-1}-p_{l-1}+1}^{k_l-1}f(\sigma^j\gamma_l)>1-(k_{l-1}-p_{l-1}+1)b_0\eta>1-(2+\frac{b_0}{a_0})b_0\eta>1-\epsilon,$$
where the relations between $\epsilon$, $\eta$, $a_0$ and $b_0$ are defined in section~\ref{Anosov}. A lower bound on $\displaystyle\sum_{j=1}^{k-1}f(\sigma^j\alpha)$ is $n(1-\epsilon)$. This achieved the proof of the second part of lemma~\ref{adapted}.\\
Recall that we have defined
$$I^{\eta}(n(1-\epsilon)):=\left\{\left(\alpha_0',\cdots,\alpha_{k}'\right):k\geq2, \sum_{j=1}^{k-2}f\left(\sigma^j\alpha'\right)\leq n(1-\epsilon)<\sum_{j=1}^{k-1}f\left(\sigma^j\alpha'\right)\right\}.$$
So we have also proved that there exists $\alpha'$ in $I^{\eta}(n(1-\epsilon))$ such that
$$\overline{\mathcal{C}}_{\gamma_0,p_0}\cap\cdots\cap\overline{\sigma}^{-(n-1)}\overline{\mathcal{C}}_{\gamma_{n-1},p_{n-1}}\subset\tilde{\mathcal{C}}_{\alpha'}\times[0,f(\gamma_0)[.$$
In other words, $\displaystyle\vee_{i=0}^{n-1}\overline{\sigma}^{-i}\overline{\mathcal{C}}$ is a refinement of the partition
$\displaystyle\left(\tilde{\mathcal{C}}_{\alpha'}\times[0,f(\alpha')[\right)_{\alpha'\in I^{\eta}(n(1-\epsilon))}$ for any integer $n.$ It is slightly stronger than the first part of lemma~\ref{adapted} and it concludes the proof of lemma~\ref{adapted}.$\square$
\begin{rema} As a final comment on this section, we underline again that all the proofs have been written in the case of $\{1,\cdots,K\}^{\mathbb{N}}$ but can be adapted to the case of $\{1,\cdots, K\}^{-\mathbb{N}}$.\end{rema}

\subsection{Uncertainty principle for eigenfunctions of the Laplacian}
\label{uncert}

In the previous section~\ref{welladapted}, we have seen that the partitions of variable size are well adapted to the reparametrized flow (used in the Abramov theorem). Moreover, we have given a proof of lemma~\ref{adapted} that gives a link between the different partitions introduced. In this section, we will use the entropic uncertainty principle~(theorem~\ref{uncertainty}) to derive a lower bound on the classical entropy of $\overline{\mu}_{\hbar}^{\overline{\Sigma}}$ with respect to the partition~$\overline{\mathcal{C}}_{\hbar}:=([\alpha]\times[0,f(\alpha)[)_{\alpha\in I^{\eta}(\hbar)}$. Precisely, we will prove:
\begin{prop}\label{lowerbound} With the notations of section~\ref{proof}, one has:
\begin{equation}\label{estQentr}
H\left(\overline{\mu}_{\hbar}^{\overline{\Sigma}_+},\overline{\mathcal{C}}_{\hbar}^+\right)+H\left(\overline{\mu}_{\hbar}^{\overline{\Sigma}_-},\overline{\mathcal{C}}^-_{\hbar}\right)\geq(1-\epsilon')(1-\epsilon)|\log\hbar|-c\delta_0|\log\hbar|+C,\end{equation}
where $H$ is defined by~(\ref{defent0}) and where $C,c\in\mathbb{R}$ does not depend on $\hbar$.
\end{prop}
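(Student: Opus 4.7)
The proof applies the entropic uncertainty principle (Theorem~\ref{uncertainty}) to the eigenfunction $\psi_\hbar$ with the two quantum partitions of identity $(\tau_\alpha)_{\alpha\in I^\eta(\hbar)}$ and $(\pi_\beta)_{\beta\in K^\eta(\hbar)}$ constructed in Section~\ref{welladapted}, and then translates the resulting quantum-entropy lower bound into the stated classical-entropy bound. Concretely, I would proceed in four steps.

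\textbf{Step 1 (microlocal cutoffs).} For each $\beta\in K^\eta(\hbar)$, build a pseudodifferential operator $O_\beta$ whose principal symbol is a smooth cutoff equal to $1$ on a small neighborhood of the classical support of $\pi_\beta$, namely the backward tube $\bigcap_j g^{j\eta}\Omega_{\beta_{-j}}\cap\mathcal{E}^\theta$, and compactly supported on a slight enlargement. Using the soft-microlocalisation statements of Section~\ref{bigpdotheo} together with Appendix~\ref{appendix}, and exploiting that $\psi_\hbar$ is microlocally concentrated on $S^*M$, one arranges
\[
\|(\mathrm{Id}-O_\beta)\pi_\beta\psi_\hbar\| \leq \hbar^L
\]
for any prescribed $L$. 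Analogously every word in $I^\eta(\hbar)\cup K^\eta(\hbar)$ has length at most $T_E(\hbar)/(a_0\eta)+2=\mathcal{O}(|\log\hbar|)$, so the cardinality $\mathcal{N}:=\max(|I^\eta(\hbar)|,|K^\eta(\hbar)|)$ is bounded by $K^{C|\log\hbar|}=\hbar^{-C\log K}$. Taking $L$ large enough makes the term $\mathcal{N}\delta'$ in Theorem~\ref{uncertainty} negligible compared with the main norm bound of the next step.

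\textbf{Step 2 (the main operator-norm estimate).} For a suitable unitary $U$ obtained from a small-step Schr\"odinger propagator, one has to prove
\[
c_O(U) = \max_{\alpha\in I^\eta(\hbar),\,\beta\in K^\eta(\hbar)} \|\tau_\alpha\,U\,\pi_\beta^*\,O_\beta\|_{L^2\to L^2}\ \leq\ C\,\hbar^{\,((1-\epsilon')(1-\epsilon)-c\delta_0)/2}.
\]
The asymmetric swap of $\beta_0$ and $\beta_{-1}$ in~(\ref{pi}) is designed exactly so that the composition $\tau_\alpha U\pi_\beta^*$ reassembles into a single long Heisenberg-evolved product of the $P_i$, symmetric in the future and past directions and covering a Heisenberg time close to $T_E(\hbar)$ on each side. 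Modulo an $\mathcal{O}(\hbar^\infty)$ remainder produced by applying Egorov up to the local Ehrenfest horizon (this is the content of Section~\ref{bigpdotheo}), the composition is a pseudodifferential operator whose symbol is concentrated on the transverse intersection of the forward tube indexed by $\alpha$ and the backward tube indexed by $\beta$. By the very definitions of $I^\eta(\hbar)$ and $K^\eta(\hbar)$, the unstable width of the forward tube and the stable width of the backward tube are each of order $\prod_i J_\eta^u(\sigma^i\alpha)\approx e^{-T_E(\hbar)} = \hbar^{(1-\epsilon')(1-\epsilon)}$, making the two-dimensional transverse volume of the intersection of order $\hbar^{2(1-\epsilon')(1-\epsilon)-c\delta_0}$ (the $c\delta_0$ loss accounts for the passage from sharp indicators to smooth cutoffs of diameter $\delta_0$). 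A $TT^*$/Cotlar--Stein argument for pseudodifferential operators whose symbol is supported in a set of small volume then yields the displayed norm bound.

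\textbf{Step 3 (invoke the uncertainty principle).} Feeding Steps~1--2 into Theorem~\ref{uncertainty} and using that $\psi_\hbar$ is an eigenvector of $U$ (so $h_\tau(U\psi_\hbar)=h_\tau(\psi_\hbar)$), one obtains
\[
h_\tau(\psi_\hbar)+h_\pi(\psi_\hbar)\ \geq\ \bigl((1-\epsilon')(1-\epsilon)-c\delta_0\bigr)|\log\hbar|+C.
\]

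\textbf{Step 4 (quantum to classical entropy).} It remains to relate $h_\tau(\psi_\hbar)$ and $h_\pi(\psi_\hbar)$ to $H(\overline{\mu}_\hbar^{\overline{\Sigma}_\pm},\overline{\mathcal{C}}_\hbar^\pm)$. By~(\ref{suspmeas}),
\[
\overline{\mu}_\hbar^{\overline{\Sigma}_+}(\overline{\mathcal{C}}_\alpha) = \frac{f_+(\alpha)\,\|\tau_\alpha\psi_\hbar\|^2}{Z_\hbar},\qquad Z_\hbar = \sum_\gamma f_+(\gamma)\|P_\gamma\psi_\hbar\|^2 \in [a_0\eta,b_0\eta],
\]
and $f_+(\alpha)\in[a_0\eta,b_0\eta]$. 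Expanding the logarithm, the $\log f_+(\alpha)$ and $\log Z_\hbar$ contributions are uniformly bounded and absorbed in $C$. For the leading piece, I decompose the entropy conditionally on the length-$2$ prefix $(\alpha_0,\alpha_1)$ of $\alpha$: since $f_+(\alpha)$ depends only on this prefix while the $\hbar$-scale behaviour of $\|\tau_\alpha\psi_\hbar\|^2$ comes from the long tail, the weighted and unweighted entropies agree up to an $O(1)$ correction after summing (using that $\sum_\gamma (f_+(\gamma)/Z_\hbar-1)\mu_\hbar^{\Sigma_+}([\gamma])=0$). A symmetric manipulation handles the past-leg partition, and summing yields~(\ref{estQentr}).

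\textbf{Main obstacle.} The hard part is Step~2: obtaining the norm bound uniformly in $\alpha\in I^\eta(\hbar)$ and $\beta\in K^\eta(\hbar)$, for words of length of order $|\log\hbar|$ and symbols of $\hbar$-dependent support. This is precisely the role of the technical Section~\ref{bigpdotheo}, which extends Egorov and the pseudodifferential calculus to the local Ehrenfest horizon $T_E(\hbar)$ adapted to the local unstable expansion. A secondary subtlety is ensuring that the quantum-to-classical conversion in Step~4 does not lose a multiplicative factor $a_0/b_0$; this is where the conditional entropy decomposition and the cancellation induced by $\sum_\gamma(f_+(\gamma)/Z_\hbar-1)\mu_\hbar^{\Sigma_+}([\gamma])=0$ are crucial.
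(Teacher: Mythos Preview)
Your overall architecture is right, but Step~4 contains a genuine gap, and the paper's proof differs from yours precisely in order to avoid it.

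\textbf{The gap in Step~4.} After a single application of Theorem~\ref{uncertainty} to $\psi_\hbar$ you obtain a lower bound for the \emph{unweighted} quantum entropy $h_\tau(\psi_\hbar)=-\sum_{\alpha}\|\tau_\alpha\psi_\hbar\|^2\log\|\tau_\alpha\psi_\hbar\|^2$. What you need is the \emph{weighted} quantity $-\sum_{\alpha}c_\alpha\|\tau_\alpha\psi_\hbar\|^2\log\|\tau_\alpha\psi_\hbar\|^2$ with $c_\alpha=f_+(\alpha)/Z_\hbar$ (this is, up to $O(1)$, the suspension entropy $H(\overline\mu_\hbar^{\overline\Sigma_+},\overline{\mathcal C}_\hbar^+)$). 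Grouping by the prefix $\gamma=(\alpha_0,\alpha_1)$, the difference is $\sum_\gamma(c_\gamma-1)E_\gamma$ with $E_\gamma:=-\sum_{\alpha'\in I_\hbar(\gamma)}\|\tau_{\gamma.\alpha'}\psi_\hbar\|^2\log\|\tau_{\gamma.\alpha'}\psi_\hbar\|^2$. Your cancellation $\sum_\gamma(c_\gamma-1)\mu_\hbar^{\Sigma_+}([\gamma])=0$ would help only if $E_\gamma/\mu_\hbar^{\Sigma_+}([\gamma])$ were independent of $\gamma$; there is no reason for this, and each $E_\gamma$ is itself of size $|\log\hbar|$. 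The best one gets from a single application is $\sum_\gamma c_\gamma E_\gamma\geq(\min_\gamma c_\gamma)\sum_\gamma E_\gamma$, which costs exactly the factor $a_0/b_0$ you flag as a ``secondary subtlety'' and cannot be repaired by the argument you sketch.

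\textbf{How the paper handles it.} The paper does \emph{not} apply Theorem~\ref{uncertainty} once to $\psi_\hbar$. Instead (Section~\ref{several}) it applies it \emph{for each} $\gamma\in\{1,\dots,K\}^2$ to the normalised vector $P_\gamma\psi_\hbar/\|P_\gamma\psi_\hbar\|$, with the sub-partitions $(\tilde\tau_{\alpha'})_{\alpha'\in I_\hbar(\gamma)}$ and $(\tilde\pi_{\beta'})_{\beta'\in K_\hbar(\gamma)}$ and the unitary $U^{-\eta}$. The identity~(\ref{switch}) (this is the reason for the asymmetric definition of $\pi_\beta$) turns the resulting inequality into a lower bound for $E_\gamma+E_\gamma^-$ with right-hand side $-2\|P_\gamma\psi_\hbar\|^2\log(c_\chi(U^{-\eta})+\hbar^{L-K_0})$. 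Multiplying by $c_\gamma$ and summing over $\gamma$ produces the weighted entropy directly, with only the harmless $O(1)$ loss $-\log(\max_\gamma c_\gamma)$ (Corollary~\ref{corouncert}).

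\textbf{Two smaller points.} First, the operators $O_\beta$ in the paper are the \emph{energy} cutoffs $\Op(\chi^{(k')})$ of Section~\ref{cut}, not phase-space cutoffs on the $\beta$-tube; the hypothesis $\|(\mathrm{Id}-O_\beta)\pi_\beta\psi_\hbar\|\leq\delta'$ is Corollary~\ref{hyp}. Second, the norm bound in your Step~2 is Theorem~\ref{normest}, quoted from \cite{An,AN2,AKN}; it is not derived from Section~\ref{bigpdotheo}, which is used only for the subadditivity (Theorem~\ref{subadditivity}), not here.
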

To prove this result, we will proceed in three steps. First, we will introduce an energy cutoff in order to get the sharpest bound as possible in the entropic uncertainty principle. Then, we will apply the entropic uncertainty principle and derive a lower bound on $H\left(\overline{\mu}_{\hbar}^{\overline{\Sigma}_+},\overline{\mathcal{C}}_{\hbar}^+\right)+H\left(\overline{\mu}_{\hbar}^{\overline{\Sigma}_-},\overline{\mathcal{C}}^-_{\hbar}\right)$. Finally, we will use sharp estimates from~\cite{AKN} to conclude.

\subsubsection{Energy cutoff}

\label{cut}

Before applying the uncertainty principle, we proceed to sharp energy cutoffs so as to get precise lower bounds on the quantum entropy (as it was done in \cite{An}, \cite{AN2} and \cite{AKN}). These cutoffs are made in our microlocal analysis in order to get as good exponential decrease as possible of the norm of the refined quantum partition. This cutoff in energy is possible because even if the distributions $\mu_{\hbar}$ are defined on $T^*M$, they concentrate on the energy layer $S^*M$. The following energy localization is made in a way to compactify the phase space and in order to preserve the semiclassical measure.\\
Let $\delta_0$ be a positive number less than $1$ and $\chi_{\delta_0}(t)$ in $\mathcal{C}^{\infty}(\mathbb{R},[0,1])$. Moreover, $\chi_{\delta_0}(t)=1$ for $|t|\leq e^{-\delta_0 /2}$ and $\chi_{\delta_0}(t)=0$ for $|t|\geq 1$. As in~\cite{AN2}, the sharp $\hbar$-dependent cutoffs are then defined in the following way:
$$\forall\hbar\in(0,1),\ \forall n\in\mathbb{N},\ \forall\rho\in T^*M,\ \ \ \ \chi^{(n)}(\rho,\hbar):=\chi_{\delta_0}(e^{-n\delta_0}\hbar^{-1+\delta_0}(H(\rho)-1/2)).$$
For $n$ fixed, the cutoff $\chi^{(n)}$ is localized in an energy interval of length $2e^{n\delta_0}\hbar^{1-\delta_0}$ centered around the energy layer $\mathcal{E}$. In this paper, indices $n$ will satisfy $2e^{n\delta_0}\hbar^{1-\delta_0}<<1$. It implies that the widest cutoff is supported in an energy interval of microscopic length and that $n\leq K_{\delta_0}|\log\hbar|$, where $K_{\delta_0}\leq\delta_0^{-1}$. Using then a non standard pseudodifferential calculus (see~\cite{AN2} for a brief reminder of the procedure from~\cite{SZ}), one can quantize these cutoffs into pseudodifferential operators. We will denote $\Op(\chi^{(n)})$ the quantization of $\chi^{(n)}$. The main properties of this quantization are recalled in section~\ref{properties}. In particular, the quantization of these cutoffs preserves the eigenfunctions of the Laplacian, i.e.
$$\|\psi_{\hbar}-\Op(\chi^{(n)})\psi_{\hbar}\|=\mathcal{O}(\hbar^{\infty})\|\psi_{\hbar}\|.$$

\subsubsection{Applying the entropic uncertainty principle}
\label{several}
 Let $\|\psi_{\hbar}\|=1$ be a fixed element of the sequence of eigenfunctions of the Laplacian defined earlier, associated to the eigenvalue $-\frac{1}{\hbar^2}$.\\
To get bound on the entropy of the suspension measure, the entropic uncertainty principle should not be applied to the eigenvectors $\psi_{\hbar}$ directly but it will be applied several times. Precisely, we will apply it to each $P_{\gamma}\psi_{\hbar}:=P_{\gamma_1}P_{\gamma_0}(-\eta)\psi_{\hbar}$ where $\gamma=(\gamma_0,\gamma_1)$ varies in $\{1,\cdots, K\}^2$. In order to apply the entropic uncertainty principle to $P_{\gamma}\psi_{\hbar}$, we introduce new families of quantum partitions corresponding to each $\gamma$.\\
Let $\gamma=(\gamma_0,\gamma_1)$ be an element of $\{1,\cdots,K\}^2$. Introduce the following families of indices:
$$I_{\hbar}(\gamma):=\left\{(\alpha'):\gamma.\alpha'\in I^{\eta}(\hbar)\right\},$$
$$K_{\hbar}(\gamma):=\left\{(\beta'):\beta'.\gamma\in K^{\eta}(\hbar)\right\}.$$
Recall that we have defined $\gamma.\alpha'=(\gamma_0,\gamma_1,\alpha')$ in section~\ref{Anosov}. We underline that each sequence $\alpha$ of $I^{\eta}(\hbar)$ can be written under the form $\gamma.\alpha'$ where $\alpha'\in I_{\hbar}(\gamma)$. The same works for $K^{\eta}(\hbar)$. The following partitions of identity can be associated to these new families, for $\alpha'\in I_{\hbar}(\gamma)$ and $\beta'\in K_{\hbar}(\gamma)$,
$$\tilde{\tau}_{\alpha'}=P_{\alpha_n'}(n\eta)\cdots P_{\alpha_2'}(2\eta),$$
$$\tilde{\pi}_{\beta'}=P_{\beta_{-n}'}(-n\eta)\cdots P_{\beta_{-2}'}(-2\eta).$$
For analogous reasons as the case of $I^{\eta}(\hbar)$, the families $(\tilde{\tau}_{\alpha'})_{\alpha'\in I_{\hbar}(\gamma)}$ and $(\tilde{\pi}_{\beta'})_{\beta'\in I_{\hbar}(\gamma)}$ form quantum partitions of identity.\\
Given these new quantum partitions of identity, the entropic principle should be applied for given initial conditions $\gamma=(\gamma_0,\gamma_1)$ in times $0$ and $1$. We underline that for $\alpha'\in I_{\hbar}(\gamma)$ and $\beta'\in K_{\hbar}(\gamma)$,
\begin{equation}\label{switch}\tilde{\tau}_{\alpha'}U^{-\eta}P_{\gamma}=\tau_{\gamma.\alpha'}U^{-\eta}\ \text{and}\ \tilde{\pi}_{\beta'}P_{\gamma}=\pi_{\beta'.\gamma},\end{equation}
where $\gamma.\alpha'\in I^{\eta}(\hbar)$ and $\beta'.\gamma\in K^{\eta}(\hbar)$ by definition. In equality~(\ref{switch}) appears the fact that the definitions of $\tau$ and $\pi$ are slightly different (see~(\ref{tau}) and~(\ref{pi})). It is due to the fact that we want to compose $\tilde{\tau}$ and $\tilde{\pi}$ with the same operator $P_{\gamma}$.\\
Suppose now that $\|P_{\gamma}\psi_{\hbar}\|$ is not equal to $0$. We apply the quantum uncertainty principle~(\ref{uncertainty}) using that
\begin{itemize}
\item $(\tilde{\tau}_{\alpha'})_{\alpha'\in I_{\hbar}(\gamma)}$ and $(\tilde{\pi}_{\beta'})_{\beta'\in K_{\hbar}(\gamma)}$ are partitions of identity;
\item the cardinal of $I_{\hbar}(\gamma)$ and $K_{\hbar}(\gamma)$ is bounded by $\mathcal{N}\simeq \hbar^{-K_0}$ where $K_0$ is some fixed positive number (depending on the cardinality of the partition $K$, on $a_0$, on $b_0$ and $\eta$);
\item $\Op(\chi^{(k')})$ is a family of bounded bounded operators $O_{\beta'}$ (where $k'$ is the length of $\beta'$);
\item the parameter $\delta'$ can be taken equal to $\|P_{\gamma}\psi_{\hbar}\|^{-1}\hbar^L$ where $L$ is such that $\hbar^{L-K_0}\ll \hbar^{1/2(1-\epsilon')(1-\epsilon)-c\delta_0}$ for a given constant $c$ (see corollary~\ref{hyp});
\item $U^{-\eta}$ is an isometry;
\item $\tilde{\psi_{\hbar}}:=\frac{P_{\gamma}\psi_{\hbar}}{\|P_{\gamma}\psi_{\hbar}\|}$ is a normalized vector.
\end{itemize}
Applying the entropic uncertainty principle~(\ref{uncertainty}), one gets:
\begin{coro}\label{UP} Suppose that $\|P_{\gamma}\psi_{\hbar}\|$ is not equal to $0$. Then, one has
$$h_{\tilde{\tau}}(U^{-\eta}\tilde{\psi}_{\hbar})+h_{\tilde{\pi}}(\tilde{\psi}_{\hbar})\geq-2\log \left(c_{\chi}^{\gamma}(U^{-\eta})+\hbar^{L-K_0}\|P_{\gamma}\psi_{\hbar}\|^{-1}\right),$$
where  $\displaystyle c_{\chi}^{\gamma}(U^{-\eta})=\max_{\alpha'\in I_{\hbar}(\gamma),\beta'\in K_{\hbar}(\gamma)}\left(\|\tilde{\tau}_{\alpha'}U^{-\eta}\tilde{\pi}_{\beta'}^* \Op(\chi^{(k')})\|\right).$
\end{coro}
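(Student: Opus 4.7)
The statement to prove is a direct application of Theorem~\ref{uncertainty} to the specific setting at hand, so the plan is to carefully check each hypothesis of the entropic uncertainty principle.

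The plan is to apply Theorem~\ref{uncertainty} with Hilbert space $\mathcal{H}=L^2(M)$, unitary operator $U^{-\eta}$, partitions of identity $(\tilde{\tau}_{\alpha'})_{\alpha'\in I_{\hbar}(\gamma)}$ and $(\tilde{\pi}_{\beta'})_{\beta'\in K_{\hbar}(\gamma)}$, family of bounded operators $O_{\beta'}:=\Op(\chi^{(k')})$ (with $k'+1=|\beta'|$), normalized vector $\tilde{\psi}_{\hbar}$, cardinal bound $\mathcal{N}\simeq\hbar^{-K_0}$, and parameter $\delta'=\|P_{\gamma}\psi_{\hbar}\|^{-1}\hbar^L$. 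That $(\tilde{\tau}_{\alpha'})$ and $(\tilde{\pi}_{\beta'})$ are partitions of identity is noted in the text just above the corollary. The cardinal bound follows from the fact that words in $I_{\hbar}(\gamma)$ and $K_{\hbar}(\gamma)$ have length bounded above by $T_E(\hbar)/(a_0\eta)+\mathcal{O}(1)\leq C|\log\hbar|$ (since each letter contributes at least $a_0\eta$ to the sum defining the stopping time), and each letter takes at most $K$ values, giving $|I_{\hbar}(\gamma)|,|K_{\hbar}(\gamma)|\leq K^{C|\log\hbar|}=\hbar^{-K_0}$ for an appropriate $K_0$.

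The key hypothesis to verify is
\begin{equation*}
\|(Id-\Op(\chi^{(k')}))\tilde{\pi}_{\beta'}\tilde{\psi}_{\hbar}\|\leq \delta'=\|P_{\gamma}\psi_{\hbar}\|^{-1}\hbar^L.
\end{equation*}
Using relation~(\ref{switch}), one has $\tilde{\pi}_{\beta'}P_{\gamma}=\pi_{\beta'.\gamma}$ with $\beta'.\gamma\in K^{\eta}(\hbar)$, so the above is equivalent to
\begin{equation*}
\|(Id-\Op(\chi^{(k')}))\pi_{\beta'.\gamma}\psi_{\hbar}\|\leq\hbar^L.
\end{equation*}
Since $\psi_{\hbar}$ is an eigenfunction, it is microlocalized on $S^*M$ and the standard energy cutoff satisfies $\Op(\chi^{(0)})\psi_{\hbar}=\psi_{\hbar}+\mathcal{O}(\hbar^{\infty})$. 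The issue is that $\pi_{\beta'.\gamma}$ is a product of roughly $|\log\hbar|$ Heisenberg-evolved multiplication operators, so one cannot naively commute $\Op(\chi^{(0)})$ through $\pi_{\beta'.\gamma}$ without enlarging the energy window. Precisely, the cutoff $\chi^{(k')}$ localizes in a window of size $e^{k'\delta_0}\hbar^{1-\delta_0}$, which widens with $k'$ exactly to absorb the loss in the semiclassical expansion of the conjugated operator. This is where the non-standard pseudodifferential calculus developed in section~\ref{bigpdotheo} and the appendix is used: it provides a local Egorov/boundedness statement (of the form~(\ref{mainegorov})) showing that $(Id-\Op(\chi^{(k')}))\pi_{\beta'.\gamma}$ is of order $\hbar^L$ on the range of $\Op(\chi^{(0)})$, for any prescribed $L$, provided $k'\leq K_{\delta_0}|\log\hbar|$.

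This is the main obstacle of the corollary; it is precisely the content postponed to the later technical sections. Once it is granted, Theorem~\ref{uncertainty} applies directly. The quantity $\mathcal{N}\delta'$ equals $\hbar^{L-K_0}\|P_{\gamma}\psi_{\hbar}\|^{-1}$, and $c_O(U^{-\eta})$ is by definition $c_{\chi}^{\gamma}(U^{-\eta})=\max_{\alpha',\beta'}\|\tilde{\tau}_{\alpha'}U^{-\eta}\tilde{\pi}_{\beta'}^{*}\Op(\chi^{(k')})\|$, so the conclusion of Theorem~\ref{uncertainty} reads exactly
\begin{equation*}
h_{\tilde{\tau}}(U^{-\eta}\tilde{\psi}_{\hbar})+h_{\tilde{\pi}}(\tilde{\psi}_{\hbar})\geq -2\log\!\left(c_{\chi}^{\gamma}(U^{-\eta})+\hbar^{L-K_0}\|P_{\gamma}\psi_{\hbar}\|^{-1}\right),
\end{equation*}
which is the desired bound. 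The parameter $L$ is chosen large enough (depending on $K_0$, $\epsilon$, $\epsilon'$, $\delta_0$) so that the remainder $\hbar^{L-K_0}$ is negligible compared to the expected bound on $c_{\chi}^{\gamma}(U^{-\eta})$ (of order $\hbar^{\frac{1}{2}(1-\epsilon')(1-\epsilon)-c\delta_0}$), which is the regime needed to deduce Proposition~\ref{lowerbound} after summing over $\gamma$.
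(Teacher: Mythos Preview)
Your proposal is essentially correct and follows the same approach as the paper: verify the hypotheses of Theorem~\ref{uncertainty} for the data listed in the bullet points and read off the conclusion. The only inaccuracy is the attribution of the key hypothesis $\|(Id-\Op(\chi^{(k')}))\pi_{\beta'.\gamma}\psi_{\hbar}\|\leq\hbar^L$: this is not the content of section~\ref{bigpdotheo} or of~(\ref{mainegorov}), which concern the local Egorov property for the cutoff symbols $Q_{\gamma_{k-1},\dots,\gamma_l}$ used later in section~\ref{mainproof}. In the paper the estimate is simply Corollary~\ref{hyp} (a consequence of Proposition~\ref{localization}, both taken from~\cite{AN2}), which uses the anisotropic calculus of~\cite{SZ} for the energy cutoffs $\chi^{(n)}$ rather than the large-time product analysis of section~\ref{bigpdotheo}. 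Your description of the mechanism (the window of $\chi^{(k')}$ widening with $k'$ to absorb the commutator losses) is the right intuition, just pointed at the wrong technical lemma.
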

Under this form, the quantity $\|P_{\gamma}\psi_{\hbar}\|^{-1}$ appears several times and we would like to get rid of it. First, remark that the quantity $c_{\chi}^{\gamma}(U^{-\eta})$ can be easily replaced by \begin{equation}\label{norm}c_{\chi}(U^{-\eta}):=\max_{\gamma\in\{1,\cdots,K\}^2}\max_{\alpha'\in I_{\hbar}(\gamma),\beta'\in K_{\hbar}(\gamma)}\left(\|\tilde{\tau}_{\alpha'}U^{-\eta}\tilde{\pi}_{\beta'}^*\Op(\chi^{(k')})\|\right),\end{equation}
which is independent of $\gamma$. Then, one has the following lower bound:
\begin{equation}\label{bound}-2\log \left(c_{\chi}^{\gamma}(U^{-\eta})+\hbar^{L-K_0}\|P_{\gamma}\psi_{\hbar}\|^{-1}\right)\geq-2\log \left(c_{\chi}(U^{-\eta})+\hbar^{L-K_0}\right)+2\log\|P_{\gamma}\psi_{\hbar}\|^{2}.\end{equation}
as $\|P_{\gamma}\psi_{\hbar}\|\leq1$. Now that we have given an alternative lower bound, we rewrite $h_{\tilde{\tau}}(U^{-\eta}\tilde{\psi}_{\hbar})$ as follows:
$$h_{\tilde{\tau}}(U^{-\eta}\tilde{\psi}_{\hbar})=-\sum_{\alpha'\in I_{\hbar}(\gamma)}\|\tilde{\tau}_{\alpha'}U^{-\eta}\tilde{\psi}_{\hbar}\|^2\log\|\tilde{\tau}_{\alpha'}U^{-\eta}P_{\gamma}\psi_{\hbar}\|^2+\sum_{\alpha'\in I_{\hbar}(\gamma)}\|\tilde{\tau}_{\alpha'}U^{-\eta}\tilde{\psi}_{\hbar}\|^2\log\|P_{\gamma}\psi_{\hbar}\|^2.$$
Using the fact that $\psi_{\hbar}$ is an eigenvector of $U^{\eta}$ and that $(\tilde{\tau}_{\alpha'})_{\alpha'\in I_{\hbar}(\gamma)}$ is a partition of identity, one has
$$h_{\tilde{\tau}}(U^{-\eta}\tilde{\psi}_{\hbar})=-\frac{1}{\|P_{\gamma}\psi_{\hbar}\|^2}\sum_{\alpha'\in I_{\hbar}(\gamma)}\|\tau_{\gamma.\alpha'}\psi_{\hbar}\|^2\log\|\tau_{\gamma.\alpha'}\psi_{\hbar}\|^2+\log\|P_{\gamma}\psi_{\hbar}\|^2.$$
The same holds for $h_{\tilde{\pi}}(\tilde{\psi}_{\hbar})$ (using here equality~(\ref{switch})):
$$h_{\tilde{\pi}}(\tilde{\psi}_{\hbar})=-\frac{1}{\|P_{\gamma}\psi_{\hbar}\|^2}\sum_{\beta'\in K_{\hbar}(\gamma)}\|\pi_{\beta'.\gamma}\psi_{\hbar}\|^2\log\|\pi_{\beta'.\gamma}\psi_{\hbar}\|^2+\log\|P_{\gamma}\psi_{\hbar}\|^2.$$
Combining these last two equalities with~(\ref{bound}), we find that
\begin{equation}\label{inter}
-\sum_{\alpha'\in I_{\hbar}(\gamma)}\|\tau_{\gamma.\alpha'}\psi_{\hbar}\|^2\log\|\tau_{\gamma.\alpha'}\psi_{\hbar}\|^2-\sum_{\beta'\in K_{\hbar}(\gamma)}\|\pi_{\beta'.\gamma}\psi_{\hbar}\|^2\log\|\pi_{\beta'.\gamma}\psi_{\hbar}\|^2\geq-2\|P_{\gamma}\psi_{\hbar}\|^{2}\log \left(c_{\chi}(U^{-\eta})+\hbar^{L-K_0}\right).
\end{equation}
We underline that this lower bound is trivial in the case where $\|P_{\gamma}\psi_{\hbar}\|$ is equal to $0$. Using the following numbers:
\begin{equation}\label{cgamma}c_{\gamma.\alpha'}=c_{\beta'.\gamma}=c_{\gamma}=\frac{f(\gamma)}{\sum_{\gamma'\in\{1,\cdots,K\}^2}f(\gamma')\|P_{\gamma'}\psi_{\hbar}\|^2},\end{equation}
one easily checks that $\displaystyle\sum_{\gamma\in\{1,\cdots,K\}^2} c_{\gamma}\|P_{\gamma}\psi_{\hbar}\|^2=1$. If we multiply~(\ref{inter}) by $c_{\gamma}$ and make the sum over all $\gamma$ in $\{1,\cdots,K\}^2$, we find
$$-\sum_{\alpha\in I^{\eta}(\hbar)}c_{\alpha}\|\tau_{\alpha}\psi_{\hbar}\|^2\log\|\tau_{\alpha}\psi_{\hbar}\|^2-\sum_{\beta\in K^{\eta}(\hbar)}c_{\beta}\|\pi_{\beta}\psi_{\hbar}\|^2\log\|\pi_{\beta}\psi_{\hbar}\|^2\geq-2\log \left(c_{\chi}(U^{-\eta})+\hbar^{L-K_0}\right).$$
Finally, we use that $\displaystyle\sum_{\alpha\in I^{\eta}(\hbar)} c_{\alpha}\|\tau_{\alpha}\psi_{\hbar}\|^2=1$ and $\displaystyle\sum_{\beta\in K^{\eta}(\hbar)} c_{\beta}\|\pi_{\beta}\psi_{\hbar}\|^2=1$ and derive the following property:
\begin{coro}\label{corouncert}
One has:
\begin{equation}\label{uncerteigenfct}H\left(\overline{\mu}_{\hbar}^{\overline{\Sigma}_+},\overline{\mathcal{C}}_{\hbar}^+\right)+H\left(\overline{\mu}_{\hbar}^{\overline{\Sigma}_-},\overline{\mathcal{C}}^-_{\hbar}\right)\geq-2\log \left( c_{\chi}(U^{-\eta})+\hbar^{L-K_0}\right)-\log\left(\max_{\gamma}c_{\gamma}\right).\end{equation}
\end{coro}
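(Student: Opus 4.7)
The plan is to aggregate the $\gamma$-specific uncertainty bound from Corollary~\ref{UP} into a single statement involving the classical suspension entropies $H(\overline{\mu}_{\hbar}^{\overline{\Sigma}_{\pm}},\overline{\mathcal{C}}_{\hbar}^{\pm})$. The key structural observation is that every $\alpha\in I^{\eta}(\hbar)$ (resp.\ $\beta\in K^{\eta}(\hbar)$) factors uniquely as $\gamma.\alpha'$ (resp.\ $\beta'.\gamma$) with $\gamma\in\{1,\ldots,K\}^2$, and the weights $c_{\gamma}$ from~(\ref{cgamma}) were designed precisely so that $c_{\gamma.\alpha'}=c_{\beta'.\gamma}=c_{\gamma}$ and $\sum_{\gamma}c_{\gamma}\|P_{\gamma}\psi_{\hbar}\|^2=1$.

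The first step is to translate Corollary~\ref{UP}, stated in terms of the quantum entropies of the rescaled vector $\tilde{\psi}_{\hbar}=P_{\gamma}\psi_{\hbar}/\|P_{\gamma}\psi_{\hbar}\|$, back to a bound involving the original $\tau$ and $\pi$ partitions acting on $\psi_{\hbar}$. Using equation~(\ref{switch}) and the fact that $\psi_{\hbar}$ is a unit eigenfunction of $U^{\eta}$, I would compute $\|\tilde{\tau}_{\alpha'}U^{-\eta}\tilde{\psi}_{\hbar}\|^{2}=\|\tau_{\gamma.\alpha'}\psi_{\hbar}\|^{2}/\|P_{\gamma}\psi_{\hbar}\|^{2}$ and $\|\tilde{\pi}_{\beta'}\tilde{\psi}_{\hbar}\|^{2}=\|\pi_{\beta'.\gamma}\psi_{\hbar}\|^{2}/\|P_{\gamma}\psi_{\hbar}\|^{2}$. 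After replacing $c_{\chi}^{\gamma}$ by the uniform $c_{\chi}(U^{-\eta})$ from~(\ref{norm}), multiplying through by $\|P_{\gamma}\psi_{\hbar}\|^{2}$ and expanding the logarithms, the normalization factor cancels in the entropy part, yielding inequality~(\ref{inter}); this inequality holds trivially when $\|P_{\gamma}\psi_{\hbar}\|=0$.

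Next I would multiply~(\ref{inter}) by $c_{\gamma}$ and sum over $\gamma\in\{1,\ldots,K\}^{2}$. On the right-hand side, the normalization $\sum_{\gamma}c_{\gamma}\|P_{\gamma}\psi_{\hbar}\|^{2}=1$ produces the clean bound $-2\log(c_{\chi}(U^{-\eta})+\hbar^{L-K_{0}})$. On the left, the unique factorization $\alpha=\gamma.\alpha'$ combined with $c_{\gamma.\alpha'}=c_{\gamma}$ regroups the iterated sums into single sums over $\alpha\in I^{\eta}(\hbar)$ and $\beta\in K^{\eta}(\hbar)$, giving $-\sum_{\alpha}c_{\alpha}\|\tau_{\alpha}\psi_{\hbar}\|^{2}\log\|\tau_{\alpha}\psi_{\hbar}\|^{2}-\sum_{\beta}c_{\beta}\|\pi_{\beta}\psi_{\hbar}\|^{2}\log\|\pi_{\beta}\psi_{\hbar}\|^{2}\geq-2\log(c_{\chi}(U^{-\eta})+\hbar^{L-K_{0}})$.

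The final step is to identify these weighted quantum sums with the classical entropies $H(\overline{\mu}_{\hbar}^{\overline{\Sigma}_{\pm}},\overline{\mathcal{C}}_{\hbar}^{\pm})$. From~(\ref{suspmeas}) the suspension mass of an atom is $\overline{\mu}_{\hbar}^{\overline{\Sigma}_{+}}(\overline{\mathcal{C}}_{\alpha})=c_{\alpha}\|\tau_{\alpha}\psi_{\hbar}\|^{2}$, so expanding the classical entropy as $-\sum c_{\alpha}\|\tau_{\alpha}\psi_{\hbar}\|^{2}\log(c_{\alpha}\|\tau_{\alpha}\psi_{\hbar}\|^{2})$ produces the quantum sum plus a correction $-\sum c_{\alpha}\|\tau_{\alpha}\psi_{\hbar}\|^{2}\log c_{\alpha}$, which is bounded below by $-\log(\max_{\gamma}c_{\gamma})$ since $\sum_{\alpha}c_{\alpha}\|\tau_{\alpha}\psi_{\hbar}\|^{2}=1$ and $c_{\alpha}\leq\max_{\gamma}c_{\gamma}$. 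Adding the analogous identity for $\overline{\Sigma}_{-}$ then gives the claimed bound (absorbing constants where necessary). The main bookkeeping obstacle is verifying that the \emph{same} normalization $Z=\sum_{\gamma}f(\gamma)\|P_{\gamma}\psi_{\hbar}\|^{2}$ governs both suspension measures, which is precisely why the definitions of $\tau$ and $\pi$ in~(\ref{tau}) and~(\ref{pi}) were chosen asymmetrically, so that the relevant roof functions $f_{+}$ and $f_{-}$ match the same pair of indices $(\gamma_{0},\gamma_{1})$ on both the forward and backward sides.
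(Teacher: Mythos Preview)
Your proposal is correct and follows essentially the same route as the paper: start from Corollary~\ref{UP}, pass to the uniform constant $c_{\chi}(U^{-\eta})$, unwind the quantum entropies via~(\ref{switch}) to obtain~(\ref{inter}), multiply by $c_{\gamma}$ and sum, then identify the resulting weighted sums with $H(\overline{\mu}_{\hbar}^{\overline{\Sigma}_{\pm}},\overline{\mathcal{C}}_{\hbar}^{\pm})$ up to the bounded correction $-\sum c_{\alpha}\|\tau_{\alpha}\psi_{\hbar}\|^{2}\log c_{\alpha}$. The only cosmetic discrepancy is that your bookkeeping naturally produces $-2\log(\max_{\gamma}c_{\gamma})$ rather than $-\log(\max_{\gamma}c_{\gamma})$, but since $\max_{\gamma}c_{\gamma}\leq b_{0}/a_{0}$ is an $\hbar$-independent constant this is immaterial for Proposition~\ref{lowerbound}, and your parenthetical ``absorbing constants where necessary'' already flags this.
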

As expected, by a careful use of the entropic uncertainty principle, we have been able to obtain a lower bound on the entropy of the measures $\overline{\mu}_{\hbar}^{\overline{\Sigma}_+}$ and $\overline{\mu}_{\hbar}^{\overline{\Sigma}_-}$.

\subsubsection{Exponential decrease of the atoms of the quantum partition}
\label{Normest}
Now that we have obtained the lower bound~(\ref{uncerteigenfct}), we give an estimate on the exponential decrease of the atoms of the quantum partition. As in~\cite{An},~\cite{AN2},~\cite{AKN}, one has\footnote{In the higher dimension case mentioned in the footnote of section~\ref{heuristic}, we should replace $\hbar^{-\frac{1}{2}}$ (where $d$ is the dimension of $M$) by $\hbar^{-\frac{d-1}{2}}$ in inequality~(\ref{ineqnorm}).}:
\begin{theo}\label{normest}~\cite{An}~\cite{AN2}~\cite{AKN} For every $\mathcal{K}>0$ ($\mathcal{K}\leq K_{\delta_0}$), there exists $\hbar_{\mathcal{K}}$ and $C_{\mathcal{K}}$ such that uniformly for all $\hbar\leq\hbar_{\mathcal{K}}$, for all $k+k'\leq\mathcal{K}|\log\hbar|$,
$$\|P_{\alpha_{k}}U^{\eta}P_{\alpha_{k-1}}\cdots U^{\eta}P_{\alpha_0}U^{3\eta}P_{\alpha_k'}U^{\eta}\cdots P_{\alpha_0'}\Op(\chi^{(k')})\|_{L^2(M)}\ \ \ \ \ \ \ \ \ \ \ \ \ \ \ \ \ \ \ \ $$
\begin{equation}\label{ineqnorm}\ \ \ \ \ \ \ \ \ \ \ \ \ \ \ \ \ \ \ \ \ \ \ \ \ \leq C_{\mathcal{K}}\hbar^{-\frac{1}{2}-c\delta_0}\exp\left(-\frac{1}{2}\left(\sum_{j=0}^{k-1}f(\sigma^j\alpha)+\sum_{j=0}^{k'-1}f(\sigma^j\alpha')\right)\right),\end{equation}
where $c$ depends only on the riemannian manifold $M$.\end{theo}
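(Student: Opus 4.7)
The plan is to follow the WKB-based hyperbolic dispersion estimate of~\cite{AN2} and~\cite{AKN}, adapted to the variable stopping-time sequences used here. For the single-step analysis, away from an $\mathcal{O}(\hbar^\infty)$ error the Schr\"odinger propagator $U^\eta$ is represented, in small coordinate charts of diameter smaller than the injectivity radius, as a Fourier integral operator whose canonical transformation is $g^\eta$, with a WKB phase equal to the generating function of the geodesic flow. Applying a cutoff $P_{\alpha_{j+1}}$ followed by $U^\eta$ to a Lagrangian state supported on a piece of unstable manifold yields a Lagrangian state whose support has been advected by $g^\eta$ and restricted to $\Omega_{\alpha_{j+1}}$. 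Since the unstable direction has dimension one, conservation of mass multiplies the $L^\infty$ amplitude by $\sqrt{J^u_\eta(\alpha_j,\alpha_{j+1})}=e^{f(\sigma^j\alpha)/2}\cdot\hbar^0$, and iterating $k$ (respectively $k'$) times produces the product $\exp\!\bigl(-\tfrac12\sum_{j=0}^{k-1}f(\sigma^j\alpha)-\tfrac12\sum_{j=0}^{k'-1}f(\sigma^j\alpha')\bigr)$ on the right-hand side of~(\ref{ineqnorm}). The fixed propagation $U^{3\eta}$ sandwiched between the past and future halves is absorbed into the constant $C_{\mathcal{K}}$ since it acts on a uniformly bounded time interval.

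The prefactor $\hbar^{-1/2-c\delta_0}$ comes from the energy cutoff $\Op(\chi^{(k')})$. The symbol $\chi^{(k')}$ localizes to an energy annulus of thickness $\sim \hbar^{1-\delta_0}e^{k'\delta_0}$, so by a $TT^*$/Cotlar--Stein decomposition $\Op(\chi^{(k')})u$ can be written as a sum of roughly $\hbar^{-1}$ semiclassical wave packets whose $L^2\!\to\!L^\infty$ norm is $\mathcal{O}(\hbar^{-1/2})$; the factor $\hbar^{-1/2}$ is exactly the right exponent for $\dim M=2$ (one transverse phase-space direction to the tube, and the energy has already been fixed). The additional broadening by $e^{k'\delta_0}$, combined with $k'\leq \mathcal{K}|\log\hbar|$, accounts for the loss $\hbar^{-c\delta_0}$.

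The main obstacle, and the reason this estimate is subtle, is that the total propagation time is $(k+k'+3)\eta$ which may reach $\mathcal{K}\eta|\log\hbar|$ and so exceeds the naive Ehrenfest barrier $|\log\hbar|/(2\lambda_{\max})$ at which the WKB expansion usually fails. The key point is that one does not need a \emph{uniform} WKB approximation at such times: it suffices that, tube by tube, the accumulated error be small. The construction of $f$ in~(\ref{jac}) precisely matches the local expansion rate of the flow, and the discretization bound~(\ref{continuity}) ensures the error per step depends only on the \emph{local} unstable Jacobian with a uniform $\mathcal{O}(a_0\eta\epsilon)$ correction. The resulting \emph{local Ehrenfest time} Egorov theorem is exactly the content of Section~\ref{bigpdotheo}, whose pseudodifferential calculus allows the composition to remain a controllable FIO along each tube as long as $k+k'\leq \mathcal{K}|\log\hbar|$ with $\mathcal{K}\leq K_{\delta_0}$. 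Once this calculus is in place, stationary-phase evaluation of the kernel of the composed operator, together with the amplitude decay computed above, gives~(\ref{ineqnorm}).
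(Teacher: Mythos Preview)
The paper does not give its own proof of this statement: Theorem~\ref{normest} is quoted from \cite{An}, \cite{AN2}, \cite{AKN}, and the present article only \emph{applies} it. Your first two paragraphs are a reasonable, if loose, summary of the mechanism in those references: decomposition of the range of $\Op(\chi^{(k')})$ into Lagrangian states supported on pieces of unstable manifold, WKB propagation of those states under the successive $P_{\alpha_j}U^\eta$, amplitude decay governed by the unstable Jacobian, and the prefactor $\hbar^{-1/2-c\delta_0}$ arising from the number and width of Lagrangian pieces in the initial decomposition (Cotlar--Stein).

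Your third paragraph, however, contains a genuine misconception. You assert that the local-Ehrenfest-time pseudodifferential calculus of Section~\ref{bigpdotheo} is ``exactly'' what makes the long-time composition controllable here. This is not how the estimate is proved in \cite{AN2}, \cite{AKN}, and Section~\ref{bigpdotheo} plays no role in Theorem~\ref{normest}. The point is that one does \emph{not} try to show that the product $P_{\alpha_k}U^\eta\cdots P_{\alpha_0}$ is a pseudodifferential operator---beyond the Ehrenfest time it is not, and no Egorov-type argument is invoked. Instead, the proof in the cited references tracks the action on \emph{individual Lagrangian states}: because the geodesic flow sends unstable leaves to unstable leaves, the WKB ansatz for a state supported on such a leaf remains valid for times of order $\mathcal{K}|\log\hbar|$ (the evolved Lagrangian stays smooth, it is merely stretched), and phase and amplitude are computed iteratively, independently of any stopping-time condition on the word $\alpha$. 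The variable stopping times $I^\eta(\hbar)$, $K^\eta(\hbar)$ and the calculus of Section~\ref{bigpdotheo} are innovations of the \emph{present} paper, used for the subadditivity argument (Theorem~\ref{subadditivity} via Theorem~\ref{comm2}), where one genuinely needs a long product to remain a $\Psi$DO; they are not part of the hyperbolic dispersive estimate itself.
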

Outline that the crucial role of the sharp energy cutoff appears in particular to prove this theorem. In fact, without the cutoff, the previous norm operator could have only be bounded by $1$ and the entropic uncertainty principle would have been empty. The previous inequality~(\ref{ineqnorm}) allows to give an estimate on the quantity~(\ref{norm}) (as it allows us to bound $c_{\chi}(U^{-\eta})$). In fact, one has, for each $\gamma\in\{1,\cdots,K\}^2$:
$$\|\tilde{\tau}_{\alpha}U^{-\eta}\tilde{\pi}_{\beta}^*\Op(\chi^{(k')})\|=\|P_{\alpha_k}U^{\eta}P_{\alpha_{k-1}}\cdots U^{\eta}P_{\alpha_2}U^{3\eta}P_{\beta_{-2}}U^{\eta}\cdots P_{\beta_{-k'}}\Op(\chi^{(k')})\|,$$
where $(\alpha_2,\cdots,\alpha_k)\in I_{\hbar}(\gamma)$ and $(\beta_{-k'},\cdots,\beta_{-2})\in K_{\hbar}(\gamma)$. Using the definition of the sets $I^{\eta}(\hbar)$~(\ref{Ih}) and $K^{\eta}(\hbar)$~(\ref{Kh}), one has $k+k'\leq\frac{2}{a_0\eta}|\log\hbar|$. Using theorem~(\ref{normest}) with $\mathcal{K}=\frac{2}{a_0\eta}$, one has:
$$\|\tilde{\tau}_{\alpha}U^{-\eta}\tilde{\pi}_{\beta}^*\Op(\chi^{(k')})\|\leq C_{\mathcal{K}}\hbar^{-\frac{1}{2}-c\delta_0}\exp\left(-\frac{1}{2}\left(\sum_{j=2}^{k-1}f_+(\sigma^j_+\alpha)+\sum_{j=2}^{k'-1}f_-(\sigma_-^j\beta)\right)\right),$$
where $C_{\mathcal{K}}$ does not depend on $\hbar$ and $c$ is some universal constant. Using again the definition of the sets $I^{\eta}(\hbar)$~(\ref{Ih}) and $K^{\eta}(\hbar)$~(\ref{Kh}), one has
$$c_{\chi}(U^{-\eta})=\max_{\gamma\in\{1,\cdots,K\}^2}\max_{\alpha\in I_{\hbar}(\gamma),\beta\in K_{\hbar}(\gamma)}\left(\|\tilde{\tau}_{\alpha}U^{-\eta}\tilde{\pi}_{\beta}^*\Op(\chi^{(k')})\|\right)\leq \tilde{C}_{\mathcal{K}}\hbar^{\frac{1}{2}(1-\epsilon')(1-\epsilon)}\hbar^{-c\delta_0},$$
where $\tilde{C}_{\mathcal{K}}$ does not depend on $\hbar$. The main inequality~(\ref{uncerteigenfct}) for the quantum entropy can be rewritten using this last bound and it concludes the proof of proposition~\ref{lowerbound}.$\square$

\section{Subadditivity of the quantum entropy}

\label{commutativity}

As was mentioned in section~\ref{proof} and proved in section~\ref{partition}, the uncertainty principle gives an explicit lower bound on$$\frac{1}{n_E(\hbar)}\left(H_{n_E(\hbar)}\left(\overline{\mu}_{\hbar}^{\overline{\Sigma}_+},\overline{\sigma}_+,\overline{\mathcal{C}}_+\right)+H_{n_E(\hbar)}\left(\overline{\mu}_{\hbar}^{\overline{\Sigma}_-},\overline{\sigma}_-,\overline{\mathcal{C}}_-\right)\right).$$
To prove our main theorem~\ref{maintheo}, we need to show that this lower bound holds also for a fixed $n_0$ on the quantity $$\frac{1}{n_0}\left(H_{n_0}\left(\overline{\mu}_{\hbar}^{\overline{\Sigma}_+},\overline{\sigma}_+,\overline{\mathcal{C}}_+\right)+H_{n_0}\left(\overline{\mu}_{\hbar}^{\overline{\Sigma}_-},\overline{\sigma}_-,\overline{\mathcal{C}}_-\right)\right).$$
(as we need to let $\hbar$ tend to $0$ independently of $n$ to recover the semiclassical measure $\overline{\mu}^{\Sigma}$: see section~\ref{subad}). To do this we want to reproduce the classical argument for the existence of the metric entropy (see~(\ref{classicalsubad})), i.e. we need to prove a subadditivity property for logarithmic time as was given by theorem~\ref{subadditivity}. A key point to prove the subadditivity property in the case of the metric entropy is that the measure is invariant under the dynamics (see~(\ref{classicalsubad})). In our case, invariance of the semiclassical measure under the geodesic flow is a consequence of the Egorov property~(\ref{Egorov}): to prove that subadditivity almost holds (in the sense of the previous theorem), we will have to prove an Egorov property for logarithmic times. We will see that with our choice of 'local' Ehrenfest time, this will be possible and the theorem~\ref{subadditivity} will then hold.\\
The proof of theorem~\ref{subadditivity} is the subject of this section (and it also uses results from section~\ref{bigpdotheo}).
\begin{rema} In this section, only the case of $\{1,\cdots,K\}^{\mathbb{N}}$ is treated. As was mentioned, the proof of the backward case $\{1,\cdots,K\}^{-\mathbb{N}}$ works in the same way.\end{rema}
Let $n_0$ and $m$ be two positive integers such that que $m+n_0\leq T_E(\hbar)$. One has
$$H\left(\vee_{i=0}^{n+n_0-1}\overline{\sigma}^{-i}\overline{\mathcal{C}},\overline{\mu}_{\hbar}^{\overline{\Sigma}}\right)=H\left(\vee_{i=0}^{n-1}\overline{\sigma}^{-i}\overline{\mathcal{C}}\vee\vee_{i=n}^{n_0+n-1}\overline{\sigma}^{-i}\overline{\mathcal{C}},\overline{\mu}_{\hbar}^{\overline{\Sigma}}\right).$$
Using classical properties of the metric entropy, one has (see section~\ref{KSentropy})
$$H_{n+n_0}\left(\overline{\sigma},\overline{\mu}_{\hbar}^{\overline{\Sigma}},\overline{\mathcal{C}}\right)\leq H_n\left(\overline{\sigma},\overline{\mu}_{\hbar}^{\overline{\Sigma}},\overline{\mathcal{C}}\right)+H_{n_0}\left(\overline{\sigma},\overline{\sigma}^{n}\sharp\overline{\mu}_{\hbar}^{\overline{\Sigma}},\overline{\mathcal{C}}\right).$$
Using proposition~\ref{generalcase} and the continuity of the function $x\log x$ on $[0,1]$, there exists a function $R(n_0,\hbar)$ with the properties of theorem~\ref{subadditivity} such that $\displaystyle H_{n_0}\left(\overline{\sigma},\overline{\sigma}^{n}\sharp\overline{\mu}_{\hbar}^{\overline{\Sigma}},\overline{\mathcal{C}}\right)=H_{n_0}\left(\overline{\sigma},\overline{\mu}_{\hbar}^{\overline{\Sigma}},\overline{\mathcal{C}}\right)+R(n_0,\hbar)$ and thus:
\begin{equation}\label{pseudosubad}H_{n+n_0}\left(\overline{\sigma},\overline{\mu}_{\hbar}^{\overline{\Sigma}},\overline{\mathcal{C}}\right)\leq H_n\left(\overline{\sigma},\overline{\mu}_{\hbar}^{\overline{\Sigma}},\overline{\mathcal{C}}\right)+H_{n_0}\left(\overline{\sigma},\overline{\mu}_{\hbar}^{\overline{\Sigma}},\overline{\mathcal{C}}\right)+R(n_0,\hbar).\square\end{equation}
So the crucial point to prove this theorem is to show that the measure of the atoms of the refined partition is almost invariant under $\overline{\sigma}$ (proposition~\ref{generalcase}). In the following of this section, $A$ is defined as: $$A=\overline{\mathcal{C}}_{\gamma_0,p_0}\cap\cdots\cap\overline{\sigma}^{-(n_0-1)}\overline{\mathcal{C}}_{\gamma_{n_0-1},p_{n_0-1}}.$$

\subsection{Pseudo-invariance of the measure of the atoms of the partitions}

From this point, our main goal is to show the pseudo invariance of the atoms of the refined partition. More precisely:
\begin{prop}\label{generalcase} Let $m,n_0$ be two positive integers such that $m+n_0\leq T_E(\hbar)$. Consider an atom of the refined partition $A=\overline{\mathcal{C}}_{\gamma_0,p_0}\cap\cdots\cap\overline{\sigma}^{-(n_0-1)}\overline{\mathcal{C}}_{\gamma_{n_0-1},p_{n_0-1}}$. One has
$$\overline{\mu}_{\hbar}^{\overline{\Sigma}}\left(\overline{\sigma}^{-m}A\right)=\overline{\mu}_{\hbar}^{\overline{\Sigma}}\left(A\right)+\mathcal{O}(\hbar^{(1-2\nu)/6}),$$
with a uniform constant in $n_0$ and $m$ in the allowed interval. The constant $\nu<1/2$ is the one defined by theorem~\ref{pdotheo}.
\end{prop}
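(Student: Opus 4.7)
The plan is to reduce the statement to an equality of quantum norms and then exploit that $\psi_{\hbar}$ is an eigenfunction of the propagator $U^t$. First, by lemma~\ref{adapted}, the atom $A$ has the form $\tilde{\mathcal{C}}_{\alpha}\times B(\alpha)$ for some finite word $\alpha$ whose reparametrized length $\sum_{j}f_+(\sigma_+^j\alpha)$ lies in $[n_0(1-\epsilon),n_0(1+\epsilon)]$. Since $\vee_{i=0}^{n_0+m-1}\overline{\sigma}_+^{-i}\overline{\mathcal{C}}_+$ refines $\overline{\sigma}_+^{-m}A$, the latter decomposes as a disjoint union $\bigsqcup_{\beta}\tilde{\mathcal{C}}_{\beta.\alpha}\times B_{\beta}$, indexed by words $\beta$ prepended to $\alpha$ of reparametrized length roughly $m$, with $B_\beta\subset[0,f_+(\beta.\alpha))$. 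The condition $n_0+m\leq T_E(\hbar)$ ensures that every concatenation $\beta.\alpha$ stays within the regime treated by section~\ref{bigpdotheo}.

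Using definition~(\ref{suspmeas}) together with the exact factorization $\tau_{\beta.\alpha}=\tau_{\alpha}(|\beta|\eta)\,\tau_{\beta}$, one has
\begin{equation*}
\overline{\mu}_{\hbar}^{\overline{\Sigma}_+}(\overline{\sigma}_+^{-m}A)=\frac{1}{Z_\hbar}\sum_{\beta}|B_{\beta}|\,\bigl\|\tau_{\alpha}(|\beta|\eta)\tau_{\beta}\psi_{\hbar}\bigr\|^2,
\end{equation*}
where $Z_\hbar$ is the normalization in~(\ref{suspmeas}). Writing $\tau_{\alpha}(|\beta|\eta)=U^{-|\beta|\eta}\tau_{\alpha}U^{|\beta|\eta}$, using the exact relation $U^{s}\tau_{\beta}=\tau_{\beta}(-s)U^{s}$, and invoking $U^{s}\psi_{\hbar}=e^{-is/(2\hbar)}\psi_{\hbar}$, each term rewrites as $\|\tau_{\alpha}\,\tau_{\beta}(-|\beta|\eta)\psi_{\hbar}\|^{2}$ without any approximation. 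The operator $\tau_{\beta}(-|\beta|\eta)=P_{\beta_{|\beta|-1}}(-\eta)\cdots P_{\beta_0}(-|\beta|\eta)$ is microlocalized on backward iterates $g^{-j\eta}$ of the partition, whereas $\tau_{\alpha}$ is microlocalized on forward iterates.

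The central analytic step is to commute $\tau_{\alpha}^{*}\tau_{\alpha}$ past $\tau_{\beta}(-|\beta|\eta)$ up to an $\mathcal{O}(\hbar^{(1-2\nu)/6})$ error, which follows from theorem~\ref{pdotheo}: in the allowed range of lengths both families are genuine semiclassical pseudodifferential operators with principal symbols supported on disjoint ``future'' and ``past'' pieces of phase space, so their commutator has the claimed order. After commuting, the quantum partition of identity $\sum_{\beta}\tau_{\beta}^{*}\tau_{\beta}=\mathrm{Id}$ (conjugated back in time) collapses the $\beta$-sum into a single factor $\|\tau_{\alpha}\psi_{\hbar}\|^{2}$, while Fubini in the fiber direction gives $\sum_{\beta}|B_{\beta}|=|B(\alpha)|$ (since $\overline{\sigma}_+$ is Lebesgue-preserving on fibers); the product equals $\overline{\mu}_{\hbar}^{\overline{\Sigma}_+}(A)$, up to the announced error.

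The main obstacle is exactly this commutation. Because the roof function $f_+$ is not constant, the time shifts $U^{|\beta|\eta}$ vary across the sum, so one cannot pull a uniform time translation out and appeal to the standard Egorov theorem for times $\leq |\log\hbar|/(2\lambda_{\max})$. The ``local'' Egorov property of section~\ref{bigpdotheo}, valid along stopping times adapted to the unstable Jacobian, is indispensable for handling this heterogeneity; the exponent $(1-2\nu)/6$ with $\nu<1/2$ encodes the quality of that refined pseudodifferential estimate.
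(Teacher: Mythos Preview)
Your sketch has the right spirit---reduce to a commutator estimate and exploit the eigenfunction property---but two steps do not go through as written.

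\textbf{The collapse step fails.} After your rewriting you have
\[
\sum_{\beta}|B_\beta|\,\bigl\langle\psi_\hbar,\ \tau_\beta(-|\beta|\eta)^*\,\tau_\alpha^*\tau_\alpha\,\tau_\beta(-|\beta|\eta)\psi_\hbar\bigr\rangle,
\]
and even granting the commutation of $\tau_\alpha^*\tau_\alpha$ past $\tau_\beta(-|\beta|\eta)$, you cannot collapse the $\beta$-sum. The length $|\beta|$ varies with $\beta$, so the operators $\tau_\beta(-|\beta|\eta)^*\tau_\beta(-|\beta|\eta)=U^{|\beta|\eta}\tau_\beta^*\tau_\beta U^{-|\beta|\eta}$ are conjugated by \emph{different} unitaries and do not form a partition of identity. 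Moreover, the weights $|B_\beta|$ (the paper's $c_{\beta,\alpha}(m)$) depend on the full prefix $\beta$ and sit in different fibers $[0,f(\beta.\alpha))$, so the identity ``$\sum_\beta|B_\beta|=|B(\alpha)|$'' has no meaning by itself---what has to be shown is that the \emph{weighted} sum $\sum_\beta c_{\beta,\alpha}(m)\|\tau_{\beta.\alpha}\psi_\hbar\|^2$ equals $\lambda|B(\gamma)|\,\|\tau_\alpha\psi_\hbar\|^2$, and this intertwines the weights with the quantum norms. The paper resolves this by peeling off one letter at a time: at each step it inserts $\sum_{j}P_j(-\eta)^*P_j(-\eta)=\mathrm{Id}$, uses the additivity $\overline{c}_{\beta,\alpha}=\overline{c}_{j.\beta,\alpha}+c_{j.\beta,\alpha}$ of the cumulative weights~(\ref{cbar}), and pushes a single $P_j(-\eta)$ through the product (this is where theorem~\ref{comm2} enters). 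Without this letter-by-letter bookkeeping the $\beta$-sum does not telescope.

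\textbf{The one-shot commutation is not covered by theorem~\ref{pdotheo}.} That theorem asserts that a product is a pseudodifferential operator in $\Psi^{-\infty,0}_\nu$ only when the reparametrized length of the word is at most $n_E(\hbar)/2$. Under the hypothesis $n_0+m\leq T_E(\hbar)$ neither $\tau_\alpha$ (length $\lesssim n_0(1+\epsilon)$) nor $\tau_\beta(-|\beta|\eta)$ (length $\lesssim m(1+\epsilon)$) is separately guaranteed to satisfy this bound, so you cannot conclude directly that their commutator lies in $\Psi^{-\infty,2\nu-1}_\nu$. The paper's theorem~\ref{comm2} gets around this by recentering at the midpoint of the \emph{concatenated} word and then applying theorem~\ref{pdotheo} to each half; but the conclusion of theorem~\ref{comm2} is a bound on the commutator of a \emph{single} $P_{\gamma_k}(k\eta)$ with $P_{\gamma_0}$, applied to a localized state---not the block commutation you need. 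Iterating it is precisely what forces the induction and produces the exponent $(1-2\nu)/6$ after $\mathcal{O}(|\log\hbar|)$ steps, rather than a single $\mathcal{O}(\hbar^{1-2\nu})$ bound.
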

This result says that the measure $\overline{\mu}_{\hbar}^{\overline{\Sigma}}$ is almost $\overline{\sigma}$ invariant for logarithmic times. As a consequence, the classical argument (see~(\ref{classicalsubad})) for subadditivity of the entropy can be applied as long as we consider times where the pseudo invariance holds (see~(\ref{pseudosubad})).\\
Let $A$ be as in the proposition. From lemma~\ref{adapted}, there exists $(\alpha_0,\cdots,\alpha_k)$ and $B(\gamma)$ such that
$$A=\left(\mathcal{C}_{\alpha_0}\cap\cdots\sigma^{-k}\mathcal{C}_{\alpha_k}\right)\times B(\gamma).$$
Still from lemma~\ref{adapted}, one knows that $B(\gamma)$ is a subinterval of $[0,f(\gamma_0)[$. Moreover, from the proof of lemma~\ref{adapted}, the following property on $\alpha$ holds:
\begin{equation}\label{n0}n_0(1-\epsilon)\leq\sum_{j=0}^{k-1}f(\sigma^j\alpha)\leq n_0(1+\epsilon).\end{equation}
The plan of the proof of proposition~\ref{generalcase} is the following. First, we will give an exact expression in terms of $\alpha$ and $B(\gamma)$ of $\overline{\mu}_{\hbar}^{\overline{\Sigma}}\left(\overline{\sigma}^{-m}A\right)$. Then, we will see how to prove the proposition making the simplifying assumption that all operators $(P_i(k\eta))_{i,k}$ commute. Finally, we will estimate the error term due to the fact that operators do not exactly commute.

\subsubsection{Computation of $\overline{\mu}_{\hbar}^{\overline{\Sigma}}(\overline{\sigma}^{-m}A)$}
\label{case1}

We choose a positive integer $m$. As a first step of the proof, we want to give a precise formula for the measure of $\overline{\sigma}^{-m}A$. To do this, we have to determine the shape of the set $\overline{\sigma}^{-m}A$. Let us then define:
$$\overline{\Sigma}_p^m:=\left\{(x,t)\in\overline{\Sigma}:\sum_{j=0}^{p-2}f(\sigma^jx)\leq m+t<\sum_{j=0}^{p-1}f(\sigma^jx)\right\}.$$
We underline that because $m\geq 1$, we have that $\overline{\Sigma}_p^m$ is empty for $p\leq 3$. One has then $\displaystyle\overline{\Sigma}=\bigsqcup_{p\geq3}\overline{\Sigma}_p^m$ and as a consequence
\begin{eqnarray*}\overline{\sigma}^{-m}A &=&\bigsqcup_{p\geq3}\left(\overline{\Sigma}_p^m\cap\overline{\sigma}^{-m}A\right)\\&=&\bigsqcup_{p\geq3}\left\{(x,t)\in\overline{\Sigma}_p^m:m+t-\sum_{j=0}^{p-2}f(\sigma^jx)\in B(\gamma),(x_{p-1},\cdots,x_{p+k-1})=\alpha\right\}.\end{eqnarray*}
Note that $t\in B(\gamma)-m+\sum_{j=0}^{p-2}f(\sigma^jx)$ together with $(x_{p-1},\cdots,x_{p+k-1})=\alpha$ imply that $\sum_{j=0}^{p-2}f(\sigma^jx)\leq m+t<\sum_{j=0}^{p-1}f(\sigma^jx)$. It allows to rewrite
$$\overline{\sigma}^{-m}A=\bigsqcup_{p\geq3}\left\{(x,t)\in\Sigma\times\mathbb{R}_{+}:0\leq t<f(x),t\in B(\gamma)-m+\sum_{j=0}^{p-2}f(\sigma^jx),(x_{p-1},\cdots,x_{p+k-1})=\alpha\right\}.$$
Finally, one can write the measure of this suspension set
$$\overline{\mu}_{\hbar}^{\overline{\Sigma}}\left(\overline{\sigma}^{-m}A\right)=\sum_{p\geq1}\sum_{\tiny{\begin{array}{c}|\beta|=p+k\\(\beta_{p-1},\cdots,\beta_{p+k-1})=\alpha\end{array}}}c_{\beta,\alpha}(m)\|P_{\beta_{k+p-1}}((k+p-1)\eta)P_{\beta_{k+p-2}}((k+p-2)\eta)\cdots P_{\beta_{0}}\psi_{\hbar}\|^2,$$
where
$$c_{\beta,\alpha}(m)=\text{Leb}\left(B(\gamma)\cap [m-\sum_{j=0}^{p-2}f(\sigma^{j}\beta),m-\sum_{j=1}^{p-2}f(\sigma^{j}\beta)[\right)/\left(\sum_{\gamma'\in\{1,\cdots,K\}^2}f(\gamma')\mu_{\hbar}^{\Sigma}([\gamma'])\right).$$
For the sake of simplicity, we will denote $\lambda$ the normalization constant of the measure, i.e. 
$$\lambda^{-1}:=\sum_{\gamma'\in\{1,\cdots,K\}^2}f(\gamma')\mu_{\hbar}^{\Sigma}([\gamma']).$$  
Outline that the previous sum runs a finite number of $p$ with at most $2b_0/a_0$ non zeros terms in each string $\beta$ (as $c_{\bullet,\alpha}(m)$ is zero except a finite number of times). For simplicity of the following of the proof, we reindex the previous expressions
\begin{equation}\label{simple}\overline{\mu}_{\hbar}^{\overline{\Sigma}}\left(\overline{\sigma}^{-m}A\right)=\sum_{p\geq3}\sum_{\tiny{\begin{array}{c}|\beta|=p+k\\(\beta_{0},\cdots,\beta_{k})=\alpha\end{array}}}c_{\beta,\alpha}(m)\|P_{\beta_k}((k+p-1)\eta)P_{\beta_{k-1}}((k+p-2)\eta)\cdots P_{\beta_{-p+1}}\psi_{\hbar}\|^2,\end{equation}
where $c_{\beta,\alpha}(m)=\lambda\ \text{Leb}\left(B(\gamma)\cap [m-\sum_{j=0}^{p-2}f(\sigma^{j}\beta),m-\sum_{j=1}^{p-2}f(\sigma^{j}\beta)[\right)$ with $\lambda$ defined as previously. Then, to prove proposition~\ref{generalcase}, we have to show that the previous quantity~(\ref{simple}) is equal to
$$\lambda\ \text{Leb}\left(B(\gamma)\right)\|P_{\alpha_k}(k\eta)\cdots P_{\alpha_0}\psi_{\hbar}\|^2_{L^2}+\mathcal{O}_{L^2}(\hbar^{(1-2\nu)/6}).$$

\subsubsection{If everything would commute...}

We will now use our explicit expression for $\overline{\mu}_{\hbar}^{\Sigma}\left(\overline{\sigma}^{-m}A\right)$ (see~(\ref{simple})) and verify it is equal to $\overline{\mu}_{\hbar}^{\Sigma}\left(A\right)$ under the simplifying assumption that all the involved pseudodifferential operators commute. In the next section, we will then give an estimate of the error term due to the fact that the operators do not exactly commute. In order to prove the pseudo invariance, denote
$$K_{m}(\alpha):=\left\{\beta=\left(\beta_{-p+1},\cdots,\beta_{k}\right):(\beta_0,\cdots,\beta_k)=\alpha,c_{\beta,\alpha}(m)\neq 0\right\}$$
and
$$K^{(q)}_m(\alpha):=\left\{\left(\beta_{-q+1},\cdots,\beta_{k}\right):\exists \gamma=(\gamma_{-p+1},\cdots,\gamma_{-q})\ \text{s.t.}\ q< p,\ \gamma.\beta\in K_m(\alpha)\right\}.$$
With these notations, we can write~(\ref{simple}) as follows:
\begin{equation}\label{inductionserie}\overline{\mu}_{\hbar}^{\overline{\Sigma}}\left(\overline{\sigma}^{-m}A\right)=\sum_{\beta\in K_m(\alpha)}c_{\beta,\alpha}(m)\|\tau_{\beta}\psi_{\hbar}\|^2=\sum_{p=3}^N\sum_{\beta\in K_m(\alpha):|\beta|=k+p}c_{\beta,\alpha}(m)\|\tau_{\beta}\psi_{\hbar}\|^2.\end{equation}
Recall that by definition (see~(\ref{tau})) $\tau_{\beta}:=P_{\beta_k}((k+p-1)\eta)P_{\beta_{k-1}}((k+p-2)\eta)\cdots P_{\beta_{-p+1}}$. For simplicity of notations, let us denote $B(\gamma)=[a,b[$ (where $a$ and $b$ obviously depend on $\gamma$). A last notation we define is for $\beta$ such that $|\beta|=k+q$ and $\sigma^{q-1}\beta=\alpha$:
\begin{equation}\label{cbar}\overline{c}_{\beta,\alpha}(m):=\lambda\ \text{Leb}\left([a,b[\cap[a,m-\sum_{j=1}^{q-2}f(\sigma^{j}\beta)[\right),\end{equation}
where $\lambda$ is the normalization constant of the measure previously defined. We underline that the interval $B(\gamma)=[a,b[$ can be divided in smaller intervals (see the definition of $c_{\beta,\alpha}(m)$). The number $c_{\beta,\alpha}(m)$ corresponds to the length of one of this subinterval (weighted by $\lambda$) and $\overline{c}_{\beta,\alpha}(m)$ corresponds to the sum of the lengths of the first intervals. Suppose now that all the operators $(P_i(k\eta))_{i,k}$ commute. We have the following lemma:
\begin{lemm} If all the operators $(P_i(k\eta))_{i,k}$ commute, then one has, for $2\leq q\leq N$:
$$\sum_{\beta\in K_m(\alpha):|\beta|=k+q}c_{\beta,\alpha}(m)\|\tau_{\beta}\psi_{\hbar}\|^2+\sum_{\beta\in K_m^{(q)}(\alpha)}\overline{c}_{\beta,\alpha}(m)\|\tau_{\beta}\psi_{\hbar}\|^2=\sum_{\beta\in K_{m}^{(q-1)}(\alpha)}\overline{c}_{\beta,\alpha}(m)\|\tau_{\beta}\psi_{\hbar}\|^2.$$
\end{lemm}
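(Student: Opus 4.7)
\emph{Plan.} The argument splits into a short quantum identity and a measure-theoretic rearrangement. Under the commutation hypothesis, for every finite word $\beta$,
\begin{equation*}
\|\tau_\beta \psi_\hbar\|^2 \;=\; \sum_{\gamma=1}^{K}\|\tau_{\gamma.\beta}\psi_\hbar\|^2 .
\end{equation*}
To see this, observe that by the definitions of $\tau$ and of $P_i(t)=U^{-t}P_iU^t$, prepending a letter on the left of $\beta$ shifts every time label by $\eta$ and appends the rightmost factor $P_\gamma(0)=P_\gamma$, so $\tau_{\gamma.\beta}=(U^{-\eta}\tau_\beta U^\eta)P_\gamma$. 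The operator $U^{-\eta}\tau_\beta U^\eta$ is itself a product of operators $P_i(k\eta)$ and hence, by hypothesis, commutes with $P_\gamma$; since $\psi_\hbar$ is a unit eigenvector of $U^\eta$, this yields $\|\tau_{\gamma.\beta}\psi_\hbar\|^2=\|P_\gamma U^{-\eta}\tau_\beta \psi_\hbar\|^2$. Summing over $\gamma$ via $\sum_\gamma P_\gamma^*P_\gamma=\mathrm{Id}$ and using the unitarity of $U^{-\eta}$ closes the identity.

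I then insert this identity into the right-hand side of the lemma and reindex the resulting double sum by $\beta'=\gamma.\beta$ of length $k+q$, which gives
\begin{equation*}
\sum_{\beta\in K_m^{(q-1)}(\alpha)}\overline c_{\beta,\alpha}(m)\|\tau_\beta\psi_\hbar\|^2
\;=\;\sum_{\beta'}\overline c_{\mathrm{trunc}(\beta'),\alpha}(m)\|\tau_{\beta'}\psi_\hbar\|^2 ,
\end{equation*}
where $\mathrm{trunc}(\beta')=(\beta'_{-q+2},\ldots,\beta'_k)$ and the sum on the right runs over length-$(k+q)$ words $\beta'$ whose truncation lies in $K_m^{(q-1)}(\alpha)$. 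Matching with the left-hand side of the lemma therefore reduces the proof to a combinatorial identity between Lebesgue measures of subintervals of $B(\gamma)=[a,b[$: namely, for every such $\beta'$,
\begin{equation*}
\overline c_{\mathrm{trunc}(\beta'),\alpha}(m) \;=\; c_{\beta',\alpha}(m)\,\mathbf 1_{\beta'\in K_m(\alpha)} \;+\; \overline c_{\beta',\alpha}(m)\,\mathbf 1_{\beta'\in K_m^{(q)}(\alpha)} .
\end{equation*}

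This last identity follows from the definitions of $c$ and $\overline c$: the intervals $I_p(\beta')=[m-\sum_{j=0}^{p-2}f(\sigma^j\beta'),\,m-\sum_{j=0}^{p-3}f(\sigma^j\beta')[$ form a contiguous tiling of $(-\infty,m[$, and the membership conditions $\beta'\in K_m(\alpha)$, resp.\ $\beta'\in K_m^{(q)}(\alpha)$, correspond to $I_q(\beta')$, resp.\ $\bigcup_{p>q}I_p(\beta')$, meeting $[a,b[$ nontrivially. A case analysis on the position of the cumulative threshold $m-\sum_{j=0}^{q-2}f(\sigma^j\beta')$ relative to $[a,b[$ then establishes the identity.

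The main obstacle is precisely this combinatorial matching. In particular the same $\beta'$ can lie simultaneously in $K_m(\alpha)$ and in $K_m^{(q)}(\alpha)$, in which case its total contribution on the left-hand side is $c_{\beta',\alpha}(m)+\overline c_{\beta',\alpha}(m)$; one must verify that the tiling of $B(\gamma)$ by the strips $I_p(\beta')\cap[a,b[$ reconstructs $\overline c_{\mathrm{trunc}(\beta'),\alpha}(m)$ exactly in this situation, and a similarly careful analysis is needed at the boundary where the cumulative threshold falls at $a$ or $b$.
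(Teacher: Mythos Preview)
Your proof is correct and follows essentially the same route as the paper: both arguments combine the quantum identity obtained from the commutation hypothesis, the partition of identity, and the eigenvector property (the paper inserts $\sum_j P_j(-\eta)^*P_j(-\eta)=\mathrm{Id}$ and then commutes, which is exactly your identity $\|\tau_\beta\psi_\hbar\|^2=\sum_\gamma\|\tau_{\gamma.\beta}\psi_\hbar\|^2$), together with the interval decomposition of the coefficients $\overline c_\beta=\overline c_{j.\beta}+c_{j.\beta}$. Your packaging is slightly different in that you reindex first and then carry out the measure-theoretic matching with explicit indicator functions and a case analysis on the threshold position, whereas the paper states the coefficient identity directly; this makes your treatment of the boundary cases and of words lying simultaneously in $K_m(\alpha)$ and $K_m^{(q)}(\alpha)$ more explicit, but the underlying argument is the same.
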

\begin{proof} Let $2\leq q\leq N$. Consider $\beta$ an element of $K_{m}^{(q-1)}(\alpha)$. Using the property of partition of identity, we have
$$\sum_{\beta\in K_{m}^{(q-1)}(\alpha)}\overline{c}_{\beta,\alpha}(m)\|\tau_{\beta}\psi_{\hbar}\|^2=\sum_{j=1}^K\sum_{\beta\in K_{m}^{(q-1)}(\alpha)}\overline{c}_{\beta,\alpha}(m)\|P_{j}(-\eta)\tau_{\beta}\psi_{\hbar}\|^2.$$
For each $1\leq j\leq K$, we have the following property for $\overline{c}_{j.\beta,\alpha}(m)$ (as $f\geq 0$):
$$\overline{c}_{\beta,\alpha}(m)=\overline{c}_{j.\beta,\alpha}(m)+c_{j.\beta,\alpha}(m).$$
We can write then
$$\sum_{\beta\in K_{m}^{(q-1)}(\alpha)}\overline{c}_{\beta,\alpha}(m)\|\tau_{\beta}\psi_{\hbar}\|^2=\sum_{j=1}^K\sum_{\beta\in K_{m}^{(q-1)}(\alpha)}(\overline{c}_{j.\beta,\alpha}(m)+c_{j.\beta,\alpha}(m))\|P_{j}(-\eta)\tau_{\beta}\psi_{\hbar}\|^2.$$
Notice that, as we have assumed the operators commute, we have
\begin{equation}\label{commutation}P_{j}(-\eta)P_{\beta_k}((k+q-2)\eta)\cdots P_{\beta_{-q+2}} \psi_{\hbar}=P_{\beta_k}((k+q-1)\eta)\cdots P_{\beta_{-q+2}}P_{j}(-\eta)\psi_{\hbar}.\end{equation}
As a consequence, we have
$$\sum_{\beta\in K_{m}^{(q-1)}(\alpha)}\overline{c}_{\beta,\alpha}(m)\|\tau_{\beta}\psi_{\hbar}\|^2=\sum_{j=1}^K\sum_{\beta\in K_{m}^{(q-1)}(\alpha)}(\overline{c}_{j.\beta,\alpha}(m)+c_{j.\beta,\alpha}(m))\|\tau_{\beta}P_{j}(-\eta)\psi_{\hbar}\|^2.$$
By definition of the different sets $K_m$ and as $\psi_{\hbar}$ is an eigenfunction of the Laplacian, this last equality allows to conclude the proof of the lemma.\end{proof}
Proceeding then by induction from $N$ to $1$ (see equality~(\ref{inductionserie})) and using the previous lemma at each step, we can conclude that if all the operators commute,
$$\overline{\mu}_{\hbar}^{\overline{\Sigma}}\left(\overline{\sigma}^{-m}A\right)=\overline{\mu}_{\hbar}^{\overline{\Sigma}}\left(A\right).$$

\subsubsection{Estimates of the error terms}

Regarding to the previous section, we have to see what is exactly the error term we forgot at each step of the recurrence and we have to verify that it is bounded by some positive power of $\hbar$. Precisely, we have to understand what is the error term in equation~(\ref{commutation}) if we do not suppose anymore that all the operators commute. Precisely, the error term we have to take into account in~(\ref{commutation}) is
$$R_{\beta,\gamma,\hbar}=\sum_{j=-q+2}^k P_{\beta_k}((k+q-2)\eta)\cdots P_{\beta_{j+1}}((j+q-1)\eta) R^{j}(\beta,\gamma)P_{\beta_{j-1}}((j+q-3)\eta)\cdots P_{\beta_{-q+2}} \psi_{\hbar},$$
where $R^{j}(\beta,\gamma)=[P_{\gamma}(-\eta),P_{\beta_{j}}((j+q-2)\eta)]$ is the bracket of the two operators. We denote $R_{\beta,\gamma,\hbar}^j$ each term of the previous sum. The error term we forgot at each step $q$ of the induction in the previous section is then
\begin{equation}\label{qerror}E(\hbar,q):=\sum_{\gamma=1}^K\sum_{\beta\in K_{m}^{(q-1)}(\alpha)}\left(\left\langle R_{\beta,\gamma,\hbar},P_{\gamma}(-\eta)\tau_{\beta}\psi_{\hbar}\right\rangle+\left\langle \tau_{\beta}P_{\gamma}(-\eta)\psi_{\hbar},R_{\beta,\gamma,\hbar}\right\rangle\right).\end{equation}
So, for each step $q$ of the induction, if we want to prove the pseudo invariance of the measure, a first error term we have to estimate is of the form
\begin{equation}\label{Error}\sum_{j=-q+2}^k\sum_{\gamma=1}^K\sum_{\beta\in K_{m}^{(q-1)}(\alpha)}\overline{c}_{\beta,\alpha}(m)\left\langle  R_{\beta,\gamma,\hbar}^j,P_{\gamma}(-\eta)\tau_{\beta}\psi_{\hbar}\right\rangle.\end{equation}
Using Cauchy Schwarz inequality twice and the fact that $0\leq\overline{c}_{\beta,\alpha}(m)\leq\lambda\text{Leb}(B(\gamma))\leq \lambda b_0\eta$, this last quantity is bounded by
\begin{equation}\label{error}\lambda b_0\eta\left(\sum_{j=-q+2}^k\sum_{\gamma=1}^K\sum_{\beta\in K_{m}^{(q-1)}(\alpha)}\|  R_{\beta,\gamma,\hbar}^j\|^2\right)^{\frac{1}{2}}\left(\sum_{j=-q+2}^k\sum_{\gamma=1}^K\sum_{\beta\in K_{m}^{(q-1)}(\alpha)}\|P_{\gamma}(-\eta)\tau_{\beta}\psi_{\hbar}\|^2\right)^{\frac{1}{2}}.\end{equation}
The last term of the product is bounded as
$$\sum_{j=-q+2}^k\sum_{\gamma=1}^K\sum_{\beta\in K_{m}^{(q-1)}(\alpha)}\|P_{\gamma}(-\eta)\tau_{\beta}\psi_{\hbar}\|^2
\leq (k+q)K\sum_{|\beta|=k+q-1}\|\tau_{\beta}\psi_{\hbar}\|^2=(k+q)K=\mathcal{O}(|\log\hbar|).$$
We also underline that $\lambda b_0\eta$ is bounded by $b_0/a_0$. As a consequence, the error term~(\ref{error}) is bounded by
$$C|\log\hbar|\left(\sum_{j=-q+2}^k\sum_{\gamma=1}^K\sum_{\beta\in K_{m}^{(q-1)}(\alpha)}\|  R_{\beta,\gamma,\hbar}^j\|^2\right)^{\frac{1}{2}},$$
where $C$ is some positive uniform constant (depending only on the partition and on $\eta$). We extend now the definition of $R^j(\beta,\gamma)$ (previously defined as $[P_{\gamma}(-\eta),P_{\beta_{j}}((j+q-2)\eta)]$ for $\beta$ in $K_{m}^{(q-1)}(\alpha)$) to any word $\beta$ of length $k+q-1$. If $j+q-1$ letters of $\beta$ are also the $j+q-1$ first letters of a word $\beta'$ in $K_{m}^{(q-1)}(\alpha)$, we take $R^j(\beta,\gamma):=[P_{\gamma}(-\eta),P_{\beta_{j}}((j+q-2)\eta)]$. Otherwise, we take $R^j(\beta,\gamma):=\hbar\ \text{Id}_{L^{2}(M)}.$ We define then for any sequence of length $k+q-1$
$$R_{\beta,\gamma,\hbar}^j=P_{\beta_k}((k+q-2)\eta)\cdots P_{\beta_{j+1}}((j+q-1)\eta) R^{j}(\beta,\gamma)P_{\beta_{j-1}}((j+q-3)\eta)\cdots P_{\beta_{-q+2}} \psi_{\hbar}.$$
In theorem~\ref{comm2} from the section~\ref{mainproof}, we will prove in particular that for every $\beta$ of size $q+k-1$ and for each $-q+2\leq j\leq k$,
\begin{equation}\label{estimaterror}\|R^{j}(\beta,\gamma)P_{\beta_{j-1}}((j+q-3)\eta)\cdots P_{\beta_{-q+2}} \psi_{\hbar}\|_{L^2(M)}\leq C\hbar^{1-2\nu}\|P_{\beta_{j-1}}((j+q-3)\eta)\cdots P_{\beta_{-q+2}} \psi_{\hbar}\|_{L^2(M)},\end{equation}
where $C$ is a uniform constant for $n_0$ and $m$ positive integers such that $n_0+m\leq T_E(\hbar)$ and $\nu<1/2$ (defined in section~\ref{bigpdotheo}). We underline that the bracket $R^{j}(\beta,\gamma)$ of the two operators can commute (modulo $\hbar^{1-2\nu}$) because we have made a phase space localization thanks to the operator $P_{\beta_{j-1}}((j+q-3)\eta)\cdots P_{\beta_{-q+2}}$. Theorem~\ref{comm2} can be applied as $\sum_{j=0}^{k+q-2}f(\sigma^j\beta)\leq (n_0+m)(1+\epsilon)\leq n_E(\hbar)$ (see~(\ref{n0}) and~(\ref{cbar})). Using bound~(\ref{estimaterror}) and the property of partition of identity, we have
$$\sum_{|\beta|=k+q-1}\|  R_{\beta,\gamma,\hbar}^j\|^2=\mathcal{O}(\hbar^{2(1-2\nu)}).$$
The error term~(\ref{error}) (and as a consequence~(\ref{Error})) is then bounded by
$$\tilde{C}|\log\hbar|\left(\sum_{j=-q+2}^k\sum_{\gamma=1}^K\sum_{|\beta|=k+q-1}\|  R_{\beta,\gamma,\hbar}^j\|^2\right)^{\frac{1}{2}}=\mathcal{O}(\hbar^{\frac{1-2\nu}{4}}).$$
Looking at equation~(\ref{qerror}), we see that the other error term for the step $q$ of the induction can be estimated with the same method and is also a $\mathcal{O}(\hbar^{\frac{1-2\nu}{4}}).$ As the number $N$ of steps in the induction is a $\mathcal{O}(|\log\hbar|)$, the error term we forgot in the previous section (due to the fact that the operators do not commute) is a $\mathcal{O}(\hbar^{\frac{1-2\nu}{6}}).$ This concludes the proof of proposition~\ref{generalcase}.$\square$

\subsection{Commutation of pseudodifferential operators}
\label{mainproof}

In order to complete the proof of the pseudo invariance of the measure~(proposition~\ref{generalcase}), we need to prove inequality~(\ref{estimaterror}). It will be a consequence of~(\ref{pseudocomm}) below. Once we have proved this inequality, the subadditivity property will be completely proved. The exact property we need is stated by the following theorem:
\begin{theo}\label{comm2}
Let $(\gamma_0,\cdots,\gamma_k)$ be such that
\begin{equation}\label{allowedset}\sum_{j=0}^{k-1}f(\sigma^j\gamma)\leq n_E(\hbar).\end{equation} One has:
\begin{equation}\label{pseudocomm}\left\|\left[P_{\gamma_{k}}(k\eta),P_{\gamma_{0}}\right]P_{\gamma_{k-1}}((k-1)\eta)\cdots P_{\gamma_{1}}(\eta)\psi_{\hbar}\right\|_{L^2}\leq C \hbar^{1-2\nu} \left\|P_{\gamma_{k-1}}((k-1)\eta)\cdots P_{\gamma_{1}}(\eta)\psi_{\hbar}\right\|_{L^2},\end{equation}
where $\nu<1/2$ is defined in section~\ref{bigpdotheo}, $C$ is a constant depending on the partition and uniform in all $\gamma$ satisfying~(\ref{allowedset}).
\end{theo}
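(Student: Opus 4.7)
\medskip
\noindent\textbf{Proof plan.} The content of the theorem is that even for the ``local Ehrenfest time'' $k\eta$ singled out by the condition $\sum_{j=0}^{k-1}f(\sigma^{j}\gamma)\le n_E(\hbar)$, the Heisenberg-evolved operator $P_{\gamma_k}(k\eta)$ remains microlocally a pseudodifferential operator in a suitable anisotropic class, so that its commutator with the $\hbar$-independent multiplier $P_{\gamma_0}$ is of semiclassical order $\hbar^{1-2\nu}$. I will build the argument on top of the ``big pseudodifferential theorem'' of section~\ref{bigpdotheo} (Theorem~\ref{pdotheo}), which is precisely what is designed to say this. Concretely, that theorem will provide an anisotropic symbol class $S_\nu$ (with $\nu<1/2$) such that, under the stopping-time hypothesis~(\ref{allowedset}), the propagated function $a_k:=P_{\gamma_k}\circ g^{k\eta}$ belongs to $S_\nu$ and
\[
P_{\gamma_k}(k\eta)=\Op_\hbar(a_k)+E_k,\qquad \|E_k\|_{L^2\to L^2}=O(\hbar^{1-2\nu}).
\]
The condition~(\ref{allowedset}) is exactly the statement that the total unstable expansion along the piece of orbit of length $k\eta$ is at most $\hbar^{-(1-\epsilon')}$, which is what keeps the successive derivatives of $a_k$ inside $S_\nu$: each derivative of $a_k$ in the unstable direction costs at most $\hbar^{-\nu}$, and $a_k$ remains essentially constant in the stable and flow directions.

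\medskip
\noindent Granting this, the first step is to replace the commutator $[P_{\gamma_k}(k\eta),P_{\gamma_0}]$ by $[\Op_\hbar(a_k),P_{\gamma_0}]$ up to an error $[E_k,P_{\gamma_0}]$ whose operator norm is already $O(\hbar^{1-2\nu})$ since $\|P_{\gamma_0}\|\le 1$. The second step is a symbolic calculus computation inside $S_\nu$: since $P_{\gamma_0}=\Op_\hbar(P_{\gamma_0})$ with $P_{\gamma_0}\in S_0\subset S_\nu$, the composition formula gives
\[
[\Op_\hbar(a_k),\Op_\hbar(P_{\gamma_0})]=\tfrac{\hbar}{i}\,\Op_\hbar(\{a_k,P_{\gamma_0}\})+\Op_\hbar(r_\hbar),
\]
with $r_\hbar\in\hbar^{2(1-2\nu)}S_\nu$. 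Each derivative of $a_k$ produces an $\hbar^{-\nu}$, each derivative of $P_{\gamma_0}$ is $O(1)$, so the Poisson bracket lies in $\hbar^{-\nu}S_\nu$ and the leading contribution $\tfrac{\hbar}{i}\Op_\hbar(\{a_k,P_{\gamma_0}\})$ has $L^2$-operator norm bounded by $\hbar\cdot\hbar^{-2\nu}=\hbar^{1-2\nu}$ after applying a Calder\'on--Vaillancourt estimate in the class $S_\nu$ (which is the other ingredient needed from section~\ref{bigpdotheo}). The remainder is a fortiori $O(\hbar^{2(1-2\nu)})$, which is smaller.

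\medskip
\noindent Combining the two steps one obtains $\|[P_{\gamma_k}(k\eta),P_{\gamma_0}]\|_{L^2\to L^2}=O(\hbar^{1-2\nu})$ uniformly over all admissible sequences $\gamma$. Applying this bound to the vector $P_{\gamma_{k-1}}((k-1)\eta)\cdots P_{\gamma_1}(\eta)\psi_\hbar$ yields precisely~(\ref{pseudocomm}). The uniformity in $\gamma$ comes from the fact that the only data of $\gamma$ that enter the estimate are (i) the propagation time $k\eta$ and (ii) the accumulated expansion, both of which are controlled by the single inequality~(\ref{allowedset}).

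\medskip
\noindent\textbf{Main obstacle.} All the substance is packed into the statement ``$P_{\gamma_k}(k\eta)$ is pseudodifferential in $S_\nu$ up to $O(\hbar^{1-2\nu})$'' under the stopping-time constraint, i.e.\ into Theorem~\ref{pdotheo}. The difficulty is that $k\eta$ grows like $|\log\hbar|$, so the usual Egorov property in $\mathcal{C}^\infty_c(T^*M)$ breaks down; one has to perform the Egorov-type argument inside the anisotropic class $S_\nu$ adapted to the stable/unstable splitting, and simultaneously keep track of how the error in the symbolic expansion grows under iteration of the propagator for $k$ steps. Once that technical theorem is in place, the commutator estimate above follows essentially by one line of symbolic calculus; it is therefore the construction and control of that anisotropic calculus -- carried out in section~\ref{bigpdotheo} and the appendix -- that constitutes the real work.
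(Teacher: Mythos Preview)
Your approach has a genuine gap. You claim that under the hypothesis~(\ref{allowedset}), Theorem~\ref{pdotheo} yields
\[
P_{\gamma_k}(k\eta)=\Op_\hbar(P_{\gamma_k}\circ g^{k\eta})+E_k,\qquad P_{\gamma_k}\circ g^{k\eta}\in S_\nu,
\]
and then bound the commutator in operator norm. But this is not what Theorem~\ref{pdotheo} says, and in fact the statement you want is false. The function $P_{\gamma_k}\circ g^{k\eta}$ is supported on $g^{-k\eta}T^*\Omega_{\gamma_k}$, which has nothing to do with the intermediate cells $\gamma_1,\ldots,\gamma_{k-1}$. At a generic point of that support the derivative $|d_\rho g^{k\eta}|$ grows like $e^{k\eta\lambda_{\max}}$, not like $\exp(\sum_j f(\sigma^j\gamma))$. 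The hypothesis~(\ref{allowedset}) only controls the unstable expansion along orbits that successively visit $\Omega_{\gamma_0},\Omega_{\gamma_1},\ldots,\Omega_{\gamma_k}$; it tells you nothing about $P_{\gamma_k}\circ g^{k\eta}$ away from that set. Hence $P_{\gamma_k}\circ g^{k\eta}\notin S_\nu$ in general, and the commutator $[P_{\gamma_k}(k\eta),P_{\gamma_0}]$ has no reason to be small in operator norm. The paper says this explicitly just after the statement of the theorem.

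What the paper actually does uses the localizing factor $P_{\gamma_{k-1}}((k-1)\eta)\cdots P_{\gamma_1}(\eta)$ in an essential way. First it recenters the times around the midpoint $l=l(\gamma)$, so that one deals with forward and backward evolutions each satisfying a half-Ehrenfest bound $\le n_E(\hbar)/2$ (this is the hypothesis of Theorem~\ref{pdotheo}, not the full $n_E(\hbar)$). Second, it inserts cutoff symbols $Q_{\gamma_{k-1},\ldots,\gamma_l}$ and $\tilde{Q}_{\gamma_{l-1},\ldots,\gamma_1}$ which localize to trajectories following the prescribed itinerary; only after this localization do the relevant evolved symbols land in $S_\nu^{-\infty,0}$, precisely because on their support the accumulated Jacobian is controlled by~(\ref{allowedset}). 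The error introduced by inserting these cutoffs is shown to be $O(\hbar^{1-2\nu})$ \emph{only when applied to the vector} $\psi_\hbar^\gamma=\hat P_{\gamma_{k-1}}((k-l)\eta)\cdots\hat P_{\gamma_1}((-l+2)\eta)\psi_\hbar$, because that vector is already microlocalized where the cutoffs equal $1$ (this is where Theorem~\ref{pdotheo} about long products is invoked). Only then does one obtain a commutator of two genuine $\Psi_\nu^{-\infty,0}$ operators, to which the $\hbar^{1-2\nu}$ bound applies. In short, the right-hand side of~(\ref{pseudocomm}) contains the localizing product for a reason; your argument throws it away and therefore cannot work.
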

In this theorem, we underline that there are no particular reasons for the bracket $[P_{\gamma_{k}}(k\eta),P_{\gamma_{0}}]$ to be small: it will be in fact small thanks to the phase space localization induced by the operator $P_{\gamma_{k-1}}((k-1)\eta)\cdots P_{\gamma_{1}}(\eta)$.\\
Let $\gamma$ be a finite sequence as in the previous theorem. Denote $\displaystyle t(\gamma)=\sum_{j=0}^{k(\gamma)-1}f(\sigma^j\gamma)$. This quantity is less than $n_E(\hbar)$ in the setting of theorem~\ref{comm2}. There exists a unique integer $l(\gamma)<k(\gamma)$ such that:
$$\sum_{j=0}^{l(\gamma)-2}f(\sigma^j\gamma)\leq \frac{t(\gamma)}{2}<\sum_{j=0}^{l(\gamma)-1}f(\sigma^j\gamma).$$
In the following, the dependence of $l$ and $k$ in $\gamma$ will be often omitted for simplicity of notations and will be recalled only when it is necessary. This definition allows to write the quantity we want to bound
$$\left\|\left[P_{\gamma_{k}}(k\eta),P_{\gamma_{0}}\right]P_{\gamma_{k-1}}((k-1)\eta)\cdots P_{\gamma_{1}}(\eta)\psi_{\hbar}\right\|_{L^2}$$
in the following way:
\begin{equation}\label{pseudocomm10}\left\|\left[P_{\gamma_{k}}((k-l+1)\eta),P_{\gamma_{0}}((-l+1)\eta)\right]P_{\gamma_{k-1}}((k-l)\eta)\cdots P_{\gamma_{l}}(\eta) P_{\gamma_{l-1}}\cdots P_{\gamma_{1}}((-l+2)\eta)\psi_{\hbar}\right\|_{L^2}.
\end{equation}
The reason why we choose to write the quantity we want to bound in~(\ref{pseudocomm}) in the previous form instead of its original form is to have a more symmetric situation for our semiclassical analysis. To prove the bound in theorem~\ref{comm2}, a class of symbols taken from~\cite{DS} will be used (see~(\ref{symbol}) for a definition) and results about them are recalled in appendix~\ref{appendix}. Before starting the proof, using proposition~\ref{localization2}, we can restrict ourselves to observables carried on a thin energy strip around the energy layer $\mathcal{E}^{\theta}$. It means that the quantity we want to bound is the following norm:
\begin{equation}\label{pseudocomm2}\left\|\left[\hat{P}_{\gamma_{k}}((k-l+1)\eta),\hat{P}_{\gamma_{0}}((-l+1)\eta)\right]\hat{P}_{\gamma_{k-1}}((k-l)\eta)\cdots \hat{P}_{\gamma_{l}}(\eta)\hat{P}_{\gamma_{l-1}}\cdots \hat{P}_{\gamma_{k-1}}((-l+2)\eta)\psi_{\hbar}\right\|_{L^2},
\end{equation}
where $\hat{P}_i$ is now equal to $\Op_{\hbar}(P_i^f)$, where $P_i^f$ is compactly supported in $T^*\Omega_i\cap\mathcal{E}^{\theta}$ (see proposition~\ref{localization2}).

\subsubsection{Defining cutoffs}

If we consider quantity~(\ref{pseudocomm2}), we can see that because we consider large times $k\eta$, we can not estimate directly the norm of the bracket $\displaystyle\left[\hat{P}_{\gamma_{k}}((k-l+1)\eta),\hat{P}_{\gamma_{0}}((-l+1)\eta)\right]$ as there is no particular reason for $\hat{P}_{\gamma_{k}}((k-l+1)\eta)$ and $\hat{P}_{\gamma_{0}}((-l+1)\eta)$ to be pseudodifferential operators to which we can apply the classical rules from semiclassical analysis. However, the quantity we are really interested in is the norm of this bracket on the image of $\hat{P}_{\gamma_{k-1}}((k-l)\eta)\cdots \hat{P}_{\gamma_{l}}(\eta)\hat{P}_{\gamma_{l-1}}\cdots \hat{P}_{\gamma_{k-1}}((-l+2)\eta)$. So we will introduce some cutoff operators to localize the bracket we want to estimate on the image of $\hat{P}_{\gamma_{k-1}}((k-l)\eta)\cdots \hat{P}_{\gamma_{l}}(\eta)\hat{P}_{\gamma_{l-1}}\cdots \hat{P}_{\gamma_{k-1}}((-l+2)\eta)$. Then, as was discussed in section~\ref{heuristic}, we will have to verify that it defines a particular family of operators for which the Egorov theorem can be applied for large times.\\
First, we introduce a new family of functions $(Q_i)_{i=1}^K$ such that such that for each $1\leq i\leq K$, $Q_i$ belongs to $\mathcal{C}^{\infty}(T^*\Omega_i\cap\mathcal{E}^{\theta})$, $0\leq Q_i\leq 1$ and $Q_i\equiv 1$ on $\text{supp}P_i^f$. We then define two cutoffs associated to the strings $(\gamma_1,\cdots,\gamma_{l-1})$ and $(\gamma_{l},\cdots,\gamma_{k-1})$:
\begin{equation}\label{cutoff1}Q_{\gamma_{k-1},\cdots,\gamma_{l}}:=Q_{\gamma_{l}}\circ g^{-(k-l)\eta}\cdots Q_{\gamma_{k-1}}\circ g^{-\eta}\end{equation}
and
\begin{equation}\label{cutoff2}\tilde{Q}_{\gamma_{l-1},\cdots,\gamma_{1}}:=Q_{\gamma_{1}}\circ g^{\eta}\cdots Q_{\gamma_{l-1}}\circ g^{(l-1)\eta}.\end{equation}
The first point of our discussion will be to prove that Egorov theorem can be applied for large times to the pseudodifferential operators corresponding to these two symbols.\\
We prove the Egorov property for $Q_{\gamma_{k-1},\cdots,\gamma_{l}}$ for example (the proof works in the same way for the other one). Recall that one has the exact equality, for a symbol $a$:
\begin{equation}\label{egorov2}U^{-t}\Op_{\hbar}(a)U^t-\Op_{\hbar}(a(t))=\int_0^tU^{-s}(\text{Diff}a^{t-s})U^s ds,\end{equation}
where $a(t):=a\circ g^t$ and $\text{Diff}a^t:=\frac{\imath}{\hbar}[-\frac{\hbar^2\Delta}{2},\Op_{\hbar}(a(t))]-\Op_{\hbar}(\{H,a(t)\}).$ Here, we will consider $a:=Q_{\gamma_{k-1},\cdots,\gamma_{l}}$. One has, for $0\leq t\leq (k-l+1)\eta$:
$$Q_{\gamma_{k-1},\cdots,\gamma_{l}}(t):=Q_{\gamma_{k-1},\cdots,\gamma_{l}}\circ g^t=Q_{\gamma_{l}}\circ g^{-(k-l)\eta+t}\cdots Q_{\gamma_{k-1}}\circ g^{-\eta+t}.$$
There exists a unique integer $1\leq j\leq (k-l)$ such that $t-j\eta$ is negative and $t-(j-1)\eta$ is nonnegative. This allows us to rewrite:
$$Q_{\gamma_{k-1},\cdots,\gamma_{l}}(t)=\left(Q_{\gamma_{l}}\circ g^{-(k-l-j)\eta}\cdots Q_{\gamma_{k-j}}\right)\circ g^{-j\eta+t}\left(Q_{\gamma_{k-j+1}}\cdots Q_{\gamma_{k-1}}\circ g^{(j-2)\eta}\right)\circ g^{-(j-1)\eta+t}.$$
Using the last part of theorem~\ref{pdotheo} and its subsequent remark, we know that $Q_{\gamma_{l}}\circ g^{-(k-l-j)\eta}\cdots Q_{\gamma_j}$ and $Q_{\gamma_{j-1}}\cdots Q_{\gamma_{k-1}}\circ g^{(j-2)\eta}$ are symbols of the class $S_{\nu}^{-\infty,0}$ (see the appendix for a definition of this class of symbols), where $\nu:=\frac{1-\epsilon'+4\epsilon}{2}$. Moreover the constants in the bounds of the derivatives are uniform for the words $\gamma$ in the allowed set (see theorem~\ref{pdotheo} and proposition~\ref{symbolclass}). As $-\eta\leq t-j\eta<0\leq t-(j-1)\eta\leq \eta$ and as the class $S_{\nu}^{-\infty,0}$ is stable by product, we have then that $Q_{\gamma_1,\cdots,\gamma_{k-l}}(t)$ is in the class $S_{\nu}^{-\infty,0}$, for $0\leq t\leq (k-l+1)\eta$, with uniform bounds in $t$ and $\gamma$ in the allowed set. As, in~\cite{AN2}, we can verify that $\text{Diff}Q_{\gamma_{k-1},\cdots,\gamma_{l}}^t$ is in $\Psi^{-\infty,2\nu-1}_{\nu}$ and then apply the Calder\'on-Vaillancourt theorem for $\Psi^{-\infty,2\nu-1}_{\nu}$. As a consequence, there exists a constant $C$ depending only on the family $Q_i$ and on the derivatives of $g^s$ (for $-\eta\leq s\leq \eta$) such that
\begin{equation}\label{mainegorov}\forall0\leq t\leq (k-l+1)\eta,\ \ \ \|\Op_{\hbar}(Q_{\gamma_{k-1},\cdots,\gamma_{l}})(t)-\Op_{\hbar}(Q_{\gamma_{k-1},\cdots,\gamma_{l}}(t))\|_{\mathcal{L}(L^2(M))}\leq C\hbar^{1-2\nu}.\end{equation}
As we mentioned it in the heuristic of the proof (section~\ref{heuristic}), taking into account the support of the symbol, we have proved a 'local' Egorov property for a range of time that depends on the support of our symbol. Precisely, we have shown that the Egorov property holds until the stopping time defined in section~\ref{stoppingtime}.

\subsubsection{Proof of theorem~\ref{comm2}}

Before proving theorem~\ref{comm2}, we define (in order to have simpler expressions):
$$\psi^{\gamma}_{\hbar}:=\hat{P}_{\gamma_{k-1}}((k-l)\eta)\cdots \hat{P}_{\gamma_{1}}((-l+2)\eta)\psi_{\hbar}.$$
To prove theorem~\ref{comm2}, we need to bound quantity~(\ref{pseudocomm2}) and precisely to estimate~(\ref{pseudocomm2}), we have to estimate:
\begin{equation}\label{pseudocomm5}(\ref{pseudocomm2})=\left\|\left[\hat{P}_{\gamma_{0}}((-l+1)\eta),\hat{P}_{\gamma_{k}}((k-l+1)\eta)\right]\psi^{\gamma}_{\hbar}\right\|_{L^2(M)}.\end{equation}
Now we want to introduce our cutoff operators $\Op_{\hbar}(Q_{\bullet})$ in the previous expression:
$$\hat{P}_{\gamma_{0}}((-l+1)\eta)\hat{P}_{\gamma_{k}}(k-l+1)\eta)=\hat{P}_{\gamma_{0}}((-l+1)\eta)\left(Id-\left(\hat{P}_{\gamma_{k}}\Op_{\hbar}(Q_{\gamma_{k-1},\cdots,\gamma_{l}})\right)\left((k-l+1)\eta\right)\right)$$
$$+\left(\hat{P}_{\gamma_{k}}\Op_{\hbar}(Q_{\gamma_{k-1},\cdots,\gamma_{l}})\right)\left((k-l+1)\eta\right).$$
We will first estimate the norm $$\left\|\hat{P}_{\gamma_{0}}((-l+1)\eta)\left(Id-\left(\hat{P}_{\gamma_{k}}\Op_{\hbar}(Q_{\gamma_{k-1},\cdots,\gamma_{l}})\right)\left((k-l+1)\eta\right)\right)\psi^{\gamma}_{\hbar}\right\|_{L^2(M)}.$$ To do this, we first outline that $\hat{P}_{\gamma_k}$ is in $\Psi^{-\infty,0}(M)$ and $\Op_{\hbar}(Q_{\gamma_{k-1},\cdots,\gamma_{l}})$ is in $\Psi^{-\infty,0}_{\nu}(M)$. Using the standard rules for a product, we know that the previous expression can be transformed as follows:
$$\left\|\hat{P}_{\gamma_{0}}((-l+1)\eta)\left(Id-\Op_{\hbar}(P_{\gamma_k}^f Q_{\gamma_{k-1},\cdots,\gamma_{l}})((k-l+1)\eta)\right)\psi^{\gamma}_{\hbar}\right\|_{L^2(M)}+R_{\gamma}^{1}(\hbar),$$
where $\|R_{\gamma}^{1}(\hbar)\|_{L^2}\leq C\hbar^{1-2\nu}\|\psi^{\gamma}_{\hbar}\|_{L^2}$ (where $C$ is independent of $k-l$ as the bounds implied in the derivatives in theorem~\ref{pdotheo} are uniform for words $\gamma$ in the allowed set: see proposition~\ref{symbolclass}). We can apply the strategy of the previous section to prove an Egorov property for the operator $\Op_{\hbar}(P_{\gamma_{k}}^fQ_{\gamma_{k-1},\cdots,\gamma_{l}})$. So, up to a $\mathcal{O}_{L^2}(\hbar^{1-2\nu})$, $\Op_{\hbar}(P_{\gamma_{k}}^f Q_{\gamma_{k-1},\cdots,\gamma_{l}})((k-l+1)\eta)$ is equal to the pseudodifferential operator in $\Psi_{\nu}^{-\infty,0}$
$$\Op_{\hbar}\left((P_{\gamma_k}^fQ_{\gamma_{k-1},\cdots,\gamma_{l}})\circ g^{(k-l+1)\eta}\right)$$ supported in $g^{-\eta}T^*\Omega_{\gamma_{l}}\cap\cdots\cap g^{-(k-l+1)\eta}T^*\Omega_{\gamma_{k}}\cap\mathcal{E}^{\theta}$. Using then theorem~\ref{pdotheo}, the following holds:
$$\left(Id-\Op_{\hbar}\left((P_{\gamma_k}^fQ_{\gamma_{k-1},\cdots,\gamma_{l}})\circ g^{(k-l+1)\eta}\right)\right)\hat{P}_{\gamma_{k-1}}((k-l)\eta)\cdots \hat{P}_{\gamma_{1}}((-l+2)\eta)\psi_{\hbar}=\mathcal{O}_{L^2}(\hbar^{\infty}).$$
Even if the proof of this fact is rather technical, it is intuitively quite clear. In fact, if we suppose that the standard pseudodifferential rules (Egorov, composition) apply, $\hat{P}_{\gamma_{k-1}}((k-l)\eta)\cdots \hat{P}_{\gamma_{1}}((-l+2)\eta)$ is a pseudodifferential operator compactly supported in $g^{(l-2)\eta}T^*\Omega_{\gamma_1}\cap\cdots\cap g^{(l-k)\eta}T^*\Omega_{\gamma_{k-1}}\cap\mathcal{E}^{\theta}$. On this set, by definition of the cutoff operators ($Q_i\equiv 1$ on $\text{supp}(P_i)$), $(1-(P_{\gamma_k}^fQ_{\gamma_{k-1},\cdots,\gamma_{l}})\circ g^{(k-l+1)\eta})$ is equal to $0$. As a consequence, we consider the product of two pseudodifferential operators of disjoint supports: it is $\mathcal{O}_{L^2}(\hbar^{\infty})$. The statement of theorem~\ref{pdotheo} makes this argument work. To conclude the previous lines of the proof, we have
\begin{equation}\label{remainder5}\left\|\hat{P}_{\gamma_{0}}((-l+1)\eta)\left(Id-\left(\hat{P}_{\gamma_{k}}\Op_{\hbar}(Q_{\gamma_{k-1},\cdots,\gamma_{l}})\right)((k-l+1)\eta)\right)\psi^{\gamma}_{\hbar}\right\|_{L^2(M)}\leq \tilde{C}\hbar^{1-2\nu}\|\psi^{\gamma}_{\hbar}\|_{L^2(M)}.\end{equation}
Performing this procedure for the other operators, we finally obtain that the only quantity we need to bound to prove theorem~\ref{comm2} is the following quantity:
\begin{equation}\label{comm14}\left\|\left[(\hat{P}_{\gamma_{k}}\Op_{\hbar}(Q_{\gamma_{k-1},\cdots,\gamma_{l}}))((k-l+1)\eta),(\hat{P}_{\gamma_{0}}\Op_{\hbar}(\tilde{Q}_{\gamma_{l-1},\cdots,\gamma_{1}}))((-l+1)\eta)\right]\right\|_{\mathcal{L}(L^2(M))}.\end{equation}
Using the property of the product on $\Psi^{-\infty,0}_{\nu}$, we know that, up to a $\mathcal{O}_{L^2}(\hbar^{1-2\nu})$, the previous quantity is equal to
$$\left\|\left[\Op_{\hbar}(P_{\gamma_k}^fQ_{\gamma_{k-1},\cdots,\gamma_{l}})((k-l+1)\eta),\Op_{\hbar}(P_{\gamma_0}^f\tilde{Q}_{\gamma_{l-1},\cdots,\gamma_{1}})((-l+1)\eta)\right]\right\|_{\mathcal{L}(L^2(M))}.$$
Using the same method that in the previous section (which uses theorem~\ref{pdotheo}), we can prove an Egorov property for the two pseudodifferential operators that are in the previous bracket and show that, up to a $\mathcal{O}_{L^2}(\hbar^{1-2\nu})$, the quantity~(\ref{comm14}) is equal to
$$\left\|\left[\Op_{\hbar}((P_{\gamma_k}^fQ_{\gamma_{k-1},\cdots,\gamma_{l}})\circ g^{(k-l+1)\eta}),\Op_{\hbar}((P_{\gamma_0}^f\tilde{Q}_{\gamma_{l-1},\cdots,\gamma_{1}})\circ g^{(-l+1)\eta})\right]\right\|_{\mathcal{L}(L^2(M))}.$$
Using the pseudodifferential rules in $\Psi_{\nu}^{-\infty,0}(M)$ (proceeding as in the previous section, the two symbols stay in the good class of symbol using theorem~\ref{pdotheo}), we know that the previous bracket is in $\Psi_{\nu}^{-\infty,2\nu-1}$. Using the Calder\'on-Vaillancourt theorem, we know that quantity~(\ref{comm14}) is a $\mathcal{O}_{L^2}(\hbar^{1-2\nu}),$ where the constant depends only on the partition. This concludes the proof of theorem~\ref{comm2}.$\square$

\section{Products of many evolved pseudodifferential operators}
\label{bigpdotheo}

The goal of this section is to prove a property used in the proof of theorem~\ref{comm2}. Precisely, the following theorem states that the product of a large number of evolved pseudodifferential operators remains in a good class of pseudodifferential operators provided the range of times is smaller than the `local' Ehrenfest time. First, recall that using proposition~\ref{localization2}, we can restrict ourselves to observables carried on a thin energy strip around the energy layer $\mathcal{E}^{\theta}$. We underline that we do not suppose anymore that this thin energy strip is of size $\hbar^{1-\delta}$: we only need to have a small macroscopic neighborhood of the unit energy layer. Moreover, the class of symbols we will consider will be the class $S^{-\infty,0}_{\nu}$ (see~(\ref{symbol}) for a precise definition) with $\nu:=\frac{1-\epsilon'+4\epsilon}{2}$ ($<1/2$, see section~\ref{proof}).
\begin{theo}\label{pdotheo} Let $(Q_i)_{i=1}^K$ be a family of smooth functions on $T^*M$ such that for each $1\leq i\leq K$, $Q_i$ belongs to $\mathcal{C}^{\infty}(T^*\Omega_i\cap\mathcal{E}^{\theta})$ and $0\leq Q_i\leq 1$. Consider a family of indices $(\gamma_1,\cdots,\gamma_l)$ such that
$$\sum_{j=1}^{l-1}f(\gamma_{j+1},\gamma_{j})\leq\frac{n_E(\hbar)}{2}.$$
Then, for any $1\leq j\leq l$, one has
$$\Op_{\hbar}(Q_{\gamma_1})\Op_{\hbar}(Q_{\gamma_2})(-\eta)\cdots \Op_{\hbar}(Q_{\gamma_j})(-(j-1)\eta)=\Op_{\hbar}\left(A^{\gamma_1,\cdots,\gamma_j}\right)\left(-j\eta\right)+\mathcal{O}_{L^2}(\hbar^{\infty}),$$
where $A^{\gamma_1,\cdots,\gamma_j}$ is in the class $S^{-\infty,0}_{\nu}$. Precisely, one has the following asymptotic expansion:
$$A^{\gamma_1,\cdots,\gamma_j}\sim\sum_{p\geq 0}\hbar^p A^{\gamma_1,\cdots,\gamma_j}_p,$$
where $A^{\gamma_1,\cdots,\gamma_j}_p$ is in the class $S^{-\infty,2p\nu}_{\nu}$ (with the symbols semi norm uniform for $\gamma$ in the allowed set of sequences and $1\leq j\leq l$: see proposition~\ref{symbolclass}) and compactly supported in $g^{-\eta}T^{*}\Omega_{\gamma_{j}}\cap\cdots g^{-j\eta}T^{*}\Omega_{\gamma_{1}}\cap\mathcal{E}^{\theta}$. Finally the principal symbol $A^{\gamma_1,\cdots,\gamma_j}_0$ is given by the following formula:
$$A^{\gamma_1,\cdots,\gamma_j}_0=Q_{\gamma_j}\circ g^{\eta}\cdots Q_{\gamma_2}\circ g^{(j-1)\eta}Q_{\gamma_1}\circ g^{j\eta}.$$
\end{theo}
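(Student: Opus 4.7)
I would proceed by induction on $j$. For $j=1$ the claim reduces to one application of Egorov's theorem for the fixed time $\eta$ with full asymptotic expansion in the class $S_\nu^{-\infty,0}$: this yields $A_0^{\gamma_1}=Q_{\gamma_1}\circ g^\eta$ together with higher-order corrections supported in $g^{-\eta}T^*\Omega_{\gamma_1}\cap\mathcal{E}^\theta$, whose seminorms are $\hbar$-uniform since $\eta$ is fixed and bounded.

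For the inductive step, assuming the formula at rank $j-1$, the key remark is that the operator provided by the induction hypothesis and the new factor $\Op_\hbar(Q_{\gamma_j})(-(j-1)\eta)$ carry the same time label, so the whole product rewrites as
$$\bigl[\Op_\hbar(A^{\gamma_1,\dots,\gamma_{j-1}})\,\Op_\hbar(Q_{\gamma_j})\bigr]\bigl(-(j-1)\eta\bigr)+\mathcal{O}_{L^2}(\hbar^\infty).$$
The Moyal composition formula in $\Psi_\nu^{-\infty,0}$ (stable under composition, see the appendix) collapses the inner product into a single pseudodifferential operator whose leading symbol is $A_0^{\gamma_1,\dots,\gamma_{j-1}}\cdot Q_{\gamma_j}$. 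A final Egorov step of length $\eta$ then shifts the time label from $-(j-1)\eta$ to $-j\eta$ and composes the symbol with $g^\eta$. Unwinding the recursion gives the announced principal symbol $A_0^{\gamma_1,\dots,\gamma_j}=Q_{\gamma_j}\circ g^\eta\cdots Q_{\gamma_1}\circ g^{j\eta}$; and the support is dragged back by $g^{-\eta}$ at each step, producing the stated intersection $g^{-\eta}T^*\Omega_{\gamma_j}\cap\cdots\cap g^{-j\eta}T^*\Omega_{\gamma_1}\cap\mathcal{E}^\theta$.

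The main obstacle is keeping the symbol class $S_\nu^{-\infty,0}$, and its seminorms, under control \emph{uniformly in $j$ and in the allowed $\gamma$}. Every use of the Moyal formula or of Egorov introduces derivatives of $g^{k\eta}$ acting on the $Q_{\gamma_i}$, and these derivatives grow along the unstable direction according to the accumulated dilation. Here the stopping-time hypothesis $\sum_{k=1}^{j-1}f(\gamma_{k+1},\gamma_k)\leq n_E(\hbar)/2$ is decisive: in view of (\ref{continuity}) it translates into an upper bound $\hbar^{-(1-\epsilon')/2}$ on the total unstable dilation along the trajectory segment traced by $\gamma$. Since the unstable bundle is one-dimensional (this is where dimension $2$ is used in an essential way), the differential of $g^{j\eta}$ in that direction is essentially a scalar given by this exponential, and it fits within $\hbar^{-\nu}$ by the slack $\nu=(1-\epsilon'+4\epsilon)/2>(1-\epsilon')/2$; iterated derivatives each cost one further $\hbar^{-\nu}$, which is precisely the growth tolerated in $S_\nu^{-\infty,2p\nu}$ for the $p$-th Moyal correction. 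Uniformity in $\gamma$ within the allowed family follows from continuity of the stable/unstable distributions on $\mathcal{E}^\theta$, while the accumulation of the $\mathcal{O}(\hbar^\infty)$ remainders over the $\lesssim|\log\hbar|$ induction steps is harmless once the asymptotic expansions are truncated sufficiently far and Calder\'on--Vaillancourt is invoked in $\Psi_\nu^{-\infty,0}$.
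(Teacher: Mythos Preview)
Your plan is correct and follows essentially the same route as the paper: induction on $j$, with each step consisting of a Moyal composition with $Q_{\gamma_j}$ followed by a one-step Egorov shift of length $\eta$, and the stopping-time hypothesis used to bound the total unstable dilation $|d_\rho g^{j\eta}|$ by $\hbar^{-\nu}$. The paper organizes the uniformity-in-$j$ bookkeeping via an explicit triangular matrix recursion on the vector $(d^mA_p^j)_{m,p}$ together with a nilpotency count showing that only $O((N+1)^2)$ of the $\sim|\log\hbar|$ steps contribute nontrivial constants (the rest being pure multiplication by $Q_{\gamma_j}(g^t\rho)\,(d_\rho g^t)^{\otimes m}$), which is what prevents the seminorm constants from blowing up like $C^j$; your sketch gestures at this but you should be aware that the naive ``each step preserves $S_\nu^{-\infty,0}$'' does not by itself give uniform constants.
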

\begin{rema} We underline that the asymptotic expansion (except for the order $0$ term) is not intrinsically defined as it depends on the choice of coordinates on $M$. We also remark that this theorem holds in particular for the smooth partition of identity we considered previously on the paper. Note also that the the result can be rephrased by saying that $\Op_{\hbar}(Q_{\gamma_1})(j\eta)\Op_{\hbar}(Q_{\gamma_2})((j-1)\eta)\cdots \Op_{\hbar}(Q_{\gamma_j})(\eta)$ is, up to a $\mathcal{O}_{L^2}(\hbar^{\infty})$, a pseudodifferential operator of the class $\Psi_{\nu}^{-\infty,0}$ and of well determined support.
As we also have to consider `past' evolution, we mention that we can also suppose $\sum_{j=1}^{l-1}f(\gamma_{j},\gamma_{j+1})\leq\frac{n_E(\hbar)}{2}$. Under this assumption, we would have proved that $\Op_{\hbar}(Q_{\gamma_1})(-j\eta)\Op_{\hbar}(Q_{\gamma_2})(-(j-1)\eta)\cdots \Op_{\hbar}(Q_{\gamma_j})(-\eta)$ is, up to a $\mathcal{O}_{L^2}(\hbar^{\infty})$, a pseudodifferential operator of the class $\Psi_{\nu}^{-\infty,0}$ and of well determined support. These are exactly the properties we used in section~\ref{mainproof}.\end{rema}
The plan of the proof is the following. First, we will construct formally $A^{\gamma_1,\cdots,\gamma_j}$ and its asymptotic expansion in powers of $\hbar$. Then, we will check that these different symbols are in a good class. Finally, we will check that these operators approximate the product we considered. For simplicity of notations, we will forget (for a time) the dependence on $\gamma$ and denote the previous symbol $A^j$ for $l\geq j\geq1$.

\subsection{Definition of $A^{\gamma_1,\cdots,\gamma_l}$}

In this section, we construct formally the symbol $A^j$. The way to do it is by induction on $j$. First, we will see how to define formally $A^j$ from $A^{j-1}$. Then, using the formulas of the previous section, we will construct the formal order $N$ expansion associated to this $A^j$. We only construct what the order $N$ expansion should be regarding to the formal formulas.

\subsubsection{Definition at each step}

To construct $A^j$, we proceed by induction and at the first step, we consider $\Op_{\hbar}(Q_{\gamma_{1}})$ and we write it into the form $Op_{\hbar}(A^1)(-\eta)$. This means that we have defined formally for $0\leq t\leq\eta$:
$$\Op_{\hbar}(A^1(t)):=U^{-t}Op_{\hbar}(Q_{\gamma_{1}})U^{t}.$$
Using Egorov theorem for fixed time $\eta$ and the corresponding asymptotic expansion (see section~\ref{fixedegorov} for explicit formulas of the asymptotic expansion), we prove that, up to a $\mathcal{O}_{L^2}(\hbar^{\infty})$, $\Op_{\hbar}(Q_{\gamma_{1}})$ is equal to $\Op_{\hbar}(A^1(\eta))(-\eta)$, where $A^1(\eta)$ is in $S^{-\infty,0}$, given by the asymptotic expansion of the Egorov theorem and supported in $g^{-\eta}T^{*}\Omega_{\gamma_{1}}\cap\mathcal{E}^{\theta}$. We can continue this procedure formally. At the second step, we have $$\Op_{\hbar}(Q_{\gamma_{1}})\Op_{\hbar}(Q_{\gamma_{2}})(-\eta)=U^{\eta}\Op_{\hbar}(A^1(\eta))\Op_{\hbar}(Q_{\gamma_{2}})U^{-\eta}.$$
We want this quantity to be of the form $\Op_{\hbar}(A^2(\eta))(-2\eta)$. This means that we have defined formally for $0\leq t\leq\eta$:
$$\Op_{\hbar}(A^2(t)):=U^{-t}\Op_{\hbar}(A^1(\eta))\Op_{\hbar}(Q_{\gamma_{2}})U^{t}.$$
Using rules of pseudodifferential operators (see section~\ref{compositionpdo} and~\ref{fixedegorov}), we can obtain a formal asymptotic expansion for $A^2(\eta)$ (see next section) starting from the expansion of $A^1(\eta)$. One can easily check that this formal expansion is supported in $g^{-\eta}T^{*}\Omega_{\gamma_{2}}\cap g^{-2\eta}T^{*}\Omega_{\gamma_{1}}\cap\mathcal{E}^{\theta}$. Following the previous method, we will construct a formal expansion of $A^{j}(t)$ (for $0\leq t\leq\eta$) starting from the expansion of $A^{j-1}(\eta)$ (see next section). To do this, we will write at each step $1\leq j\leq l$, \begin{equation}\label{inductionformula}\Op_{\hbar}(A^j(t)):=U^{-t}\Op_{\hbar}(A^{j-1}(\eta))\Op_{\hbar}(Q_{\gamma_{j}})U^{t}.\end{equation}
We also introduce the intermediate operator
\begin{equation}
\Op_{\hbar}(\overline{A}^j):=\Op_{\hbar}(A^{j-1}(\eta))\Op_{\hbar}(Q_{\gamma_{j}}).
\end{equation}
With this definition, we will have
$$\Op_{\hbar}(A^j(\eta))(-j\eta):=\left(\Op_{\hbar}(A^{j-1}(\eta))\Op_{\hbar}(Q_{\gamma_{j}})\right)\left(-(j-1)\eta\right).$$
Using again rules of pseudodifferential calculus (see section~\ref{compositionpdo} and~\ref{fixedegorov}), we can obtain a formal asymptotic expansion for $A^j(t)$ (see next section) starting from the expansion of $A^{j-1}(\eta)$. One can easily check that this formal expansion is supported in $\displaystyle g^{-t}\left(T^{*}\Omega_{\gamma_{j}}\cap\cdots\cap g^{-j\eta}T^{*}\Omega_{\gamma_{1}}\right)\cap\mathcal{E}^{\theta}$.\\
In the next section, we will use the induction formula~(\ref{inductionformula}) to deduce the $\hbar$-expansion of $A^{j}(t)$ from the expansion for the composition of $\Op_{\hbar}(A^{j-1}(\eta))$ and $\Op_{\hbar}(Q_{\gamma_{j}})$ and from the expansion for the Egorov theorem for times $0\leq t\leq\eta$. At each step $1\leq j\leq l$ of the induction, we will have to prove that $A^j$ stays in a good class of symbols to be able to continue the induction.

\subsubsection{Definition of the order $N$ expansion}

We fix a large integer $N$ (to be determined). We study the previous construction by induction up to $\mathcal{O}(\hbar^N)$. From this point, we truncate $A^j(t)$ at the order $N$ of its expansion. First, we see how we construct the symbols $A^j(t)$ by induction. To do this, we use the formulas for the asymptotic expansions for the composition of pseudodifferential operators and for the Egorov theorem (see section~\ref{compositionpdo} and~\ref{fixedegorov}). Suppose that $$A^{j-1}(\eta)=\sum_{p=0}^{N}\hbar^pA^{j-1}_p(\eta)$$ is well defined, we have to define the expansion of $A^j(t)$ from the asymptotic expansion of $A^{j-1}(\eta)$, for $0\leq t\leq\eta$. First, we define:
\begin{equation}\label{composedsymbol}\overline{A}^j:=\sum_{p=0}^{N}\hbar^p\overline{A}_p^j,\ \ \ \ \
\text{where}\ \overline{A}_p^j:=\sum_{r=0}^{p}\left(A^{j-1}_{p-r}(\eta)\sharp_M Q_{\gamma_j}\right)_{r}.\end{equation}
The symbol $\sharp_M$ represents an analogue on a manifold of the Moyal product (see appendix~\ref{compositionpdo}): $(a\sharp_M b)_p$ is the order $p$ term in the expansion of the symbol of $\Op_{\hbar}(a)\Op_{\hbar}(b)$. Recall from the appendix that $(A^{j-1}_{p-q}\sharp_M Q_{\gamma_j})_{q}$ is a linear combination (that depends on the local coordinates and on the $(Q_i)_i$) of the derivatives of order less than $q$ of $A^{j-1}_{p-q}(\eta)$. Using proposition~\ref{exactegorov} in appendix~\ref{fixedegorov}, one has the following order $N-p$ expansion, for the symbol of the operator $U^{-t}\Op_{\hbar}(\overline{A}^j_p)U^t$,
$$\overline{A}^j_p:=\sum_{k=0}^{N-p}\hbar^k\overline{A}^j_{p,k}(t),$$
where $\overline{A}^j_{p,0}=\overline{A}^j_p\circ g^t$ and $\overline{A}^j_{p,k}(t):=\sum_{l=0}^{k-1}\int_0^t\left\{H,\overline{A}^j_{p,l}(t-s)\right\}_{M}^{(k,l)}\left(g^{s}(\rho)\right)ds.$ Then, we can define $A^j(t)$ using these different expansions. Precisley, we define
$$A^j(t):=\sum_{p=0}^N\hbar^pA_p^j(t)\ \text{where, for}\ 0\leq p\leq N,\ A_p^j(t):=\sum_{q=0}^p\overline{A}^j_{p-q,q}(t).$$
This construction is the precise way we want to define the asymptotic expansion of the symbol $A^j(t)$ in theorem~\ref{pdotheo}. If we want the theorem to be valid, we have to check now that the remainders we forget at each step are negligible (with an arbitrary high order in $\hbar$). To do this, we will first have to control at each step $j$ the derivatives of $A^j(t)$ (see next section).
\begin{rema} The support of $A^j_p(t)$ is included in $g^{-t}\left(T^*\Omega_{\gamma_j}\cap\cdots\cap g^{-(j-1)\eta}T^*\Omega_{\gamma_1}\right)\cap\mathcal{E}^{\theta}$ as the support of every $\overline{A}^j_{p,k}(t)$ is.\end{rema}
Finally, we underline that, according to our construction, $A^j_p(t)$ can be written as follows:
\begin{equation}\label{dep-lineaire}
A^j_p(t):=\left(A^{j-1}_p(\eta)Q_{\gamma_j}\right)\circ g^t+\sum_{r=1}^p(A_{p-r}^{j-1}\sharp_M Q_{\gamma_j})_r\circ g^t+\sum_{q=1}^p\sum_{l=0}^{q-1}\int_0^t\left\{H,\overline{A}^j_{p-q,l}(t-s)\right\}_{M}^{(q,l)}\left(g^{s}(\rho)\right)ds.
\end{equation}
For the following, we need to know precisely on how many derivatives of $A^{j-1}$ depends $A^j$. We analyse the three terms of the previous sum separately:
\begin{itemize}
\item the first term is explicit and it depends linearly on $A^{j-1}_p$;
\item according to appendix~\ref{compositionpdo}, the second term depends linearly on $(\partial^{\alpha}A^{j-1}_{p-r})_{1\leq r\leq p, |\alpha|\leq r}$;
\item according to corollary~\ref{number-derivatives}, the third term depends linearly on $(\partial^{\alpha}\overline{A}^{j}_{p-q})_{1\leq q\leq p, |\alpha|\leq 2q}$ and consequently, according to appendix~\ref{compositionpdo}, it depends linearly on $(\partial^{\alpha}A^{j-1}_{p-r})_{1\leq r\leq p, |\alpha|\leq 2r}$.
\end{itemize}

\subsection{Estimates of the derivatives}

The goal of the first part of this section is to prove the following lemma.
\begin{lemm}\label{symbolclass2} Let $N$ be a fixed integer. Fix also two integers $0\leq p\leq N$ and $m\leq 2(N-p+1)$. Then, there exists a constant $C(m,p)$ such that for all $j\geq 1$ and for all $\rho$ in the set $g^{-t}\left(T^*\Omega_{\gamma_j}\cap\cdots\cap g^{-(j-1)\eta}T^*\Omega_{\gamma_1}\right)\cap\mathcal{E}^{\theta}$,
$$\forall 0\leq t\leq\eta,\ |d^mA^j_p(t,\rho)|\leq C(m,p)j^{m+2p^2+1}|d_{\rho}g^{t+(j-1)\eta}|^{m+2p}.$$
If $\rho$ is not in this set, the bound is trivially $0$ by construction. Here the constant $C(m,p)$ depends only on $m$, $p$, the atlas we chose for the manifold and the size of the $(\Omega_{\gamma})_{\gamma}$.
\end{lemm}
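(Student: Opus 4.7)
I argue by induction on $j\geq 1$, treating the admissible pairs $(p,m)$ simultaneously and exploiting the explicit recursion~\eqref{dep-lineaire} together with the cocycle identity $d_\rho g^{t+(j-1)\eta}=d_{g^t\rho}\,g^{(j-1)\eta}\circ d_\rho g^t$. For the base case $j=1$, the symbols $A^1_p(t)$ are nothing but the coefficients of the standard fixed-time Egorov expansion of $\Op_{\hbar}(Q_{\gamma_1})$ over the compact interval $t\in[0,\eta]$, so the claim reduces directly to Proposition~\ref{exactegorov} applied to the smooth compactly supported symbol $Q_{\gamma_1}$: the identity $|d_\rho g^{t+(j-1)\eta}|=|d_\rho g^t|$ furnishes the right flow factor, and $1=j^{m+2p^2+1}$ the right polynomial factor.

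For the inductive step, assuming the claim at level $j-1$, I estimate each of the three summands of~\eqref{dep-lineaire} separately. For the pure composition $(A^{j-1}_p(\eta)\,Q_{\gamma_j})\circ g^t$, Fa\`a di Bruno applied to $d^m(\cdot\circ g^t)$ and Leibniz applied to the product reduce the estimate (using boundedness, for $t\in[0,\eta]$, of $|d^k g^t|$ and $|d^k Q_{\gamma_j}|$) to bounding derivatives of order $\leq m$ of $A^{j-1}_p(\eta)$ at the point $g^t\rho$. The inductive hypothesis then yields $(j-1)^{m+2p^2+1}|d_{g^t\rho}\,g^{(j-1)\eta}|^{m+2p}$, and the cocycle together with boundedness of $|d_\rho g^t|^{\pm 1}$ for $t\in[0,\eta]$ collapses this into $|d_\rho g^{t+(j-1)\eta}|^{m+2p}$, the excess bounded-time factors being absorbed into $C(m,p)$. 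The Moyal-product term $\sum_{r=1}^p(A^{j-1}_{p-r}\sharp_M Q_{\gamma_j})_r\circ g^t$ is handled identically, the only difference being that by Appendix~\ref{compositionpdo} one now needs derivatives of $A^{j-1}_{p-r}(\eta)$ of order $\leq m+r$; since $r\leq p$, the extra flow factors are again absorbed into $C(m,p)$.

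The integral term $\sum_{q=1}^p\sum_{l=0}^{q-1}\int_0^t\{H,\overline{A}^j_{p-q,l}(t-s)\}^{(q,l)}_M(g^s\rho)\,ds$ is the most delicate. By Corollary~\ref{number-derivatives} its integrand depends linearly on derivatives of $\overline{A}^j_{p-q}$ of order $\leq m+2q$, and by~\eqref{composedsymbol} these in turn depend linearly on derivatives of $A^{j-1}_{p-r}(\eta)$ of order $\leq m+2r$ for $1\leq r\leq p$. Bounding the integrand pointwise by the inductive hypothesis, summing over the $O(p^2)$ index pairs $(q,l)$, and using $\int_0^t ds\leq\eta$, I get a contribution bounded by $C(m,p)(j-1)^{m+2p^2+1}|d_\rho g^{t+(j-1)\eta}|^{m+2p}$. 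Combining the three contributions and using $(j-1)^{m+2p^2+1}\leq j^{m+2p^2+1}$ after enlarging the constant $C(m,p)$ closes the induction.

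The main obstacle is the combinatorial bookkeeping of how many derivatives of the lower-order symbol $A^{j-1}_{p-r}(\eta)$ are consumed at each step — through $\sharp_M$, through the iterated Poisson brackets $\{H,\cdot\}^{(q,l)}_M$ in the Egorov remainder, and through Fa\`a di Bruno for the composition with $g^t$ — together with the verification that the resulting growth in $j$ is merely the polynomial $j^{m+2p^2+1}$ rather than exponential. Structurally, this succeeds because each application of~\eqref{dep-lineaire} introduces only a bounded (in $j$) amount of additional differentiation, so that the sole genuinely $j$-dependent flow factor is $|d_\rho g^{(j-1)\eta}|$, which recombines cleanly via the cocycle into the advertised $|d_\rho g^{t+(j-1)\eta}|^{m+2p}$ as $j$ increases.
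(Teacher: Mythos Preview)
Your inductive scheme has a genuine gap: it cannot close with a constant $C(m,p)$ that is uniform in $j$. The problem is already visible in the ``pure composition'' term. Its top Fa\`a di Bruno/Leibniz contribution is exactly
\[
Q_{\gamma_j}(g^t\rho)\,d^m_{g^t\rho}A^{j-1}_p(\eta)\cdot(d_\rho g^t)^{\otimes m}.
\]
Applying the inductive hypothesis in norm gives a bound with the factor $|d_{g^t\rho}g^{(j-1)\eta}|^{m+2p}|d_\rho g^t|^{m}$, and to convert this into $|d_\rho g^{t+(j-1)\eta}|^{m+2p}$ via the cocycle and submultiplicativity you must pay a factor $L^{2m+2p}$ with $L=\sup_{0\le t\le\eta}\max(|d g^t|,|d g^{-t}|)>1$. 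This factor recurs at every step $j\to j+1$, so what your argument actually proves is a bound of the form $C(m,p)\,L^{(2m+2p)j}\,(j-1)^{m+2p^2+1}|d_\rho g^{t+(j-1)\eta}|^{m+2p}$. The sentence ``after enlarging the constant $C(m,p)$ closes the induction'' hides the fact that this enlargement is needed \emph{at each step}; since $(j-1)^{m+2p^2+1}/j^{m+2p^2+1}\to 1$, no fixed choice of $C(m,p)$ can absorb the recurring factor $L^{2m+2p}$.

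The paper avoids this precisely by \emph{not} passing to norms at each step. It iterates the recursion all the way down, writing $d^mA^j_p$ as a sum over strings $\mathbf{M}^j_{\epsilon_j}\cdots\mathbf{M}^2_{\epsilon_2}\mathbf{A}^1$ with $\epsilon_i\in\{0,1,2\}$. Long runs of the diagonal matrices $\mathbf{M}_0$ compose \emph{exactly} via the chain rule, $(d_\rho g^t)^{\otimes m}\circ(d_{g^t\rho}g^\eta)^{\otimes m}\circ\cdots=(d_\rho g^{t+(j-1)\eta})^{\otimes m}$, with no norm loss and no $L$ factor; this is the content of the paper's bound $|\mathbf{M}_0^{j+j'}\cdots\mathbf{M}_0^{j}\mathbf{A}^{j-1}_{(p,m)}|\le|d_\rho g^{t+j'\eta}|^m|\mathbf{A}^{j-1}_{(p,m)}|$. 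The off-diagonal matrices $\mathbf{M}_1,\mathbf{M}_2$ do introduce bounded constants, but by nilpotence each contributing string contains at most $m+2p^2$ of them, so their product is a fixed constant independent of $j$. The polynomial $j^{m+2p^2+1}$ then arises solely from counting how many strings contribute, not from compounding step-by-step constants. Your induction collapses this structure by taking norms too early; to repair it you would essentially have to reproduce the paper's string decomposition.
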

Once this lemma will be proved, we will check that it also tells us that the $A_p^j$'s are in a nice class of symbols.

\subsubsection{Proof of lemma~\ref{symbolclass2}}

To make all the previous pseudodifferential arguments work, we will have to obtain estimates on the $m$-differential forms $d^mA_p^j$, for each $m\leq2(N+1-p)$. If we have estimates on these derivatives, we will then check that all the asymptotic expansions given by the pseudodifferential theory are valid. To do these estimates, we will have to understand the number of derivatives that appear when we repeat the induction formula~(\ref{inductionformula}). The spirit of this proof is the same as in~\cite{AN2} (section $3.4$) when they iterate the WKB expansion $\mathcal{K}|\log\hbar|$ times. We define a vector $\mathbf{A}^j$ with entries $\mathbf{A}^j_{(p,m)}(t,\rho):=d^m_{\rho}A_p^j(t)$ (where $0\leq p\leq N$ and $0\leq m\leq 2(N-p+1)$). Precisely, we order it by the following way, for $0\leq t\leq \eta$ and $\rho\in T^*M$,
$$\mathbf{A}^j=\mathbf{A}^j(t,\rho):=\begin{cases}(A_0^j,dA_0^j,\cdots,d^{2(N+1)}A_0^j,\\
A_1^j,dA_1^j,\cdots, d^{2N}A_1^j,\\
\cdots,\\
A_{N}^j,dA_{N}^j,d^{2}A_{N}^j).
\end{cases}$$
The induction formula~(\ref{dep-lineaire}) of the previous section can be rewritten under the following form
\begin{equation}\label{ind-formula-lemmePDO}A^j_p(t,\rho)=\left(A_p^{j-1}(\eta)Q_{\gamma_j}\right)\circ g^t(\rho)+L^j(t)(\mathbf{A}^{j-1}(\eta))(\rho),\end{equation}
where $L^j(t)$ acts linearly on $\mathbf{A}^{j-1}_{(p-q,m)}(\eta)$, where $q\geq 1$ and $m\leq 2q$. We underline that this linear application depends on derivatives of $g^s$ for $0\leq s\leq \eta$, on the choice of the coordinates and on the maps $Q_j$. We would also like to have an expression for $d^m_{\rho}A_p^j(t)$ for $m\leq 2(N+1-p)$. To do this, we start by writing that for an observable $a$, one has
$$d^m_{\rho}(a\circ g^t):=\sum_{l\leq m}d^l_{g^t\rho}a.\theta_{m,l}(t,\rho),$$
where $\theta_{m,l}(t,\rho)$ sends $(T_{\rho}T^*M)^m$ on $(T_{g^t\rho}T^*M)^l$. We can write the explicit form of $\theta_{m,m}$:
$$\theta_{m,m}(t,\rho):=\left(d_{\rho}g^t\right)^{\otimes m}.$$
Using these relations, we can rewrite the induction formula~(\ref{ind-formula-lemmePDO}) as follows:
$$\mathbf{A}^j(t)=(\mathbf{M}^j_{0}(t)+\mathbf{M}^j_{1}(t)+\mathbf{M}^j_{2}(t))\mathbf{A}^{j-1}(\eta),$$
where an exact expression of $\mathbf{M}_{0}^j$ is given by
$$\left(\mathbf{M}_{0}^j\mathbf{A}^{j-1}\right)_{(p,m)}(t,\rho):=Q_{\gamma_j}(g^t\rho)\times\mathbf{A}^{j-1}_{(p,m)}(\eta,g^t\rho).\theta_{m,m}(t,\rho).$$
In particular, $\mathbf{M}_{0}^j$ is a diagonal matrix. We will not give explicit expression for the two other matrices. We only need to know that the matrix $\mathbf{M}^j_{1}(t)$ relates $\mathbf{A}^j_{p,m}(t)$ to $(\mathbf{A}^{j-1}_{p,l}(\eta))_{l<m}$ and that the matrix $\mathbf{M}^j_{2}(t)$ relates $\mathbf{A}^j_{p}(t)$ to $(\mathbf{A}^{j-1}_{q}(\eta))_{q<p}$. Iterating the induction formula, one then has:
$$\mathbf{A}^{j}(t):=\sum_{\epsilon_2,\cdots,\epsilon_j=0}^{2}\mathbf{M}_{\epsilon_j}^{j}(t)\mathbf{M}_{\epsilon_{j-1}}^{j-1}(\eta)\cdots\mathbf{M}_{\epsilon_2}^{2}(\eta)\mathbf{A}^1(\eta).$$
From this expression, one can estimate how many terms contributes to the definition of $\mathbf{A}^j_{(p,m)}$. For instance, suppose that $\displaystyle\left|\{j':\epsilon_{j'}=2\}\right|> p$, the contribution of such a string of matrices to $\mathbf{A}^j_{(p,m)}$ is $0$ (using the nilpotence property). We can also give an upper bound on the number of terms of type $\mathbf{M}^*_{1}$ in string of matrices that contributes to $\mathbf{A}^j_{(p,m)}$. To do this, we underline that the action of a matrix of the type $\mathbf{M}^*_{1}$ add a block equal to $0$ at the begining of every $\mathbf{A}_p^*$ (as it is nilpotent). In particular, consider a given block $\mathbf{A}_p^*$ of the form $(0,\cdots,0,*)$ (where the $(l_0-2p'p)$ first terms are equal to $0$). After the action of a series of $\mathbf{M}^*_{1}$ (say $l_1$) and of $\mathbf{M}^*_{0}$ (in any order), we get a $p$-block of the form $(0,\cdots,0,*)$, where the $(l_0+l_1-2p'p)$ first terms are equal to $0$. On the other hand, we know that, if $\mathbf{A}^j:=\mathbf{M}_2^{j}(\eta)\mathbf{A}^{j-1}$, then the term of order $(p,m)$ depends only on $(\mathbf{A}^{j-1}_{q,r})_{q\leq p-1, r\leq2(p-q)+m}$. So after the action of a matrix $\mathbf{M}^*_{2}$, the $p$-block is still of the form $(0,\cdots,0,*)$, where now only the $(l_0+l_1-2(p'+1)p)$ first terms are equal to $0$. By an immediate induction, we find that the contribution of a string of matrices to $\mathbf{A}^j_{p,m}$ is $0$ if $\displaystyle\left|\{j':\epsilon_{j'}=1\}\right|-2p\left|\{j':\epsilon_{j'}=2\}\right|> m$.\\
As a conclusion, the product of matrices that contributes to the expression of $\mathbf{A}^j_{(p,m)}$ can only be non zero if $\displaystyle\left|\{j':\epsilon_{j'}=2\}\right|\leq p$ and $\displaystyle\left|\{j':\epsilon_{j'}=1\}\right|\leq m+2p\left|\{j':\epsilon_{j'}=2\}\right|$. As a consequence, for large $j$, to be non zero, a string of matrices need to be made of at most $(N+1)^2$ matrices of the form $\mathbf{M}^j_{\epsilon}$ (for $\epsilon\in\{1,2\}$). Finally, we need to compute the number of string of matrices that contributes to a given $\mathbf{A}^j_{(p,m)}$. To do this, we consider the set of symbols $\{(\epsilon_1,\cdots,\epsilon_k):k\leq m+2p^2,\ \epsilon_j\in\{1,2\}\}.$ For a given $(\epsilon_1,\cdots,\epsilon_k)$ in this set, the number of ways of putting these symbols in a string of length $j$ is bounded by $j^k$. Moreover, we know that there are at most $2^k$ sequences of length $k$. These two remarks implies that the number of string of matrices that contributes to a given $\mathbf{A}^j_{(p,m)}$ is bounded by $\sum_{k=0}^{m+2p^2}(2j)^k$, which is a $\mathcal{O}((2j)^{m+2p^2+1})$.\\
Then, to estimate the norm of the derivatives of $A^j$, we should look how the different matrices act. First we study the action of the diagonal matrix. As $0\leq Q_{\gamma_j}\leq 1$, one has that, for $0\leq t\leq \eta$ and for any $\rho\in g^{-t}\left(T^*\Omega_{\gamma_j}\cap\cdots\cap g^{-(j-1)\eta}T^*\Omega_{\gamma_1}\right)\cap\mathcal{E}^{\theta}$ (otherwise the following quantity is clearly equal to $0$),
$$|\mathbf{M}_{0}^j\mathbf{A}^{j-1}_{(p,m)}(t,\rho)|\leq |d_{\rho}g^{t}|^m|\mathbf{A}^{j-1}_{(p,m)}(\eta,g^t(\rho))|.$$
We note that we can iterate this bound and find, for any $j$ and $j'$ in $\mathbb{N}$, we have, for any $0\leq t\leq\eta$,
$$|\mathbf{M}_{0}^{j+j'}\cdots\mathbf{M}_{0}^j\mathbf{A}^{j-1}_{(p,m)}(t,\rho)|\leq |d_{\rho}g^{t+j'\eta}|^m|\mathbf{A}^{j-1}_{(p,m)}(\eta,g^{t+j'\eta}(\rho))|.$$
Now, using the fact that for every iteration, we consider a fixed interval of time $[0,\eta]$ and the fact that the set of observables $(Q_l)_{l=1}^K$ is fixed, we get that there exists a constant $C(m,p)$ such that, for $\epsilon\in\{1,2\}$,
$$\sup_{0\leq t\leq \eta}\|\mathbf{M}_{\epsilon}^j\mathbf{A}^{j-1}_{(p,m)}(t)\|_{L^{\infty}}\leq C(m,p)\max_{m'\leq m}\max_{q\leq p}\|\mathbf{A}^{j-1}_{(q,m')}\|_{L^{\infty}}.$$
The only thing we need to know is that the constant depends only on $m$, $p$, the manifold, $\eta$, the coordinate maps and the partition. The difference with the action of the diagonal matrix is that we have constant prefactor that can accumulate and become large (without any precise control on it).\\
These different observations allow us to prove lemma~\ref{symbolclass2}. In fact, by construction, the total number of derivatives of $g^t$ that appears in the definition of $\mathbf{A}^{j}_{(p,m)}(t)$ is bounded by $m+2p$. Moreover, a given string $\mathbf{M}_{\epsilon_j}^{j}(t)\mathbf{M}_{\epsilon_{j-1}}^{j-1}(\eta)\cdots\mathbf{M}_{\epsilon_2}^{2}(\eta)$ is made of long string only made of matrices of the form $\mathbf{M}_{0}^{*}(\eta)$ and of short strings of matrices of the form  $\mathbf{M}_{\epsilon}^{*}(\eta)$ (where $\epsilon\in\{1,2\}$). We know that only the long strings made of $\mathbf{M}_{0}^{*}(\eta)$ will contribute to a given $\mathbf{A}^{j}_{(p,m)}(t)$ and as we know that the number of derivatives involved is bounded by $m+2p$, we have, for any $\rho\in g^{-t}\left(T^*\Omega_{\gamma_j}\cap\cdots\cap g^{-(j-1)\eta}T^*\Omega_{\gamma_1}\right)\cap\mathcal{E}^{\theta}$,
$$\left|\left(\mathbf{M}_{\epsilon_j}^{j}(t)\mathbf{M}_{\epsilon_{j-1}}^{j-1}(\eta)\cdots\mathbf{M}_{\epsilon_2}^{2}(\eta)\mathbf{A}^1(\eta)\right)_{(p,m)}(\rho)\right|\leq C'(p,m)|d_{\rho}g^{t+(j-1)\eta}|^{m+2p}\|\mathbf{A}^1(\eta)\|.$$
Finally, the number of matrices that contributes to the $(p,m)$-term of the vector $\mathbf{A}^j$ is bounded by $\mathcal{O}((2j)^{m+2p^2+1})$. It gives that, for any $\rho\in g^{-t}\left(T^*\Omega_{\gamma_j}\cap\cdots\cap g^{-(j-1)\eta}T^*\Omega_{\gamma_1}\right)\cap\mathcal{E}^{\theta}$,
$$|\mathbf{A}^{j}_{(p,m)}(t,\rho)|\leq \tilde{C}(p,m)j^{m+2p^2+1}|d_{\rho}g^{t+(j-1)\eta}|^{m+2p}\|\mathbf{A}^1(\eta)\|.\square$$

\subsubsection{Class of symbol of each term of the expansion}
\label{symbolclasspart}
Using the previous lemma, we want to show that $A^j_p(t)$ is an element of $S^{-\infty,2p\nu}_{\nu}$. Let $\rho$ be an element of $\displaystyle g^{-t}\left(T^*\Omega_{\gamma_j}\cap\cdots\cap g^{-(j-1)\eta}T^*\Omega_{\gamma_1}\right)\cap\mathcal{E}^{\theta}$. Using the fact that $E^u$ is of dimension $1$, we get that for any positive $t$, $|d_{\rho}g^{t}|\leq J^{u,t}(\rho)^{-1}$, where $\displaystyle J^{u,t}(\rho):=\det\left(dg^{-t}_{|E^{u}(g^t\rho)}\right).$ Then we can write the multiplicativity of the determinant and get
$$J^{u,t+(j-1)\eta}(\rho)=J^{u,t}(\rho)J^{u,\eta}(g^{t}\rho)J^{u,\eta}(g^{t+\eta}\rho)\cdots J^{u,\eta}(g^{t+(j-2)\eta}\rho).$$
\begin{rema} Before continuing the estimate, let us underline some property of the Jacobian. Suppose $S$ is a positive integer and $1/\eta$ also (large enough to be in our setting). We have, for all $0\leq k\leq 1/\eta-1$,
$$J^u(g^{k\eta}\rho)J^u(g^{1+k\eta}\rho)\cdots J^u(g^{S-1+k\eta}\rho)=J^{u,\eta}(g^{k\eta}\rho)J^{u,\eta}(g^{(k+1)\eta}\rho)\cdots J^{u,\eta}(g^{S+(k-1)\eta}\rho),$$
where $J^u(\rho)$ is the unstable Jacobian in time $1$ that appears in the main theorem~\ref{maintheo}. We make the product over $k$ of all these equalities and we get
$$J^{u}(\rho)^{\eta}J^u(g^{\eta}\rho)^{\eta}\cdots J^{u}(g^{S-\eta}\rho)^{\eta}\leq C(\eta) J^{u,\eta}(g\rho)J^{u,\eta}(g^{1+\eta}\rho)\cdots J^{u,\eta}(g^{S-\eta}\rho),$$
where $C(\eta)$ only depends on $\eta$ and does not depend on $S$.\end{rema}
Finally, using previous remark and inequality~(\ref{continuity}), the following estimate holds, for $\rho$ in  $g^{-t}\left(T^*\Omega_{\gamma_j}\cap\cdots\cap g^{-(j-1)\eta}T^*\Omega_{\gamma_1}\right)\cap\mathcal{E}^{\theta}$:
$$|d_{\rho}g^{t+(j-1)\eta}|\leq C(\eta) e^{j\epsilon\eta a_0} J^{u}_{\eta}(\gamma_{j},\gamma_{j-1})^{-\eta}J^{u}_{\eta}(\gamma_{j-1},\gamma_{j-2})^{-\eta}\cdots J^{u}_{\eta}(\gamma_{2},\gamma_{1})^{-\eta}$$
with $C(\eta)$ independent of $j$. Then, one has
$$|d_{\rho}g^{t+(j-1)\eta}|\leq C(\eta) e^{l(\gamma)\epsilon\eta a_0}e^{t(\gamma)},$$
where $t(\gamma)=\sum_{j=0}^{l-1}f(\gamma_{j+1},\gamma_j)$. As $t(\gamma)\leq n_{E}(\hbar)/2$, this last quantity is bounded by $\hbar^{\frac{\epsilon'-1}{2}-\epsilon}$ (as $l(\gamma)a_0\eta\leq n_E(\hbar)/2$). Using lemma~\ref{symbolclass2}, we want to estimate the $m$ derivatives of the symbol $A^j_p$. According to the previous paragraph, they can be estimated up to order $2(N+1-p)$. To get a control on an arbitrary order $m$, we can fix a large $\tilde{N}$ such that $m\leq2(\tilde{N}-N)$ and use the result of the previous section for this $\tilde{N}$. Finally, we have, for $p<N$, $m\in\mathbb{N}$ and $0\leq t\leq\eta$,
\begin{equation}\label{estimder2}
|d^mA_p^j(t,x)|\leq \tilde{C}(m,p)\hbar^{(m+2p)(\frac{\epsilon'-1}{2}-2\epsilon)}.\end{equation}
Here appears the fact that we only apply the backward quantum evolution for times $l$ (we also used the fact that $j=\mathcal{O}(|\log\hbar|)$). In fact, as we want our symbols to be in the class $S^{-\infty,.}_{\nu}$, we need derivatives to lose at most a factor $\hbar^{-1/2}$ (this would have not been the case if we had considered times of size $n_{E}(\hbar)$ instead of size $n_E(\hbar)/2$). The previous estimate~(\ref{estimder2}) is uniform for all the $\gamma$ in the allowed set of theorem~\ref{pdotheo}.\\
Finally, to summarize this section, we can write the following proposition:
\begin{prop}\label{symbolclass} Let $p$ and $m$ be elements of $\mathbb{N}$. There exists $C(m,p,(Q_i)_i,\eta)$ (depending on $m$, $p$, $\eta$, $(Q_i)_{i=1}^K$ and the coordinate charts) such that for all $\gamma=(\gamma_0,\cdots,\gamma_l)$ such that
$$\sum_{j=0}^{l-1}f(\gamma_{j+1},\gamma_j)\leq\frac{n_E(\hbar)}{2},$$
for all $0\leq j\leq l$ and for all $0\leq t\leq \eta$,
$$|d^mA_p^{\gamma_1,\cdots,\gamma_j}(t,x)|\leq C(m,p,(Q_i)_i,\eta)\hbar^{(m+2p)(\frac{\epsilon'-1}{2}-2\epsilon)}.$$
Then, as the $A_p^j$ are compactly supported, $A_p^j$ is in class $S^{-\infty,2p\nu}_{\nu}$, where $\nu=\frac{1-\epsilon'+4\epsilon}{2}$.
\end{prop}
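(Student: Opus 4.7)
The plan is to read off the proposition almost directly from Lemma~\ref{symbolclass2} and the estimate on $|d_\rho g^{t+(j-1)\eta}|$ carried out in the paragraph preceding the statement, assembling these ingredients into the class-membership conclusion. First I would invoke Lemma~\ref{symbolclass2} for the specified $p$ and $m$: for every $\rho$ in the support $g^{-t}(T^*\Omega_{\gamma_j}\cap\cdots\cap g^{-(j-1)\eta}T^*\Omega_{\gamma_1})\cap\mathcal{E}^{\theta}$,
$$|d^m A^{\gamma_1,\dots,\gamma_j}_p(t,\rho)|\le C(m,p)\,j^{m+2p^2+1}\,|d_\rho g^{t+(j-1)\eta}|^{\,m+2p},$$
and outside this support the left-hand side vanishes by construction, so the bound is trivial. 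All remaining work concerns the factor $|d_\rho g^{t+(j-1)\eta}|$.

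Next I would use that $\dim E^u=1$, which gives $|d_\rho g^s|\le J^{u,s}(\rho)^{-1}$ for $s\ge 0$, and apply multiplicativity of $J^{u,\bullet}$ to split
$$J^{u,t+(j-1)\eta}(\rho)=J^{u,t}(\rho)\prod_{k=0}^{j-2}J^{u,\eta}(g^{t+k\eta}\rho).$$
Since $\rho$ lies in the intersection of the $g^{-k\eta}T^*\Omega_{\gamma_{k+1}}$, the continuity estimate \eqref{continuity} allows me to replace each factor $J^{u,\eta}(g^{t+k\eta}\rho)$ by the discrete Jacobian $J^u_\eta(\gamma_{k+2},\gamma_{k+1})$ at the cost of a multiplicative error $e^{a_0\eta\epsilon}$ per step; after $l-1$ steps the accumulated error is $e^{l\eta a_0\epsilon}$. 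Thus
$$|d_\rho g^{t+(j-1)\eta}|\le C(\eta)\,e^{l\eta a_0\epsilon}\,\exp\!\Big(\sum_{k}f(\gamma_{k+1},\gamma_k)\Big)\le C(\eta)\,e^{l\eta a_0\epsilon}\,e^{n_E(\hbar)/2},$$
using the hypothesis $\sum f(\gamma_{k+1},\gamma_k)\le n_E(\hbar)/2$. Because $l\eta a_0\le n_E(\hbar)/2$ as well, the factor $e^{l\eta a_0\epsilon}$ contributes at most $\hbar^{-\epsilon(1-\epsilon')/(2a_0)\cdot a_0}=\hbar^{-\epsilon(1-\epsilon')/2}\le\hbar^{-\epsilon}$, and the definition $n_E(\hbar)=[(1-\epsilon')|\log\hbar|]$ yields $e^{n_E(\hbar)/2}\le\hbar^{(\epsilon'-1)/2}$. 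Therefore $|d_\rho g^{t+(j-1)\eta}|\le C'(\eta)\,\hbar^{(\epsilon'-1)/2\,-\,2\epsilon}$.

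Raising this to the power $m+2p$ and plugging back into the estimate from Lemma~\ref{symbolclass2} produces
$$|d^m A^{\gamma_1,\dots,\gamma_j}_p(t,\rho)|\le C(m,p)\,j^{m+2p^2+1}\,C'(\eta)^{m+2p}\,\hbar^{(m+2p)(\frac{\epsilon'-1}{2}-2\epsilon)}.$$
Since $j\le l\le n_E(\hbar)/(a_0\eta)=O(|\log\hbar|)$, the polynomial factor $j^{m+2p^2+1}$ is $O(|\log\hbar|^{m+2p^2+1})$ and can be absorbed into the constant (or, if one prefers a clean power of $\hbar$, into an arbitrarily small adjustment of the exponent). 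This gives exactly the bound stated in the proposition, with a constant that depends only on $m$, $p$, $\eta$, the $(Q_i)_{i=1}^K$, and the atlas, uniformly in $\gamma$ and in $1\le j\le l$.

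Finally, to pass from these pointwise bounds to membership in the symbol class $S^{-\infty,2p\nu}_\nu$ with $\nu=\tfrac{1-\epsilon'+4\epsilon}{2}$, I observe that $(m+2p)(\tfrac{\epsilon'-1}{2}-2\epsilon)=-(m+2p)\nu$, so every $m$-th derivative is bounded by $C\,\hbar^{-2p\nu}\hbar^{-m\nu}$, which is precisely the requirement defining $S^{-\infty,2p\nu}_\nu$; the $-\infty$ in the fiber decay index is automatic from the compact support of $A^j_p(t)$ already established during its construction. The only subtle point in the whole argument is the step where the continuous Jacobian is replaced by the discrete one, and this is exactly where the asymmetric choice in the definition \eqref{jac0} of $J^u_\eta$ together with estimate \eqref{continuity} is used; everything else is bookkeeping on top of Lemma~\ref{symbolclass2}.
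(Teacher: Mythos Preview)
Your argument follows essentially the same route as the paper's, and the overall structure is correct. Two small points deserve attention, though neither is fatal.

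First, Lemma~\ref{symbolclass2} as stated only controls $d^m A_p^j$ for $m\le 2(N-p+1)$, where $N$ is the truncation order fixed in the construction of the $A^j$. To get the bound for an arbitrary $m$ you must observe that the construction can be carried out with any larger $\tilde N$ in place of $N$ without changing $A_p^j$ for $p\le N$, and then invoke the lemma with that $\tilde N$; the paper makes this remark explicitly.

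Second, your use of \eqref{continuity} to pass directly from the time-$\eta$ Jacobians $J^{u,\eta}(g^{t+k\eta}\rho)$ to the discrete quantities $J^u_\eta(\gamma_{k+2},\gamma_{k+1})$ is not quite right: estimate \eqref{continuity} compares $f(\gamma)=-\eta\log J^u_\eta(\gamma)$ with $-\eta\log J^u(\rho)$, where $J^u$ is the \emph{time-$1$} Jacobian, not the time-$\eta$ one. The paper bridges this gap with a short telescoping remark: writing the identity $\prod_{k=0}^{1/\eta-1}\prod_{j} J^u(g^{j+k\eta}\rho)=\prod_{k}\prod_{j} J^{u,\eta}(g^{\cdot}\rho)$ and taking $\eta$-th roots gives $\prod J^{u,\eta}\ge C(\eta)^{-1}\prod (J^u)^{\eta}$, after which \eqref{continuity} applies legitimately. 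Once this is inserted, your estimate and the rest of your argument go through exactly as in the paper.
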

So, our formal construction allows us to define a family of symbol $A^j_p$ and each of them belongs to $S^{-\infty,p\epsilon}_{\nu}$. Moreover the constants implied in the bounds of the derivatives are uniform with respect to the allowed sequences. We underline that the same proof would show that the intermediate symbols $\overline{A}^j_p$~(\ref{composedsymbol}) are also in the same class of symbols.

\subsection{Estimate of the remainder terms}

We are now able to conclude the proof of theorem~\ref{pdotheo} starting from the family we have just constructed. We have to verify that the remainder is of small order in $\hbar$. Fix a large integer $N$ and denote $\displaystyle A^j(\eta):=\sum_{p=0}^{N}\hbar^pA_p^j(\eta)$. We want to estimate
$$R_N^j=\|\Op_{\hbar}(Q_{\gamma_1})\cdots \Op_{\hbar}(Q_{\gamma_j})(-(j-1)\eta)-\Op_{\hbar}(A^j(\eta))(-j\eta)\|_{\mathcal{L}(L^2(M))}.$$
Using the induction formula~(\ref{inductionformula}), we write
$$R_N^j\leq\|U^{-\eta}\Op_{\hbar}(A^{j-1}(\eta))\Op_{\hbar}(Q_{\gamma_j})U^{\eta}-\Op_{\hbar}(A^j(\eta))\|_{\mathcal{L}(L^2(M))}+R_N^{j-1},$$
where $R_N^{j-1}=\|\Op_{\hbar}(Q_{\gamma_1})\cdots \Op_{\hbar}(Q_{\gamma_{j-1}})(-(j-2)\eta)-\Op_{\hbar}(A^{j-1}(\eta))\|_{\mathcal{L}(L^2(M))}$.  We start by giving an estimate on the first term of the previous upper bound. To do this, we first give a bound on
$$R^{\text{comp},j}_N:=\|\Op_{\hbar}(A^{j-1}(\eta))\Op_{\hbar}(Q_{\gamma_j})-\Op_{\hbar}(\overline{A}^{j})\|_{\mathcal{L}(L^2(M))}.$$
Using the expansion of $A^{j-1}(\eta)$ and $\overline{A}^j$, this can rewritten
$$R^{\text{comp},j}_N\leq\sum_{p=0}^N\hbar^p\left\|\Op_{\hbar}(A^{j-1}_p(\eta))\Op_{\hbar}(Q_{\gamma_j})-\sum_{r=0}^{N-p}\hbar^{r}\Op_{\hbar}((A^{j-1}_p\sharp_M Q_{\gamma_j})_r)\right\|_{\mathcal{L}(L^2(M))}.$$
Then, we can use section~\ref{compositionpdo} and the estimates~(\ref{remmoyal}), to bound each term of the previous sum as follows:
$$\left\|\Op_{\hbar}(A^{j-1}_p(\eta))\Op_{\hbar}(Q_{\gamma_j})-\sum_{r=0}^{N-p}\hbar^{r}\Op_{\hbar}((A^{j-1}_p\sharp_M Q_{\gamma_j})_r)\right\|_{\mathcal{L}(L^2(M))}\leq C_{N,p}\hbar^{(N+1-p)(1-\nu)-2p\nu-(C+C')\nu}.$$
In particular, we find that $R^{\text{comp},j}_N=\mathcal{O}_N(\hbar^{(N+1)(1-2\nu)-(C+C')\nu})$ (as $\nu<1/2$). We have now to give a bound on $R^{\text{Egorov},j}_N:=\|\Op_{\hbar}(A^{j}(\eta))-U^{-\eta}\Op_{\hbar}(\overline{A}^{j})U^{\eta}\|_{\mathcal{L}(L^2(M))}.$
We will now use results on Egorov theorem from section~\ref{fixedegorov} to get this bound. First, we write the expansion of $\overline{A}^j$ to get
$$R^{\text{Egorov},j}_N\leq \sum_{p=0}^N\hbar^p\left\|U^{-\eta}\Op_{\hbar}(\overline{A}^{j}_p)U^{\eta}-\sum_{r=0}^{N-p}\hbar^r\Op_{\hbar}(\overline{A}^j_{p,r}(\eta))\right\|_{\mathcal{L}(L^2(M))}.$$
According to the rules for Egorov expansion from section~\ref{fixedegorov} (see estimates~(\ref{remainderestimates})) and as we know the class $\overline{A}_p^j$ from the last remark of the previous section, we find that each term of the previous sum can be bounded as follows:
$$\left\|U^{-\eta}\Op_{\hbar}(\overline{A}^{j}_p)U^{\eta}-\sum_{r=0}^{N-p}\hbar^r\Op_{\hbar}(\overline{A}^j_{p,r}(\eta))\right\|_{\mathcal{L}(L^2(M))}\leq C_{N,p}\hbar^{(N+1)(1-\nu)-3p\nu-D\nu}.$$
This implies that $R^{\text{Egorov},j}_N=\mathcal{O}_N(\hbar^{(N+1)(1-2\nu)-D\nu})$ (as $\nu<1/2$). Finally, it tells us that $R_N^j=R_N^{j-1}+\mathcal{O}_N(\hbar^{(N+1)(1-2\nu)-D'\nu})$, for some fixed integer $D'$. By induction on $j$, we find that
$$\left\|\Op_{\hbar}(Q_{\gamma_1})\cdots \Op_{\hbar}(Q_{\gamma_j})(-(j-1)\eta)-\Op_{\hbar}(A^j(\eta))(-j\eta)\right\|_{\mathcal{L}(L^2(M))}=\mathcal{O}_N(j\hbar^{(N+1)(1-2\nu)-D'\nu}).$$
As $j=\mathcal{O}(|\log\hbar|)$ and as $\nu<1/2$, we find that, for large $N$, the remainder tends to $0$ as $\hbar$ tends to $0$. This concludes the proof of theorem~\ref{pdotheo}.$\square$

\appendix

\section{Pseudodifferential calculus on a manifold}

\label{appendix}

In this appendix, a few facts about pseudodifferential calculus on a manifold and the sharp energy cutoff used in this paper are recalled. Even if most of this setting can be found in~\cite{AN2}, it is recalled because it is extensively used in section~\ref{mainproof} and~\ref{bigpdotheo}. The results from the two first sections of this appendix can be found in more details in~\cite{SZ} or~\cite{AN2}. The results of the last section of this appendix are the extension to the case of a manifold of standard results from semiclassical analysis that can be found either in~\cite{BR},~\cite{DS} or~\cite{EZ}.

\subsection{Pseudodifferential calculus on a manifold}

\label{pdo}

We start this appendix by recalling some facts of $\hbar$-pseudodifferential calculus that can be found in~\cite{DS} (or in~\cite{EZ}). Recall that we define on $\mathbb{R}^{2d}$ the following class of symbols:
$$S^{m,k}(\mathbb{R}^{2d}):=\left\{a_{\hbar}(x,\xi)\in C^{\infty}(\mathbb{R}^{2d}\times(0,1]):|\partial^{\alpha}_x\partial^{\beta}_{\xi}a_{\hbar}|\leq C_{\alpha,\beta}\hbar^{-k}\langle\xi\rangle^{m-|\beta|}\right\}.$$
Let $M$ be a smooth Riemannian $d$-manifold without boundary. Consider a smooth atlas $(f_l,V_l)$ of $M$, where each $f_l$ is a smooth diffeomorphism from $V_l\subset M$ to a bounded open set $W_l\subset\mathbb{R}^{d}$. To each $f_l$ correspond a pull back $f_l^*:C^{\infty}(W_l)\rightarrow C^{\infty}(V_l)$ and a canonical map $\tilde{f}_l$ from $T^*V_l$ to $T^*W_l$:
$$\tilde{f}_l:(x,\xi)\mapsto\left(f_l(x),(Df_l(x)^{-1})^T\xi\right).$$
Consider now a smooth locally finite partition of identity $(\phi_l)$ adapted to the previous atlas $(f_l,V_l)$. That means $\sum_l\phi_l=1$ and $\phi_l\in C^{\infty}(V_l)$. Then, any observable $a$ in $C^{\infty}(T^*M)$ can be decomposed as follows: $a=\sum_l a_l$, where $a_l=a\phi_l$. Each $a_l$ belongs to $C^{\infty}(T^*V_l)$ and can be pushed to a function $\tilde{a}_l=(\tilde{f}_l^{-1})^*a_l\in C^{\infty}(T^*W_l)$. As in~\cite{DS}, define the class of symbols of order $m$ and index $k$
\begin{equation}\label{defpdo}S^{m,k}(T^{*}M):=\left\{a_{\hbar}\in C^{\infty}(T^*M\times(0,1]):|\partial^{\alpha}_x\partial^{\beta}_{\xi}a_{\hbar}|\leq C_{\alpha,\beta}\hbar^{-k}\langle\xi\rangle^{m-|\beta|}\right\}.\end{equation}
Then, for $a\in S^{m,k}(T^{*}M)$ and for each $l$, one can associate to the symbol $\tilde{a}_l\in S^{m,k}(\mathbb{R}^{2d})$ the standard Weyl quantization
$$\Op_{\hbar}^{w}(\tilde{a}_l)u(x):=\frac{1}{(2\pi\hbar)^d}\int_{R^{2d}}e^{\frac{\imath}{\hbar}\langle x-y,\xi\rangle}\tilde{a}_l\left(\frac{x+y}{2},\xi;\hbar\right)u(y)dyd\xi,$$
where $u\in\mathcal{S}(\mathbb{R}^d)$, the Schwartz class. Consider now a smooth cutoff $\psi_l\in C_c^{\infty}(V_l)$ such that $\psi_l=1$ close to the support of $\phi_l$. A quantization of $a\in S^{m,k}$ is then defined in the following way:
\begin{equation}\label{pdomanifold}\Op_{\hbar}(a)(u):=\sum_l \psi_l\times\left(f_l^*\Op_{\hbar}^w(\tilde{a}_l)(f_l^{-1})^*\right)\left(\psi_l\times u\right),\end{equation}
where $u\in C^{\infty}(M)$. This quantization procedure $\Op_{\hbar}$ sends (modulo $\mathcal{O}(\hbar^{\infty})$) $S^{m,k}(T^{*}M)$ onto the space of pseudodifferential operators of order $m$ and of index $k$, denoted $\Psi^{m,k}(M)$~\cite{DS}. It can be shown that the dependence in the cutoffs $\phi_l$ and $\psi_l$ only appears at order $2$ in $\hbar$ (using for instance theorem $18.1.17$ in~\cite{Ho}) and the principal symbol map $\sigma_0:\Psi^{m,k}(M)\rightarrow S^{m,k}/S^{m,k-1}(T^{*}M)$ is then intrinsically defined. Most of the rules (for example the composition of operators, the Egorov and Calder\'on-Vaillancourt theorems) that holds in the case of $\mathbb{R}^{2d}$ still holds in the case of $\Psi^{m,k}(M)$. Because our study concerns behavior of quantum evolution for logarithmic times in $\hbar$, a larger class of symbols should be introduced as in~\cite{DS}, for $0\leq\nu<1/2$,
\begin{equation}\label{symbol}S^{m,k}_{\nu}(T^{*}M):=\left\{a_{\hbar}\in C^{\infty}(T^*M\times(0,1]):|\partial^{\alpha}_x\partial^{\beta}_{\xi}a_{\hbar}|\leq C_{\alpha,\beta}\hbar^{-k-\nu|\alpha+\beta|}\langle\xi\rangle^{m-|\beta|}\right\}.\end{equation}
Results of~\cite{DS} can be applied to this new class of symbols. For example, a symbol of $S^{0,0}_{\nu}$ gives a bounded operator on $L^2(M)$ (with norm uniformly bounded with respect to $\hbar$).\\
As was explained, one needs to quantize the sharp energy cutoff $\chi^{(.)}$ (see section~\ref{cut}) to get sharp bounds in~\ref{normest}. As $\chi^{(0)}$ localize in a strip of size $\hbar^{1-\delta_0}$ with $\delta_0$ close to $0$, the $m$-th derivatives transversally to $\mathcal{E}$ grows like $\hbar^{m(\delta_0-1)}$. As $\delta_0$ is close to $0$, $\chi^{(0)}$ does not belongs to the previous class of symbols that allows $\nu< 1/2$. However, as the variations only appears in one direction, it is possible to define a new pseudodifferential calculus for these symbols. The procedure taken from~\cite{SZ} is briefly recalled in~\cite{AN2} (section $5$) and introduces a class of anisotropic symbols $S^{-\infty,0}_{\mathcal{E},\nu'}$ (where $\mathcal{E}:=S^*M$ and $\nu'<1$) for which a quantization procedure $\Op_{\mathcal{E},\nu'}$ can be defined. In the next section, we recall briefly a few results about the quantization $\Op_{\mathcal{E},\nu'}(\chi^{(n)})$ of the symbol $\chi^{(.)}$.

\subsection{Energy cutoff}

\label{properties}

Let $\chi^{(.)}$ be as in section~\ref{cut}. Consider $\delta_0>0$ and $K_{\delta_0}$ associated to it (see section~\ref{cut}). Taking $\nu'=1-\delta_0$, it can be checked that the cutoffs defined in section~\ref{cut} belongs to the class $S^{-\infty,0}_{\mathcal{E},\nu'}$ defined in~\cite{AN2}. A pseudodifferential operator corresponding to it can be defined following the nonstandard procedure mentioned above. Using results from~\cite{AN2} (section $5$), one has $\|\Op_{\mathcal{E},\nu'}(\chi^{(n)})\|=1+\mathcal{O}(\hbar^{\nu'/2})$ for all $n\leq K_{\delta_0}|\log\hbar|$. For simplicity of notations, in the paper $\Op(\chi^{(n)}):=\Op_{\mathcal{E},\nu'}(\chi^{(n)})$. In~\cite{AN2}, it is also proved that
\begin{prop}\label{localization}~\cite{AN2} For $\hbar$ small enough and any $n\in\mathbb{N}$ such that $0\leq n\leq K_{\delta_0}|\log\hbar|$ and for any $\psi_{\hbar}=-\hbar^2\Delta\psi_{\hbar}$ eigenstate, one has
$$\|\psi_{\hbar}-\Op(\chi^{(n)})\psi_{\hbar}\|=\mathcal{O}(\hbar^{\infty})\|\psi_{\hbar}\|.$$
Moreover for any sequence $\alpha$ and $\beta$ of length $n$ less than $K_{\delta_0}|\log\hbar|$, one has
$$\left\|\left(1-\Op\left(\chi^{(n)}\right)\right)\tau_{\alpha}\Op\left(\chi^{(0)}\right)\right\|=\mathcal{O}(\hbar^{\infty})\ \ \ \ \ \ \left\|\left(1-\Op\left(\chi^{(n)}\right)\right)\pi_{\beta}\Op\left(\chi^{(0)}\right)\right\|=\mathcal{O}(\hbar^{\infty})$$
where $\tau$ and $\pi$ are given by~(\ref{tau}) and~(\ref{pi}).
\end{prop}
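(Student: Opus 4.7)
The first statement I would handle by combining spectral functional calculus on eigenstates with a semiclassical parametrix. Since $\psi_\hbar$ is a genuine eigenfunction of $\widehat{H} := -\hbar^2\Delta/2$ at energy $1/2$, the spectral theorem gives $\chi^{(n)}(\widehat{H} - 1/2)\psi_\hbar = \chi_{\delta_0}(0)\,\psi_\hbar = \psi_\hbar$ for every $n$. It therefore suffices to prove that, uniformly in $n\leq K_{\delta_0}|\log\hbar|$,
$$\Op(\chi^{(n)}) - \chi^{(n)}(\widehat{H} - 1/2) = \mathcal{O}_{L^2\to L^2}(\hbar^\infty).$$
I would establish this via a Helffer--Sj\"ostrand construction adapted to the anisotropic class $S_{\mathcal{E},\nu'}^{-\infty,0}$ of~\cite{SZ}: iteratively build a resolvent parametrix $\Op(b_n)(z)$ with $b_n \sim \sum_k \hbar^k b_{n,k}$ in that class, inverting $\widehat{H} - 1/2 - z$ microlocally on a macroscopic neighborhood of $\mathcal{E}$, then integrate against an almost-analytic extension of $\chi^{(n)}$. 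The crucial point is that $\chi^{(n)}$ depends only on the transverse coordinate $H - 1/2$ and satisfies $|\partial^\alpha \chi^{(n)}| \lesssim (e^{n\delta_0}\hbar^{1-\delta_0})^{-|\alpha|}$, which fits exactly the anisotropic estimates with $\nu' = 1-\delta_0$; all correction terms in the symbolic expansion gain a full power of $\hbar$ per order.

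For the second claim I would use a telescoping argument across the intermediate cutoffs. Inserting $\text{Id} = \Op(\chi^{(j+1)}) + (\text{Id} - \Op(\chi^{(j+1)}))$ between the successive factors $P_{\alpha_{j+1}}((j+1)\eta)$ and the tail $P_{\alpha_j}(j\eta)\cdots P_{\alpha_0}$ of $\tau_\alpha$, and using that each $P_{\alpha_i}(i\eta)$ is bounded by $1$, the full statement reduces to proving the single key step
\begin{equation*}
\bigl(\text{Id} - \Op(\chi^{(j+1)})\bigr)\,P_{\alpha_{j+1}}\bigl((j+1)\eta\bigr)\,\Op(\chi^{(j)}) = \mathcal{O}(\hbar^\infty), \qquad 0\leq j<n,
\end{equation*}
for which an $\mathcal{O}(\hbar^\infty)$ bound will survive summation over the $n = \mathcal{O}(|\log\hbar|)$ telescope terms. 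The analogous chain works for $\pi_\beta$ after replacing $j\eta$ by $-j\eta$.

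To prove this single step I would exploit that the Schr\"odinger propagator $U^t$ commutes exactly with $\widehat{H}$, and hence with $\chi^{(j)}(\widehat H - 1/2)$ by functional calculus. Using the first part of the proposition to replace both $\Op(\chi^{(j)})$ and $\Op(\chi^{(j+1)})$ by the corresponding functional-calculus operators modulo $\mathcal{O}(\hbar^\infty)$, and conjugating $U^{\pm(j+1)\eta}$ across the innermost cutoff, the estimate reduces to
$$\bigl(\text{Id} - \chi^{(j+1)}(\widehat H - 1/2)\bigr)\,P_{\alpha_{j+1}}\,\chi^{(j)}(\widehat H - 1/2) = \mathcal{O}(\hbar^\infty).$$
Here $P_{\alpha_{j+1}}$ is a multiplication operator, so $[P_{\alpha_{j+1}},\widehat H]$ is a semiclassical operator of order $\hbar$; iterating this commutator identity $N$ times produces an $\hbar^N$ factor divided by the energy gap between the supports of $1-\chi^{(j+1)}$ and $\chi^{(j)}$, which is $\gtrsim e^{j\delta_0}(e^{\delta_0/2}-1)\hbar^{1-\delta_0}$. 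Each iteration thus gains a factor $\hbar^{\delta_0}$, giving $\mathcal{O}(\hbar^{N\delta_0})$ for arbitrary $N$. This is precisely the reason for building in the exponential scaling $e^{n\delta_0}$ in the definition of $\chi^{(n)}$.

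The main obstacle will be tracking uniformity in $n$ through $\mathcal{O}(|\log\hbar|)$ telescope steps: each application of the commutator expansion and each use of the anisotropic calculus of~\cite{SZ} loses some polynomial factor in $n$, and one must verify that the geometric gain $\hbar^{\delta_0}$ per commutator dominates these $|\log\hbar|^C$ losses. This is exactly what the constraint $K_{\delta_0}\leq\delta_0^{-1}$, combined with the choice $\nu'=1-\delta_0<1$, is designed to guarantee, keeping the full product inside a single controllable symbol class.
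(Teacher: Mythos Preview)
The paper does not give its own proof of this proposition; it is quoted directly from~\cite{AN2}. Your plan is essentially the argument carried out there: compare $\Op(\chi^{(n)})$ with the functional-calculus operator $f_n(\hat H)$ via a Helffer--Sj\"ostrand parametrix in the anisotropic class of~\cite{SZ}, then run the nested-cutoff telescope, reducing everything to the single-step estimate
\[
\bigl(1-\Op(\chi^{(j+1)})\bigr)\,P_{\alpha_j}(j\eta)\,\Op(\chi^{(j)})=\mathcal O(\hbar^\infty),
\]
which is handled exactly by the commutator/spectral-gap mechanism you describe. Your identification of the role of the exponential spacing $e^{n\delta_0}$ between successive cutoffs, and of the uniformity issue over $\mathcal O(|\log\hbar|)$ steps, is on target.

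One small imprecision worth correcting: in the anisotropic calculus with $\nu'=1-\delta_0$, each order of the symbolic expansion (whether in the resolvent parametrix or in the commutator iteration) gains a factor $\hbar^{1-\nu'}=\hbar^{\delta_0}$, \emph{not} a full power of $\hbar$ as you write. The derivatives of $\chi^{(n)}$ grow like $\hbar^{-(1-\delta_0)}$ per derivative, which eats most of the $\hbar$ from each commutator with $\hat H$. This does not affect the conclusion---iterating an $\hbar^{\delta_0}$ gain still yields $\mathcal O(\hbar^\infty)$---but it is exactly why one must work in the anisotropic class rather than in $S^{m,k}_\nu$, and it is the point at which the constraint $\nu'<1$ (equivalently $\delta_0>0$) becomes essential.
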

This proposition tells that the quantization of this energy cutoff exactly have the expected property, meaning that it preserves the eigenfunction of the Laplacian. So, in the paper, introducing the energy cutoff $\Op(\chi^{(n)})$ does not change the semiclassical limit. Moreover this proposition implies the following corollary that allows to apply theorem~\ref{uncertainty} in section~\ref{several}:
\begin{coro}\label{hyp}~\cite{AN2} For any fixed $L>0$, there exists $\hbar_L$ such that for any $\hbar\leq\hbar_L$, any $n\leq K_{\delta_0}|\log\hbar|$ and any sequence $\beta$ of length $n$, the Laplacian eigenstate verify
$$\left\|\left(1-\Op\left(\chi^{(n)}\right)\right)\pi_{\beta}\psi_{\hbar}\right\|\leq \hbar^L\|\psi_{\hbar}\|.$$
\end{coro}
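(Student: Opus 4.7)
The plan is to decompose the vector $(1-\Op(\chi^{(n)}))\pi_\beta\psi_\hbar$ using the cutoff $\Op(\chi^{(0)})$ and invoke the two parts of Proposition~\ref{localization} separately. Concretely, I would write
\begin{equation*}
(1-\Op(\chi^{(n)}))\pi_\beta\psi_\hbar
= (1-\Op(\chi^{(n)}))\pi_\beta\Op(\chi^{(0)})\psi_\hbar
+ (1-\Op(\chi^{(n)}))\pi_\beta\bigl(1-\Op(\chi^{(0)})\bigr)\psi_\hbar,
\end{equation*}
so that it suffices to estimate each of the two pieces on the right in $\mathcal{O}(\hbar^{\infty})\|\psi_\hbar\|$; taking $L$ as large as one wishes then delivers the claim with some $\hbar_L>0$.

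For the first piece, the second part of Proposition~\ref{localization} gives directly the operator norm bound $\|(1-\Op(\chi^{(n)}))\pi_\beta\Op(\chi^{(0)})\|=\mathcal{O}(\hbar^{\infty})$, uniformly in $n\leq K_{\delta_0}|\log\hbar|$ and in sequences $\beta$ of length $n$, hence
\begin{equation*}
\|(1-\Op(\chi^{(n)}))\pi_\beta\Op(\chi^{(0)})\psi_\hbar\|
\leq \mathcal{O}(\hbar^{\infty})\,\|\psi_\hbar\|.
\end{equation*}
For the second piece I would use that $\pi_\beta$ is a product of multiplication operators $P_{\beta_j}$ (conjugated by the unitary propagator $U^t$), each of which has operator norm at most $1$ since $0\leq P_i\leq 1$; together with the bound $\|\Op(\chi^{(n)})\|=1+\mathcal{O}(\hbar^{\nu'/2})$ recalled from~\cite{AN2}, this gives $\|(1-\Op(\chi^{(n)}))\pi_\beta\|\leq C$ uniformly in $\beta$ and in $n\leq K_{\delta_0}|\log\hbar|$. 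The first part of Proposition~\ref{localization} applied to the eigenstate $\psi_\hbar$ with the index $n=0$ yields $\|(1-\Op(\chi^{(0)}))\psi_\hbar\|=\mathcal{O}(\hbar^{\infty})\|\psi_\hbar\|$, and combining the two estimates one obtains
\begin{equation*}
\|(1-\Op(\chi^{(n)}))\pi_\beta(1-\Op(\chi^{(0)}))\psi_\hbar\|
\leq C\,\mathcal{O}(\hbar^{\infty})\,\|\psi_\hbar\|.
\end{equation*}

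There is no real obstacle here since everything is essentially a bookkeeping exercise built on Proposition~\ref{localization}; the only mild subtlety is verifying the \emph{uniformity} of the constant $C$ with respect to $n$ and $\beta$, which requires only the trivial bounds $\|P_i\|\leq 1$, the isometry of $U^{\pm\eta}$, and the uniform bound on $\|\Op(\chi^{(n)})\|$ for $n\leq K_{\delta_0}|\log\hbar|$. Summing the two displays gives $\|(1-\Op(\chi^{(n)}))\pi_\beta\psi_\hbar\|=\mathcal{O}(\hbar^{\infty})\|\psi_\hbar\|$, which in particular is less than $\hbar^L\|\psi_\hbar\|$ for any fixed $L$ once $\hbar$ is small enough, proving the corollary.
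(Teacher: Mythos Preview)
Your proof is correct and is precisely the intended argument: the paper does not spell out a proof of this corollary but simply states that it follows from Proposition~\ref{localization}, and the decomposition you wrote down---inserting $\Op(\chi^{(0)})+(1-\Op(\chi^{(0)}))$ and invoking the two parts of the proposition together with the trivial bound $\|\pi_\beta\|\leq 1$---is exactly how one makes that implication explicit.
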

A last property of the quantization of this cutoff that we can quote from~\cite{AN2} (remark $2.4$) is that we can restrict ourselves to study observables carried in a thin neighborhood around $S^*M=H^{-1}(1/2)$:
\begin{prop}\label{localization2}~\cite{AN2} For $\hbar$ small enough and any $n\in\mathbb{N}$ such that $0\leq n\leq K_{\delta_0}|\log\hbar|/2$, one has:
$$\forall|\gamma|=n,\ \|\tau_{\gamma}\Op(\chi^{(n)})-\tau_{\gamma}^f\Op(\chi^{(n)})\|=\mathcal{O}(\hbar^{\infty}),$$
where $P_{\gamma_j}^f=\Op_{\hbar}(P_{\gamma_j}f)$, $f$ is a smooth compactly supported function in a thin neighborhood of $\mathcal{E}$ and $\tau_{\gamma}^f=P_{\gamma_{n-1}}^f((n-1)\eta)\cdots P_{\gamma_{0}}^f.$
\end{prop}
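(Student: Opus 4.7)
The plan is to telescope the difference $\tau_\gamma-\tau_\gamma^f$ into $n$ single-step errors and use a disjoint-support argument to bound each one, exploiting the fact that the non-standard cutoff $\Op(\chi^{(n)})$ localizes microscopically in energy while $1-f$ vanishes macroscopically around $\mathcal{E}$.

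First I would write the algebraic identity
\begin{equation*}
\tau_\gamma-\tau_\gamma^f=\sum_{j=0}^{n-1}A_j\bigl(P_{\gamma_j}(j\eta)-P_{\gamma_j}^f(j\eta)\bigr)B_j,
\end{equation*}
where $A_j:=P_{\gamma_{n-1}}((n-1)\eta)\cdots P_{\gamma_{j+1}}((j+1)\eta)$ and $B_j:=P_{\gamma_{j-1}}^f((j-1)\eta)\cdots P_{\gamma_0}^f$ (with $A_{n-1}=B_0=\operatorname{Id}$). Each $A_j$ and $B_j$ is a product of at most $n=\mathcal{O}(|\log\hbar|)$ operators whose norm is bounded uniformly in $\hbar$, and $U^{j\eta}$ is an isometry, so it suffices to prove the single-step estimate
\begin{equation*}
\bigl\|\bigl(P_{\gamma_j}-P_{\gamma_j}^f\bigr)\,U^{j\eta}B_j\Op(\chi^{(n)})\bigr\|_{L^2}=\mathcal{O}(\hbar^\infty)
\end{equation*}
with a constant uniform in $|\gamma|=n$ and $0\leq j\leq n-1$; telescoping and the bound $n=\mathcal{O}(|\log\hbar|)$ then yield the proposition.

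The second step is to push $\Op(\chi^{(n)})$ to the left of $U^{j\eta}B_j$. Since $\chi^{(n)}$ is an explicit smooth function of $H$, the operator $\Op(\chi^{(n)})$ commutes with $U^t$ modulo $\mathcal{O}(\hbar^\infty)$. Moreover, after restricting (by Proposition~\ref{localization}) to $f$-localized symbols, each $P_{\gamma_k}^f$ has compactly supported symbol and, crucially, $\chi^{(n)}\equiv 1$ on the macroscopic support of $f$ once $\hbar$ is small enough; hence each commutator $[\Op(\chi^{(n)}),P_{\gamma_k}^f(k\eta)]$ is $\mathcal{O}(\hbar^\infty)$ in the non-standard calculus of~\cite{SZ}. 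Moving $\Op(\chi^{(n)})$ through the $\leq n$ factors of $B_j$ and through $U^{j\eta}$ at a cost of $n\cdot\mathcal{O}(\hbar^\infty)=\mathcal{O}(\hbar^\infty)$ errors, the problem reduces to showing
\begin{equation*}
\bigl\|\bigl(P_{\gamma_j}-P_{\gamma_j}^f\bigr)\Op(\chi^{(n)})\bigr\|_{L^2}=\mathcal{O}(\hbar^\infty).
\end{equation*}

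Finally, this time-zero estimate is a disjoint-support calculation. The operator $P_{\gamma_j}-P_{\gamma_j}^f$ has symbol $P_{\gamma_j}(x)(1-f(x,\xi))$, which vanishes on a fixed (macroscopic) neighborhood $V$ of $\mathcal{E}$. On the other hand, for $n\leq K_{\delta_0}|\log\hbar|/2$ one has $e^{n\delta_0}\hbar^{1-\delta_0}\leq\hbar^{1/2-\delta_0}$, so $\chi^{(n)}$ is supported in an energy shell of microscopic width, contained in $V$ for $\hbar$ small. The product rule in the anisotropic calculus $S^{-\infty,0}_{\mathcal{E},\nu'}$ of~\cite{SZ}, together with the fact that the two symbols have disjoint supports, gives $\mathcal{O}(\hbar^\infty)$ in every seminorm, hence in operator norm by Calder\'on--Vaillancourt. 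The main technical obstacle is precisely this last point: one must verify that the disjoint-support asymptotic vanishing survives in the anisotropic class $S^{-\infty,0}_{\mathcal{E},\nu'}$ with $\nu'=1-\delta_0$ close to $1$, where derivatives transverse to $\mathcal{E}$ pick up factors $\hbar^{-\nu'}$; this is exactly the content of the non-standard calculus developed in~\cite{SZ}, and ensures that $n=\mathcal{O}(|\log\hbar|)$ commutators can be tolerated without spoiling the $\mathcal{O}(\hbar^\infty)$ bound.
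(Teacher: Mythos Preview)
The paper does not actually prove this proposition: it is quoted verbatim from~\cite{AN2} (remark~2.4), so there is no in-paper argument to compare against. Your telescoping/disjoint-support strategy is the natural one, but step~2 contains a genuine error that breaks the proof.

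You claim that ``$\chi^{(n)}\equiv 1$ on the macroscopic support of $f$'', and deduce that each commutator $[\Op(\chi^{(n)}),P_{\gamma_k}^f(k\eta)]$ is $\mathcal{O}(\hbar^\infty)$. The inclusion is backwards: $\chi^{(n)}$ equals~$1$ only on the \emph{microscopic} strip $|H-\tfrac12|\le e^{n\delta_0-\delta_0/2}\hbar^{1-\delta_0}$, which for $n\le K_{\delta_0}|\log\hbar|/2$ has width at most $\hbar^{1/2-\delta_0}$. It is $f$ that equals~$1$ on $\text{supp}\,\chi^{(n)}$, not the other way around. With the correct inclusion the symbols $P_{\gamma_k}f$ and $\chi^{(n)}$ still have nontrivial Poisson bracket on the transition zone of $\chi^{(n)}$: there $\{P_{\gamma_k},\chi^{(n)}\}=-\partial_xP_{\gamma_k}\cdot\partial_\xi\chi^{(n)}$ is of size $\hbar^{-(1-\delta_0)}$, so the commutator is only $\mathcal{O}(\hbar^{\delta_0})$ in the anisotropic calculus, not $\mathcal{O}(\hbar^\infty)$. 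You therefore cannot push a single $\Op(\chi^{(n)})$ through $n\sim|\log\hbar|$ factors and still claim $\mathcal{O}(\hbar^\infty)$.

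The device used in~\cite{AN2} to circumvent this is the \emph{nested} family $\chi^{(0)}\prec\chi^{(1)}\prec\cdots\prec\chi^{(n)}$: one inserts $\Op(\chi^{(j)})$ after the $j$-th factor from the right, using at each step that $(1-\Op(\chi^{(j+1)}))U^{\eta}P_{\gamma}\Op(\chi^{(j)})=\mathcal{O}(\hbar^\infty)$ (this is the mechanism behind Proposition~\ref{localization}). Once every $P_{\gamma_j}(j\eta)$ is sandwiched between two energy cutoffs supported in the thin strip, your final disjoint-support argument with $P_{\gamma_j}(1-f)$ goes through. The telescoping identity you wrote is fine; what needs to change is replacing the single cutoff $\Op(\chi^{(n)})$ by the graded family $\Op(\chi^{(j)})$ inserted step by step.
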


\subsection{$\hbar$-expansion for pseudodifferential operators on a manifold}
The goal of this last section is to explain how the usual $\hbar$-expansion of order $N$ for composition of pseudodifferential operators and Egorov theorem can be extended in the case of pseudodifferential calculus on a manifold. The $\hbar$-expansion will depend on the partition of identity in section~\ref{pdo}. In fact, on a manifold, the formulas for the terms of order larger than $1$ on the $\hbar$-expansion will depend on the local coordinates. For simplicity and as it is the case of all the symbols we consider (thanks to the energy cutoff: for example, see proposition~\ref{localization2}), we now restrict ourselves to symbols supported in $\mathcal{E}^{\theta}=H^{-1}([1/2-\theta,1/2+\theta])$. The symbols are now elements of $S^{-\infty,0}_{\nu}(T^*M)$.

\subsubsection{Composition of pseudodifferential operators on a manifold}

\label{compositionpdo}

First, recall that the usual semiclassical theory on $\mathbb{R}^{d}$ (see~\cite{DS} or appendix of~\cite{BR}) tells that the composition of two elements $\Op_{\hbar}^w(a)$ and $\Op_{\hbar}^w(b)$ in $\Psi^{-\infty,k}_{\nu}(\mathbb{R}^{d})$ is still in $\Psi^{-\infty,k}_{\nu}(\mathbb{R}^{d})$ and that the essential support of its symbol is included in $\text{supp}(a)\cap\text{supp}(b)$. More precisely, it says that $\Op_{\hbar}^w(a)\circ \Op_{\hbar}^w(b)=\Op_{\hbar}^w(a\sharp b)$, where $a\sharp b$ is in $S^{-\infty,k}_{\nu}$ and its asymptotic expansion in power of $\hbar$ is given by the Moyal product
\begin{equation}\label{moyal2}a\sharp b(x,\xi)\sim\sum_{k}\frac{1}{k!}\left(\frac{\imath\hbar}{2}\omega(D_x,D_{\xi},D_y,D_{\eta})\right)^ka(x,\xi)b(y,\eta)|_{x=y,\xi=\eta},\end{equation}
where $\omega$ is the standard symplectic form. Outline that it is clear that each element of the sum is supported in $\text{supp}(a)\cap\text{supp}(b)$. As quantization on a manifold is constructed from quantization on $\mathbb{R}^{2d}$ (see definition~(\ref{pdomanifold})), one can prove an analogue of this asymptotic expansion in the case of a manifold $M$ (except that it will not be intrinsically defined). Precisely, let $a$ and $b$ be two symbols in $S^{-\infty,0}_{\nu}(T^{*}M)$. For a choice of quantization $\Op_{\hbar}$ (that depends on the coordinates maps), one has $\Op_{\hbar}(a)\circ\Op_{\hbar}(b)$ is a pseudodifferential operator in $\Psi^{-\infty,0}_{\nu}(M)$. Its symbol (mod $\mathcal{O}(\hbar^{\infty})$) is denoted $a\sharp_M b$ and its asymptotic expansion is of the following form:
$$a\sharp_M b\sim\sum_{p\geq 0}\hbar^{p} (a\sharp_M b)_{p}.$$
In the previous asymptotic expansion, $(a\sharp_M b)_{p}$ is a linear combination (that depends on the cutoffs and the local coordinates) of elements of the form $\partial^{\gamma}a\partial^{\gamma'}b$ with $|\gamma|\leq p$ and $|\gamma'|\leq p$. As a consequence, $(a\sharp_M b)_{p}$ is an element of $S^{-\infty,2p\nu}_{\nu}(T^*M)$.
\begin{rema} We know that we have an asymptotic expansion so by definition and using Calder\'on-Vaillancourt theorem, we know that each remainder is bounded in norm by a constant which is a small power of $\hbar$ (in fact $C\hbar^{(N+1)(1-2\nu)}$ for the remainder of order $N$). In our analysis, we need to know precisely how these bounds depend on $a$ and $b$ as we  have to make large product of pseudodifferential operators (see section~\ref{bigpdotheo}) and to use the composition formula to get Egorov theorem (see next section). The following lines explain how the remainder in the asymptotic expansion in powers of $\hbar$ is bounded by the derivatives of $a$ and $b$.\end{rema}
In the appendix of~\cite{BR}, they defined the remainder of the order $N$ expansion, in the case of $\mathbb{R}^{2d}$,
$$\hbar^{N+1}R_{N+1}(a,b,\hbar):=a\sharp b-\sum_{p=0}^N\hbar^p(a\sharp b)_p$$
and, using a stationary phase argument, they get the following estimates on the remainder, for all $\gamma$ and all $N$,
$$|\partial^{\gamma}_{z}R_{N+1}(a,b,z,\hbar)|\leq\rho_{d}K_{d}^{N+|\gamma|}(N!)^{-1}\sup_{(*)}|\partial_u^{(\alpha,\beta)+\mu}a(u+z)||\partial_v^{(\beta,\alpha)+\nu}b(v+z)|,$$
where $(*)$ means
$$u,v\in\mathbb{R}^{2d}\times\mathbb{R}^{2d},\ |\mu|+|\nu|\leq 4d+|\gamma|,\ |(\alpha,\beta)|=N+1,\ \alpha,\beta\in\mathbb{N}^d.$$
Applying Calder\'on-Vailancourt theorem (see~\cite{DS}-theorem $7.11$), one knows that there exist a constant $C$ and a constant $D$ (depending only on $d$), such that for a symbol $a$ in $S^{0,0}_{\mathbb{R}^{2d}}(1)$:
$$\|\Op_{\hbar}^{w}(a)\|_{L^2}\leq C\sup_{|\alpha|\leq D}\hbar^{\frac{|\alpha|}{2}}\|\partial^{\alpha}a\|_{\infty}.$$
Combining this result with the previous estimates on the $R^{(N+1)}$, one finds that
\begin{equation}\label{remmoyal}\|\Op_{\hbar}^{w}(R_{N+1}(a,b,z,\hbar))\|_{L^2}\leq C(d,N)\sup_{(*)}\hbar^{\frac{|\alpha|}{2}}\|\partial^{\beta+\beta'}a\|_{\infty}\|\partial^{\gamma+\gamma'}b\|_{\infty},\end{equation}
where $(*)$ means
$$|\alpha|\leq C',\ |\beta|\leq N+1,\ |\gamma|\leq N+1\ \text{and}\ |\beta'|+|\gamma'|\leq C+|\alpha|.$$
The constants $C$ and $C'$ depend only on the dimension $d$. The same kind of estimates holds on the remainder in the asymptotic expansion for change of variables. As the asymptotic expansion for composition of pseudodifferential operators is obtained from the composition and variable change rules on $\mathbb{R}^{2d}$~\cite{Ho} (theorem $18.1.17$; see also~\cite{EZ}-chapter $8$), the previous estimates~(\ref{remmoyal}) hold for semiclassical analysis on a manifold.

\subsubsection{Egorov expansion on a manifold}

\label{fixedegorov}

In this section, we want to recall how we prove an Egorov property with an expansion of any order. We follow the proof from~\cite{BR}. First, for the order $0$ term, we write the following exact expression for a symbol $a$ in $S^{-\infty,0}_{\nu}(T^*M)$,
\begin{equation}\label{step1}U^{-t}\Op_{\hbar}(a)U^t-\Op_{\hbar}(a(t))=\hbar\int_0^tU^{-s}(R^{(1)}(t-s))U^s ds,\end{equation}
where $a(t):=a\circ g^t$, $H(\rho)=\frac{\|\xi\|_x^2}{2}$ is the Hamiltonian and $$R^{(1)}(t-s):=\frac{1}{\hbar}\left(\frac{\imath}{\hbar}\left[-\frac{\hbar^2\Delta}{2},\Op_{\hbar}(a(t))\right]-\Op_{\hbar}(\{H,a(t)\})\right).$$ According to the rules of pseudodifferential calculus described in the previous section, we know that there exists some constants such that
$$\|R^{(1)}(t-s)\|_{\mathcal{L}(L^2(M))}\leq C(M,1)\sup_{0\leq s\leq t,|\alpha|\leq D,|\beta|\leq 1+D+|\alpha|}\hbar^{\frac{|\alpha|}{2}}\|\partial^{\beta}(a(s))\|_{\infty},$$
where $D$ depends only on the dimension of the manifold and $C(M,1)$ depnds on the choice of coordinates on the manifold. We proceed then by induction to recover the terms of higher order. For these higher order terms, we will see terms depending on the local coordinates appear in the expansion and we will obtain expressions as in~\cite{BR} for the higher order terms of the expansion that will be different from the case of $\mathbb{R}^{d}$~\cite{BR}. However, we do not need to have an exact expression for each term of the expansion: we only need to know on how many derivatives the order $p$ term depends and how the remainder can be bounded at each step. To obtain, the $\hbar$ formal term of the Egorov expansion, we first outline that $R^{(1)}(t-s)$ is a pseudodifferential operator whose asymptotic expansion is given by the composition rules on a manifold (see previous section). One can compute its principal symbol and verify that it is a linear combination (depending on the manifold and on the choice of coordinates) of derivatives of $a\circ g^{t-s}:=a_0(t-s)$ of order at most $2$. We denote $\left\{H,a_0(t-s)\right\}_{M}^{(1,0)}$ its principal symbol. Then, we can apply the same procedure as in equation~(\ref{step1}) to get the exact expression
$$\Op_{\hbar}(a)(t)=\Op_{\hbar}(a^{(1)}(t))+\hbar^2\int_0^tU^{-s}R^{(2)}(t-s)U^sds.$$
where
$$a^{(1)}(t):=a\circ g^t+\hbar\int_0^t(\left\{H,a_0(t-s)\right\}_{M}^{(1,0)})\circ g^sds.$$
We denote the previous formula in a more compact way
$$a^{(1)}(t):=a_0(t)+\hbar a_1(t),$$
where $\displaystyle a_1(t,\rho):=\int_0^t\left\{H,a_0(t-s)\right\}_{M}^{(1,0)}\left(g^{s}(\rho)\right)ds$. As was mentioned, this generalized `bracket' is a linear combination depending on the devivatives of order at most $2$ of $a_{t-s}$ (it also depends on $H$, $M$ and the choice of the quantization procedure). The operator norm of the remainder $R^{(2)}$ is, once more, controlled by the derivatives of $a_0(t)$ and $a_1(t)$. Precisely, one has
$$\|R^{(2)}(t)\|_{\mathcal{L}(L^2(M))}\\
\leq C(M,2)\sup_{(*)}\hbar^{\frac{|\alpha|}{2}}\|\partial^{\beta}\left(a_j(s)\right)\|_{\infty},$$
where $C(M,2)$ depends on the manifold $M$ (and on the choice of the quantization procedure) and $(*)$ means
$$j\leq 1,\ 0\leq s\leq t,\ |\alpha|\leq D,\ |\beta|\leq 2-j+D+|\alpha|.$$
Suppose the terms of order less than $p$, i.e. $a_0(t)$, ..., $a_{p-1}(t)$, are constructed. Then, we want to construct the term of order $p$. There will be several contributions. First, we write that the symbol (up to $\mathcal{O}(\hbar^{\infty}$) of $R^{(1)}(t-s)$ has an asymptotic expansion where the term of order $p-1$ depends on at most $p+1$ derivatives of $a_0(t-s)$. We can apply~(\ref{step1}) to this term of order $p-1$ and it will provide a symbol in $S^{-\infty,-p+(p+1)\nu}_{\nu}(T^*M)$ that we denote $\hbar^p\{H,a_0(t-s)\}^{(p,0)}$. Using the same procedure for every $a_j$ (where $j\leq p-1$), we can show finally that for any order $N$,
$$\Op_{\hbar}(a)(t)=\Op_{\hbar}(a^{(N)}(t))+\hbar^{N+1}\int_0^tU^{-s}R^{(N+1)}(t-s)U^sds.$$
In the previous formula, $a^{(N)}(t)$  is defined as follows:
$$a^{(N)}(t):=\sum_{p=0}^{N}\hbar^{p}a_{p}(t)\
\text{where}\ \ a_0(t):=a\circ g^t$$
and for $1\leq p\leq N$,
$$a_p(t,\rho):=\sum_{j=0}^{p-1}\int_0^t\left\{H,a_j(t-s)\right\}_{M}^{(p,j)}\left(g^{s}(\rho)\right)ds,$$
where $\{.,.\}_{M}^{(p,j)}$ is a generalized 'bracket' of order $(p,j)$ depending on the local coordinates on the manifold (it is the analogue of formula given by theorem $1.2$ in~\cite{BR}). We do not need to have an exact expression for these brackets: we only need to know on how many derivatives it depends. From the previous section, we know how the order $p$ term in the expansion of $a\sharp_M b$ depends linearly on products of the $p$ derivatives of $a$ and $b$. The term $\{H,a_0(t-s)\}^{(p,0)}$ comes from the order $p-1$ term of the asymptotic expansion of the symbol of $R^{(1)}(t-s)$. According to the rules of composition of pseudodifferential operators on a manifold, it is a linear combination (depending on $H$ and the choice of coordinates) of derivatives of $a$ of order at most $p+1$. More generally, $\{H,a_j(t-s)\}_{M}^{(p,j)}$ is a linear combination of derivatives of $a_j(t)$ of order at most $p+1-j$. For the remainder term $R^{(N+1)}(s)$ of order $N$, using the formulas for the composition of pseudodifferential operators, one can control it by the derivatives of the lower terms of the expansion. The previous discussion can be summarized in the following proposition:
\begin{prop}[Egorov expansion on a manifold]\label{exactegorov} Let $a$ be a symbol in $S^{-\infty,0}_{\nu}(T^*M)$. One has the exact expression for every $N\geq 0$,
\begin{equation}\label{exact1}\Op_{\hbar}(a)(t)=\Op_{\hbar}(a^{(N)}(t))+\hbar^{N+1}\int_0^tU^{-s}R^{(N+1)}(t-s)U^sds.\end{equation}
In the previous formula, one has
$$a^{(N)}(t):=\sum_{p=0}^{N}\hbar^{p}a_{p}(t)\
\text{where}\ \ a_0(t):=a\circ g^t$$
and for $1\leq p\leq N$,
$$a_p(t,\rho):=\sum_{j=0}^{p-1}\int_0^t\left\{H,a_j(t-s)\right\}_{M}^{(p,j)}\left(g^{s}(\rho)\right)ds.$$
For each $0\leq j\leq p-1$, $\{H,a_j(t-s)\}_{M}^{(p,j)}$ is a linear combination of derivatives of $a_j(t-s)$ of order at most $p+1-j$ that depends on the choice of coordinates on the manifold. Finally, the norm of $R^{(N+1)}(t)$ satisfies the following bound:
\begin{equation}\label{remainderestimates}\|R^{(N+1)}(t))\|_{L^2}\\
\leq C(M,N)\sup_{(*)}\hbar^{\frac{|\alpha|}{2}}\|\partial^{\beta}\left(a_p(s)\right)\|_{\infty},\end{equation}
where $C(M,N)$ depends on $N$ and on the manifold $M$ (also on the choice of coordinates) and where $(*)$ means:
$$p\leq N,\ 0\leq s\leq t,\ |\alpha|\leq D,\ |\beta|\leq N+1-p+D+|\alpha|.$$
The constant $D$ depends only on the dimension of the manifold.
\end{prop}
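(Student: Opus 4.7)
The plan is to prove the proposition by induction on $N$, using the Heisenberg equation and the composition expansion developed in Section~\ref{compositionpdo} as the two main inputs. For the base case $N=0$, I would start from the identity
$$\frac{d}{ds}\left(U^{-s}\Op_\hbar(a(t-s))U^s\right)=U^{-s}\left(\frac{\imath}{\hbar}\left[-\frac{\hbar^2\Delta}{2},\Op_\hbar(a(t-s))\right]-\Op_\hbar\left(\{H,a(t-s)\}\right)\right)U^s,$$
where I used that $\partial_s(a\circ g^{t-s})=-\{H,a(t-s)\}$ by Hamilton's equations. Integrating from $s=0$ to $s=t$ yields the basic Duhamel formula~(\ref{step1}) with $R^{(1)}(t-s)=\hbar^{-1}(\frac{\imath}{\hbar}[-\frac{\hbar^2\Delta}{2},\Op_\hbar(a(t-s))]-\Op_\hbar(\{H,a(t-s)\}))$. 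By the composition rules on a manifold (Section~\ref{compositionpdo}), $R^{(1)}(t-s)$ is, modulo $\mathcal{O}(\hbar^\infty)$, a pseudodifferential operator whose full symbol expansion can be read off from the Moyal product of $-\|\xi\|_x^2/2$ with $a(t-s)$, minus the Poisson bracket; the order-zero term of this expansion defines $\{H,a(t-s)\}_M^{(1,0)}$, a linear combination (depending on the chart and cutoffs in~(\ref{pdomanifold})) of derivatives of order at most $2$ of $a(t-s)$.

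For the inductive step, suppose the exact expansion~(\ref{exact1}) holds to order $N-1$ with remainder $\hbar^N\int_0^t U^{-s}R^{(N)}(t-s)U^s\,ds$. The inductive hypothesis plus tracking of the composition formulas ensures that the full symbol of $R^{(N)}(t-s)$ admits an asymptotic expansion in $\hbar$, where the leading-order term is explicit and given as a linear combination (depending on the local coordinates and $H$) of derivatives of the previously constructed $a_j(t-s)$, $0\leq j\leq N-1$. I would then apply the base-case Duhamel argument to this leading symbol: writing $R^{(N)}(t-s)=\Op_\hbar(b_0(t-s))+\hbar \tilde R(t-s)$, I can absorb $\Op_\hbar(b_0(t-s))$ into $a_N(t)$ by repeating the Heisenberg manipulation, which produces the term $\sum_{j=0}^{N-1}\int_0^t\{H,a_j(t-s)\}_M^{(N,j)}(g^s\rho)\,ds$ and pushes the error one order of $\hbar$ higher, yielding $R^{(N+1)}$. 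The key bookkeeping point is that because $a_j(t)$ depends linearly on derivatives of $a_0$ of order at most $2j$, and the composition rule consumes at most one additional derivative per order of $\hbar$, one sees by direct counting that $\{H,a_j(t-s)\}_M^{(p,j)}$ indeed depends on at most $p+1-j$ derivatives of $a_j(t-s)$.

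The remainder estimate~(\ref{remainderestimates}) follows by combining the explicit form of $R^{(N+1)}(t)$ (which is the order-$(N-p)$ remainder of the Moyal composition applied to the symbols $a_p$) with the Calderón-Vaillancourt bound recalled in~(\ref{remmoyal}). This gives the norm of $R^{(N+1)}(t)$ in terms of $\hbar^{|\alpha|/2}$-weighted sup-norms of derivatives $\partial^\beta a_p(s)$ with $|\beta|\leq N+1-p+D+|\alpha|$, uniformly on $[0,t]$, with the constant $C(M,N)$ absorbing all dependence on the atlas, the partition $(\phi_l)$ and the cutoffs $(\psi_l)$ entering the quantization~(\ref{pdomanifold}).

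The main obstacle I expect is the careful tracking of derivative counts through the inductive step. On $\mathbb{R}^{2d}$ the Moyal product is exact and the expansion coefficients are known explicitly, but on $M$ each application of the composition formula introduces chart-dependent correction terms (coming from the pullback to $\mathbb{R}^{2d}$ via the $f_l$'s and the multiplication by $\psi_l$), and it is not entirely transparent that these corrections still respect the $(p+1-j)$-derivative bound. Handling this requires either an explicit appeal to Hörmander's theorem~18.1.17 on the chart-invariance of the composition modulo lower-order terms, or equivalently a direct recursion on the orders of the terms produced at each application of the stationary phase formula. Once this combinatorial point is settled, the proposition reduces to applying the Calderón-Vaillancourt bound~(\ref{remmoyal}) to the explicit formula for $R^{(N+1)}(t)$ produced by the induction.
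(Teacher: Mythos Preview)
Your proposal is correct and follows essentially the same approach as the paper: both start from the Duhamel/Heisenberg identity~(\ref{step1}), iterate it by extracting the principal symbol of the remainder via the composition rules of Section~\ref{compositionpdo}, and control the final remainder by the Calder\'on--Vaillancourt estimate~(\ref{remmoyal}). Your identification of the derivative-counting on the manifold as the main bookkeeping point, and its resolution via H\"ormander's theorem~18.1.17, also matches the paper's treatment.
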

\begin{rema} Theorem $1.2$ in~\cite{BR} gives an exact expression of each term of this exact expansion in the case of $\mathbb{R}^{2d}$. We also mention that if $a$ is in the class $S^{-\infty,0}_{\nu}(T^*M)$, then each term of the expansion $a_p$ is in the class $S^{-\infty,2p\nu}_{\nu}$.\end{rema}
Finally, we underline that, by an induction argument, one can derive the following corollary:
\begin{coro}\label{number-derivatives} Using the notations of proposition~\ref{exactegorov}, one has that every $a_p(t)$ depends linearly on the derivatives of order at most $2p$ of $a$.
\end{coro}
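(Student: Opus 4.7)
The plan is to proceed by strong induction on $p$, using the explicit recursive formula for $a_p(t)$ in proposition~\ref{exactegorov}. Let $d(p)$ denote the maximal order of derivatives of $a$ on which $a_p(t)$ depends linearly, and let us show $d(p)\leq 2p$ for all $p\geq 0$.

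For the base case $p=0$, we have $a_0(t)=a\circ g^t$, which depends linearly on $a$ itself with no derivatives, so $d(0)=0\leq 0$. For the inductive step, assume $d(j)\leq 2j$ for all $0\leq j\leq p-1$, and write
$$a_p(t,\rho)=\sum_{j=0}^{p-1}\int_0^t\left\{H,a_j(t-s)\right\}_M^{(p,j)}\!\bigl(g^s(\rho)\bigr)\,ds.$$
By the description in proposition~\ref{exactegorov}, each $\{H,a_j(t-s)\}_M^{(p,j)}$ is a linear combination (with coefficients depending on $H$, the coordinate charts, and $g^{\bullet}$, but not on $a$) of derivatives of $a_j(t-s)$ of order at most $p+1-j$. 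By the inductive hypothesis, $a_j(t-s)$ itself is a linear combination of derivatives of $a$ (evaluated at various points) of order at most $2j$; differentiating such an expression at most $p+1-j$ times produces a linear combination of derivatives of $a$ of order at most $2j+(p+1-j)=p+j+1$.

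Composing with the diffeomorphism $g^s$ and integrating over $s\in[0,t]$ preserves linearity in $a$ and does not raise the order of derivatives of $a$ that appear (it merely changes the points of evaluation and multiplies by smooth factors involving derivatives of $g^s$). Hence
$$d(p)\;\leq\;\max_{0\leq j\leq p-1}\bigl(p+j+1\bigr)\;=\;p+(p-1)+1\;=\;2p,$$
which completes the induction. No real obstacle arises here: the result is purely a bookkeeping statement about how many derivatives of the initial symbol $a$ are consumed at each step of the Egorov recursion, and the recursion $d(p)\leq\max_{j<p}(d(j)+p+1-j)$ is tight precisely at $j=p-1$, giving the sharp bound $2p$. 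The only mild subtlety is to notice that differentiating an expression which already involves $2j$ derivatives of $a$ yields an expression involving at most $2j+k$ derivatives, where $k$ is the order of the new derivative — a fact which follows from the chain rule applied repeatedly to compositions with $g^s$ and $g^{t-s}$.
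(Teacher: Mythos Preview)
Your proof is correct and follows exactly the approach the paper has in mind: the paper states the corollary follows ``by an induction argument'' from the recursion in proposition~\ref{exactegorov}, and you have simply spelled out that induction, using the key fact that $\{H,a_j\}_M^{(p,j)}$ involves at most $p+1-j$ derivatives of $a_j$ together with the hypothesis $d(j)\leq 2j$ to obtain $d(p)\leq\max_{j<p}(p+j+1)=2p$.
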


\end{document}